\tikzset{%
	tipA/.tip={Triangle[angle=45:9pt]},
	tipB/.tip={Triangle[angle=45:9pt]}
}
\newcolumntype{C}[1]{>{\centering\let\newline\\\arraybackslash}m{#1}}
\newcommand{\OO}{{\mathcal O}}
\newcommand{\nph}{\textsf{NP}-hard\xspace}
\newcommand{\W}{\textsf{W}\xspace}
\newcommand{\XP}{\textsf{XP}\xspace}
\newcommand{\fptas}{\textsf{FPTAS}\xspace}
\newcommand{\fpt}{{\sf FPT}\xspace}
\newcommand{\X}[1]{\ensuremath{\mathscr{#1}}\xspace}
\crefname{theorem}{Thm.}{Thm.}
\Crefname{theorem}{Theorem}{Theorems}
\crefname{lemma}{Lem.}{Lem.}
\Crefname{lemma}{Lemma}{Lemmas}
\crefname{corollary}{Cor.}{Cor.}
\Crefname{corollary}{Corollary}{Corollaries}
\DeclareMathOperator*{\argmax}{arg\,max}
\newcommand\dk{\textsc{Diverse Knapsack}\xspace}
\newcommand{\util}{\ensuremath{\mathtt{util}\xspace}}
\newcommand{\Wh}[1]{\textsf{W}[#1]-hard\xspace}
\newcommand{\budget}{\ensuremath{\mathtt{b}}\xspace}
\newcommand{\T}[1]{\ensuremath{\mathtt{#1}}\xspace}
\newcommand{\median}[2]{\ensuremath{ \mathtt{sat}_{#1}^{#2}}}
\newcommand{\best}[2]{\ensuremath{ \mathtt{sat}_{#1}^{[#2]}}}
\newcommand\mpb{\textsc{Median-UK}\xspace}
\newcommand\muk{\textsc{Median-UK}\xspace}
\newcommand\bpb{\textsc{Best-UK}\xspace}
\newcommand\buk{\textsc{Best-UK}\xspace}
\newcommand\mpbfull{\textsc{Median-Utilitarian Knapsack}\xspace}
\newcommand\ccfull{\textsc{Chamberlin Courant}\xspace}
\newcommand\cc{\textsc{CC}\xspace}
\newcommand\setcover{\textsc{Set Cover}\xspace}
\newcommand\kpcover{\textsc{Knapsack Cover}\xspace}
\newcommand\kp{\textsc{Knapsack}\xspace}
\newcommand\wmscfull{\textsc{Weighted Set Multicover}\xspace}
\newcommand\wmsc{\textsc{WSMc}\xspace}
\newtheorem{rr}{Reduction Rule}
\newcommand{\defproblem}[3]{
	\vspace{1mm}
	\noindent\fbox{
		\begin{minipage}{0.95508\textwidth}
			\begin{tabular*}{\textwidth}{@{\extracolsep{\fill}}lr} #1 \\ \end{tabular*}
			{\bf{Input:}} #2  \\
			{\bf{Question:}} #3
		\end{minipage}
	}
	\vspace{1mm}
}
\newcommand{\ma}[1]{\todo[color={SpringGreen}]{\small{#1}}}
\newcommand{\hide}[1]{}
\begin{document}
	
	\title{More Effort Towards Multiagent Knapsack}
	\author{Sushmita Gupta\inst{1} \and
		Pallavi Jain\inst{2} \and
		Sanjay Seetharaman\inst{1}}
	\institute{The Institute of Mathematical Sciences, HBNI, Chennai, India \\ \email{\{sushmitagupta,sanjays\}@imsc.res.in} \and
		IIT Jodhpur, Jodhpur, India \\ \email{pallavi@iitj.ac.in}}

	\maketitle
	
	\begin{abstract}
		In this paper, we study some multiagent variants of the knapsack problem. Fluschnik et al. [AAAI 2019] considered the model in which every agent assigns some utility to every item. They studied three preference aggregation rules for finding a subset (knapsack) of items: individually best, diverse, and Nash-welfare-based. Informally, diversity is achieved by satisfying as many voters as possible. Motivated by the application of aggregation operators in multiwinner elections, we extend the study from diverse aggregation rule to Median and Best scoring functions. We study the computational and parameterized complexity of the problem with respect to some natural parameters, namely, the number of voters, the number of items, and the distance from an easy instance. We also study the complexity of the problem under domain restrictions. Furthermore, we present significantly faster parameterized algorithms with respect to the number of voters for the diverse aggregation rule.
	\end{abstract}

	\section{Introduction}\label{intro}
	
	Knapsack is a paradigmatic problem in the area of optimization research, and its versatility in modeling situations with dual objectives/criterion is well established~\cite{KnapsackB}. Unsurprisingly, it has been generalized and extended to incorporate additional constraints; and encoding {\it preferences} is a move in that direction. This type of modeling allows us to address the need for mechanisms to facilitate complex decision-making processes involving multiple agents with competing objectives while dealing with limited resources. We will use {\it multiagent knapsack} to refer to this setting. One natural application of multiagent knapsack would be in the area of participatory budgeting (PB, in short), a democratic process in which city residents decide on how to spend the municipal budget.

	In the context of multiagent knapsack, approval ballot/voting is among the most natural models. It fits quite naturally with the motivation behind PB, in which for a given set of projects, every voter approves a set of projects that s/he would like to be executed. However, {\it utilitarian voting} (variously called score or range voting) is a more enriched ballot model, where every voter expresses his/her preferences via a utility function that assigns a numerical value to each alternative on the ballot (approval ballot is a special case as $0$ utility denotes disapproval). This model applies to PB very well in situations where residents do not really have the motivation to reject any project.  For example, consider the proposal to build different sporting facilities in a city. One might like a sport over another but would not really have any objections against building facilities for {\it any} sport. So, instead of disapproving a project, residents could be asked to give some numerical value (called utility) to every project, depending on how much they value that project. In a realistic scenario, citizens would be asked to ``rate'' the proposed projects by a number between 1 to 10, with 10 being the highest. A larger (smaller) range can be conveniently chosen if the number of proposed projects is reasonably high (low). %

	Recently, this model has been studied by Fluschnik et al.~\cite{fluschnik2019fair}, analysing the problem from a computational viewpoint; and  Aziz and Lee~\cite{DBLP:conf/aaai/0001L21} who considered the axiomatic properties. Formally, we define the problem as follows:

	\defproblem{${\cal R}$-{\sc Utilitarian Knapsack} (${\cal R}$-{\sc UK})}{A set of items $\X{P}=\{p_1,\ldots,p_m\}$, a cost function $\mathtt{c}: \X{P} \rightarrow \mathbb{N}$, a set of voters $\X{V}$, a utility function $\mathtt{util}_v: \X{P} \rightarrow \mathbb{N}$ for every voter $v\in \X{V}$, a budget $\budget \in \mathbb{N}$, and a target $\mathtt{u_{tgt}} \in \mathbb{N}$.}{Does there exist a set $\X{Z}\subseteq \X{P}$ such that the cost of $\X{Z}$ is at most $\mathtt{b}$ and the total {\it satisfaction} (defined below) of voters is at least $\T{u_{tgt}}$ under the voting rule ${\cal R}$?}

	Notably, {\sc Committee Selection} problem is a special case of PB (or UK) where the \textit{candidates} (items) are of unit cost, and we are looking for a \textit{committee} (bundle) of size exactly $\mathtt{b}$. An important and well-studied class of voting rules for {\sc Committee Selection} is the class of {\em Committee Scoring Rules} (CSR)~\cite{endriss2017trends,elkind2017properties,faliszewski2019committee}. In these voting rules, we define the {\em satisfaction} of a voter as a function that only depends on the utility of committee members, assigned by the voter. Towards this, for a committee $S$, we define a {\em utility vector} of the voter $v$ as a vector of the utilities of the candidates in $S$, assigned by $v$. For example, let $\X{P}=\{p_1,p_2,p_3\}$. Consider a voter $v$. Let $\mathtt{util}_v(p_1)=3$, $\mathtt{util}_v(p_2)=2$, and $\mathtt{util}_v(p_3)=5$. Then, for $S=\{p_1,p_2\}$, the utility vector of $v$ is $(3,2)$. The scoring function, $f$, takes the utility vector and returns a natural number as the satisfaction of a voter. Given a committee scoring function, various voting rules can be defined. The most commonly studied one aims to find a committee that maximizes the summation of the satisfaction of all the voters.

	\noindent
	\textbf{Background.} Fluschnik et al.~\cite{fluschnik2019fair} propose three CSRs for ${\cal R}$-{\sc PB}, and one of the rules uses a scoring function that takes the utility vector of a voter and returns the maximum utility in that vector as the satisfaction of the voter. This rule is a generalization of Chamberlin–Courant multiwinner (CC, in short, also known as the 1-median) voting rule, where the utilities are given by the Borda scores. This is called the \dk problem. In this paper, we consider the {\em median scoring function} (and the {\em best scoring function})~\cite{skowron2016finding}, in which given a value $\lambda \in \mathbb{N}$, the scoring function takes a utility vector and returns the $\lambda^{\text{th}}$ maximum value (the sum of the top $\lambda$ values) as the satisfaction of the voter. If the size of the bundle is less than $\lambda$, then the utility of every voter under the median scoring function is $0$. For a bundle $\X{Z}\subseteq \X{P}$ and $\lambda \in \mathbb{N}$, let $\mathtt{sat}_v^{\lambda}(\X{Z})$ ($  \mathtt{sat}_v^{[\lambda]}(\X{Z})$) denote the satisfaction of the voter $v$ from the set \X{Z} under median (best) scoring function. These rules are generalizations of the  $1$-median and $k$-best rules that are widely found in the literature. To the best of our knowledge, none of these are actually applied to real-life situations to determine the ``winner''. As with almost any mathematical model dealing with social choice, these are also proposals with provable guarantees and may be applied in practice. Let us consider a hypothetical PB situation where the proposed projects are building schools in different parts of the city, and the citizens are invited to vote for possible locations of choice. It is highly unlikely that after the facilities are built, all the students are admitted to their top choice but are guaranteed to be admitted to one of their top three choices. Then, aggregating preferences based on maximizing the utility of the $\lambda^{th}$ most-preferred school for $\lambda = 3$ takes that uncertainty into account;\hide{(For example, in places where all students are guaranteed a place within their top three choices, $\lambda$ could be set to 3.) At the very least, $\lambda$-median rule} and gives a lower bound on the utility derived by all who get admitted to one of their top $\lambda$ choices.
	
	Formally stated, we consider the problems \mpbfull (\mpb, in short) and {\sc Best-Utilitarian Knapsack} (\bpb, in short) where the inputs are the same, and the goal is to decide if there exists a bundle $\X{Z}\subseteq \X{P}$ such that $\sum_{p\in \X{Z}}\T{c}(p) \leq \budget$ and $\sum_{v\in \X{V}}\median{v}{\lambda}(\X{Z})\geq \mathtt{u_{tgt}}$ for the former; and $\sum_{v\in \X{V}}\best{v}{\lambda}(\X{Z})\geq \mathtt{u_{tgt}}$ for the latter.  In the optimization version of this problem, we maximize the satisfaction of the voters. We use both variants (decision and optimization) in our algorithmic presentation, and the distinction will be clear from the context. Without loss of generality, we may assume that the cost of each item is at most the budget \budget. To be consistent with the literature, when $\lambda=1$, we refer to both \muk and \buk as \dk.\footnote{Missing details/proofs are in the Appendix.}

	\noindent\textbf{A generalized model.} It is worthwhile to point out that our work and that of Fluschnik et al.~\cite{fluschnik2019fair} are initial attempts at studying the vast array of problems in the context of multiagent knapsack with various preference elicitation schemes and various voting rules - this includes ${\cal R}$-{\sc UK} as a subclass. For over a decade, researchers have studied scenarios where the votes/preferences are consistent with utility functions \cite{ProcacciaRosenschein06,CaragiannisProcaccia11,Boutilier15,AnshelevichBhardwajPostl15,AnshelevichPostl17j,AnshelevichSekar16,ABEPS18,BPNS17}. \hide{Our choice is utility functions with ordered weighted average (OWA) operators.} \hide{One can define a myriad of problems by varying the preference elicitation method and the voting rule; and this would include ${\cal R}$-{\sc UK} as a subclass.} \hide{Each would be interesting and worthwhile to study in its own way. This type of generalization is not a new concept in the area of social welfare maximization under a budget constraint using various voting rules with or without utility functions. For over a decade, researchers have studied scenarios where the votes/preferences are consistent with utility functions or that they can be interpreted as utility functions. \ma{shorten. remove names of authors} Procaccia and Rosenschein \cite{ProcacciaRosenschein06}, Caragiannis and Procaccia~\cite{CaragiannisProcaccia11}, Boutilier et. al~\cite{Boutilier15}, Anshelevich and Bhardwaj et. al~\cite{AnshelevichBhardwajPostl15}, Anshelevich and Sekar~\cite{AnshelevichSekar16}, Anshelevich and Postl~\cite{AnshelevichPostl17j}, and Anshelevich and Bhardwaj et. al~\cite{ABEPS18}, Benad\'e et. al~\cite{BPNS17} to name a few.}

	\noindent \textbf{Our contributions.} In this paper, we study the computational and parameterized complexity of \mpb with respect to various natural input parameters and have put more effort towards identifying tractable special cases amid a sea of intractability. Moreover, we extend those results to \bpb.

	For a start, we show that both the problems are \nph for every $\lambda$ (\Cref{thm:im-sc-np-hardness}). Since the {\sc Committee Selection} problem with median scoring function is \Wh{1} with respect to $n$, the number of voters, for all $\lambda>1$~\cite{bredereck2020parameterized}, it follows that its generalization \muk must be as well (\Cref{cor:muk-w-hardness}). Hence, even though an \fpt\footnote{\fpt with respect to parameter $k$ means that there exists an algorithm that solves the problem in $f(k).{\rm poly}(n,m)$ time} algorithm with respect to $n$ is unlikely for \muk, we are able to present an \XP algorithm with respect to $n$ (\Cref{thm:muk-xp}), that runs in time ${\cal O}((m(\lambda+1))^n \text{poly}(n,m))$. Additionally, it is known that \muk with $\lambda=1$ (i.e. \dk), binary utilities, and unary costs is \Wh{2} with respect to budget \budget \cite{fluschnik2019fair}. In the case of parameterization by both $n$ and $\mathtt{b}$, we obtain an \fpt algorithm for both the problems. While there is a trivial $\OO^{\star}(2^{m})$ algorithm, where $m$ denotes the number of items, for both the problems, unless ETH fails there cannot be a $\OO(2^{o(m+n)}{\rm poly}(n,m))$ algorithm \cite{fluschnik2019fair}. In light of these dead ends, we turn our attention to identifying tractable special cases, and our search forked into three primary directions: the value of $\lambda$, preference profiles, and encoding of utilities and costs. When $\lambda=1$, the problem is \nph \cite{fluschnik2019fair} and remains so even when profile is single-crossing or single-peaked. We show that despite this intractability, when the profile is {\it unanimous}, that is, all the voters have the exact same top preference, \dk is polynomial-time solvable (\Cref{lem:dk-unanimous}); but both \muk and \buk remain \nph for unanimous as well as single-crossing profiles (\Cref{thm:im-sc-np-hardness}) when $\lambda>1$. In the situation where there is a priority ordering over all the items, captured by a {\it strongly unanimous} profile, \muk is polynomial-time solvable. Furthermore, we observe that given an instance, the ``closer'' it is to a strongly unanimous profile, the faster the solution can be computed. Consequently, we consider $d$, the distance away from strong unanimity, as a parameter and show that \dk is \fpt with respect to $d$. In the special case where $\lambda=1$ and the utilities or costs are polynomially bounded or are encoded in unary, we have three different \fpt algorithms, all of which are improvements on the one given by~ \cite{fluschnik2019fair}. Additionally, when $\lambda=1$ and the profile is single-peaked (or crossing) the problem admits an \fptas (\Cref{thm:dk-fptas}), an improvement over the $(1-1/e)$-factor approximation algorithm by \cite{fluschnik2019fair}. Table~\ref{table:results} summarizes the main results with precise running times of our algorithms.

	\begin{table}
		\centering
		\begin{tabular}{| C{4cm} | C{5cm} | C{2.5cm} | }
			\toprule
			\textbf{Restriction}                              & \textbf{Result}                                             & \textbf{Ref.}                                                                         \\
			\midrule
			\small{SC}                                        & \multirow{2}{4em}{\small{\nph}}                             & \multirow{2}{4em}{\small{\cref{thm:im-sc-np-hardness}}} \\
			\small{$\lambda>1$ \& U}                          &                                                             &                                                                                       \\
			\midrule
			& \small{$\mathcal{O}(n!\ \text{poly}(\hat{u},n,m))$}         & \cite{fluschnik2019fair}                                                              \\
			& \small{$\mathcal{O}(4^n \text{poly}(\bar{u},n,m))$}         & \small{\cref{thm:dk-4n}}                                                              \\
			\small{$\lambda=1$}                               & \small{$\mathcal{O}(4^n \text{poly}(\mathtt{b},n,m))$}      & \cref{cor:dk-4n}                                                                      \\
			& \small{${\cal O}(2^n \text{poly}(\bar{u},\mathtt{b},n,m))$} & \small{\cref{thm:dk-2n}}                                                              \\
			\midrule
			\multirow{2}{10em}{\small{$\lambda > 1$ \& \muk}} & \small{${\cal O}((m(\lambda+1))^n \text{poly}(n,m))$}       & \small{\cref{thm:muk-xp}}                                                             \\
			& \small{\W[1]-hard w.r.t. $n$}                               & \small{\cite{bredereck2020parameterized}}                                             \\
			\midrule
			No restriction                                    & \small{$\mathcal{O}^*(b^2 (b!)^n 2^{b})$}                   & \small{\cref{thm:muk-fpt-nb}}                                  \\
			\midrule
			\small{SU \& \muk}                                & \small{${\rm poly}(n,m)$}                                   & \small{\cref{lem:muk-strongly-unan}}                                                  \\
			\small{SU \& \buk}                                & \small{${\rm poly}(\bar{u},n,m)$, ${\rm poly}(\T{b},n,m)$}                           & \small{\cref{cor:buk-strongly-unan}}                                                  \\
			\small{\muk}                                      & \small{\XP w.r.t. $d$, \fpt w.r.t. $\mathtt{b},d$}          & \small{\cref{cor:muk-par-d}}                                \\
			\small{\buk}                                      & \small{\fpt w.r.t. $\mathtt{b},d$}                          & \small{\cref{cor:muk-buk-par-d}}                                                      \\
			\midrule
			\small{$\lambda=1$}                               & \small{\fpt w.r.t. $d$ for unary utilities}                 & \small{\cref{cor:dk-par-d}}                                                           \\
			\small{$\lambda=1$ \& SP/SC}                      & \small{\fptas}                                              & \small{\cref{thm:dk-fptas}}                                                           \\
			\small{$\lambda=1$ \& U}                          & \small{${\cal O}(m)$}                                       & \small{\cref{lem:dk-unanimous}}                                                       \\
			\bottomrule
		\end{tabular}
		\caption{Results on \muk and \buk. \hide{Abbreviations: SC-single-crossing, SP-single-peaked, U-unanimous, SU-strongly unanimous.}
			Here $n$, $m$, \budget, and $d$ denote the number of voters, the number of items, the budget, and the distance away from SU, respectively. The abbreviations SC, SP, SU, and U are preference restrictions which are defined in \Cref{intro}.}
		\label{table:results}
	\end{table}

		\noindent\textbf{Preliminaries.} Let $\X{V} = \{ v_1, \dots, v_n \}$ be a set of $n$ voters and $\X{P} = \{ p_1, \dots, p_m \}$ be a set of $m$ items. The preference ordering of a voter $v \in \X{V}$ over the set of items is given by the utility function $\mathtt{util}_v : \X{P} \rightarrow \mathbb{N}$. That is, if $\mathtt{util}_v(p) > \mathtt{util}_v(p')$, then $v$ prefers $p$ more than $p'$, and we use the notation $p \succ_v p'$. We drop the subscript when it is clear from the context. For a subset $Y \subseteq \X{V}$, we use $\mathtt{util}_{Y}(p)$ to denote the utility of the item $p$ to voters in $Y$: $\mathtt{util}_{Y}(p) = \sum_{v \in Y} \mathtt{util}_{v}(p)$. The set of utility functions form the \textit{utility profile}, denoted by $\{\mathtt{util}_{v}\}_{v \in \X{V}}$. For integers $i,j$, we use $[i,j]$ to denote the set $\{i, i+1, \dots, j\}$. For an integer $i$, we use $[i]$ to denote $[1,i]$. \hide{We define $\bar{u} = \sum_{v \in \X{V}} \max_{p \in \X{P}} \mathtt{util}_{v}(p)$ and $\hat{u} = \sum_{p \in \X{P}} \sum_{v \in \X{V}} \mathtt{util}_v(p)$.} In an instance of \muk, we say that an item $p\in \X{P}$ is a \textit{representative} of a voter $v \in \X{V}$ in a bundle $\X{Z}\subseteq \X{P}$ if $p$ is the $\lambda^{th}$ most preferred item of $v$ in $\X{Z}$. %

	\noindent\textbf{Preference profiles.} A preference profile (P, in short) is said to be
	\vspace{-1em}
	\begin{itemize}[wide=0pt,itemsep=0pt]
		\item {\it unanimous} (U) if all voters have the same top preference;
		
		\item {\it strongly unanimous} (SU) if all voters have the identical preference orderings;
		
		\item {\it single-crossing} (SC) if there exists an ordering $\sigma$ on voters \X{V} such that for each pair of items $\{p, p'\} \subseteq \X{P}$, the set of voters $\{v \in \X{V}: \util_{v}(p) \geq \util_{v}(p')\}$ forms a consecutive block according to $\sigma$;
				
		\item {\it single-peaked} (SP) if the following holds for some ordering, denoted by $\triangleleft$, on the items \X{P}: Let $\T{top_{v}}$ denote voter $v$’s most preferred item. Then, for each pair of items $\{p, p'\} \subseteq \X{P}$ and each voter $v \in \X{V}$, such that $p \triangleleft p' \triangleleft \T{top_{v}}$ or $\T{top_{v}} \triangleleft p' \triangleleft p$ we have that $v$ weakly prefers $p'$ over $p$, i.e $\util_{v}(p') \geq \util_{v}(p)$.
	\end{itemize}

		\section{Hardness of \muk} \label{pc}
	
	We dedicate this section to identifying intractable cases. Clearly, \muk is \nph as its unweighted version, {\sc Committee Selection}, under the same rule, is \nph \cite{skowron2016finding}. We begin with the following strong intractability result.
	
	\begin{theorem}
		\label{thm:im-sc-np-hardness}\label{cor:buk-np-hardness}\label{nph-sc}
		\muk and \buk are \textup{\nph} for every $\lambda\!\in\!\mathbb{N}$ even for SCPs. Furthermore, for $\lambda\!>\!1$, both are \textup{\nph} even for UPs.
	\end{theorem}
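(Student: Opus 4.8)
The plan is to reduce from the classical (binary-encoded, weakly \nph) $0/1$ \kp decision problem: given items $1,\dots,k$ with weights $w_1,\dots,w_k$ and values $v_1,\dots,v_k$ (w.l.o.g.\ every $v_i\ge 1$), a weight bound $W$, and a target $V\ge 1$, decide whether some $S\subseteq[k]$ satisfies $\sum_{i\in S}w_i\le W$ and $\sum_{i\in S}v_i\ge V$. I would first settle $\lambda=1$ (where \muk, \buk and \dk coincide) and then lift to $\lambda>1$ by a uniform padding argument.

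\emph{Case $\lambda=1$.} For each \kp item $i$ introduce an item $a_i$ with $\T{c}(a_i):=w_i$ and a voter $u_i$; thus $m=n=k$. Fix $N:=2k^2+1$ and set $\util_{u_\ell}(a_\ell):=Nv_\ell+k$ and $\util_{u_\ell}(a_i):=k-|\ell-i|$ for $i\ne\ell$ (these ``distance'' utilities lie in $\{1,\dots,k-1\}$). Put $\budget:=W$ and $\T{u_{tgt}}:=NV$. The distance utilities realise the $1$-Euclidean profile on the line $1,2,\dots,k$, which is single-crossing w.r.t.\ the voter order $u_1,\dots,u_k$; and raising $\util_{u_\ell}(a_\ell)$ from $k$ to $Nv_\ell+k$ alters no pairwise comparison of $u_\ell$ (as $a_\ell$ is already $u_\ell$'s unique top item), so the profile remains an SCP. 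Under the $1$-median ($=\max$) rule one checks that for every bundle $\X{Z}\subseteq\{a_1,\dots,a_k\}$, $\sum_{\ell}\median{u_\ell}{1}(\X{Z})=N\sum_{\ell:a_\ell\in \X{Z}}v_\ell+\Delta(\X{Z})$, where $\Delta(\X{Z}):=k|\X{Z}|+\sum_{\ell:a_\ell\notin \X{Z}}\max_{a_i\in \X{Z}}(k-|\ell-i|)$ satisfies $0\le\Delta(\X{Z})\le k^2<N$. Hence $\sum_{\ell}\median{u_\ell}{1}(\X{Z})\ge NV$ iff $\sum_{\ell:a_\ell\in \X{Z}}v_\ell\ge V$; since also $\T{c}(\X{Z})=\sum_{\ell:a_\ell\in \X{Z}}w_\ell$, the \dk instance is a yes-instance iff the \kp instance is. This proves that \dk is \nph for SCPs.

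\emph{Case $\lambda>1$.} Take any \dk instance -- the SCP one above for the SCP part, or an arbitrary one (\nph by~\cite{fluschnik2019fair}) for the UP part -- with item set $\X{P}$, and append $\lambda-1$ fresh cost-$0$ items $q_1,\dots,q_{\lambda-1}$, giving each voter $v$ utilities $\util_v(q_1)>\dots>\util_v(q_{\lambda-1})>\max_{p\in\X{P}}\util_v(p)$ (e.g.\ $\util_v(q_t):=1+(\lambda-t)+\max_{p\in\X{P}}\util_v(p)$). Every voter ranks $q_1,\dots,q_{\lambda-1}$ identically and above all original items, so the new profile is single-crossing (same voter order) and unanimous (common top $q_1$). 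Inserting such a dominating value at the top of a voter's utility vector never decreases its $\lambda$-th largest entry nor the sum of its top $\lambda$ entries, and the $q_t$ are free, so some optimal bundle $\X{Z}$ contains all of them; writing $\X{Z}':=\X{Z}\setminus\{q_1,\dots,q_{\lambda-1}\}$, if $\X{Z}'\ne\emptyset$ then $\median{v}{\lambda}(\X{Z})=\median{v}{1}(\X{Z}')$ and $\best{v}{\lambda}(\X{Z})=\sum_{t=1}^{\lambda-1}\util_v(q_t)+\median{v}{1}(\X{Z}')$ for every $v$, while if $\X{Z}'=\emptyset$ then $|\X{Z}|<\lambda$ and all satisfactions are $0$ (never meeting a positive target). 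Keeping the budget and using target $\T{u_{tgt}}$ for \muk (resp.\ $\T{u_{tgt}}+\sum_v\sum_{t=1}^{\lambda-1}\util_v(q_t)$ for \buk) makes the padded instance equivalent to the \dk instance, giving \nph of \muk and \buk for every $\lambda>1$, on SCPs and on UPs.

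\emph{Main obstacle.} The delicate step is the $\lambda=1$ SCP construction: single-crossingness forbids the usual ``private voter'' trick (a voter caring about a single item) that one normally uses to encode an additive objective. The fix is the $1$-Euclidean layout -- all voters and items on one common line, utility degrading with distance -- together with the observation that boosting each voter's unique top item by an arbitrary amount is harmless for single-crossingness; the scale $N$ is then chosen so that the leftover distance-based mass ($\le k^2$) cannot compensate a one-unit shortfall in the knapsack value. Verifying single-crossingness carefully (for $i<j$, the voters who weakly prefer $a_i$ over $a_j$ are exactly the prefix $u_1,\dots,u_{\lfloor(i+j)/2\rfloor}$, with the tie at $(i+j)/2$ for even $i+j$ absorbed by the weak inequality in the definition) and bounding $\Delta(\X{Z})$ are routine bookkeeping.
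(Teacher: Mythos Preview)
Your argument is correct, and the $\lambda>1$ part is essentially the paper's own approach: both pad with $\lambda-1$ ``dominating'' dummy items so that the $\lambda$-th best (resp.\ top-$\lambda$ sum) collapses to the $1$-median on the original items, and both observe that the padded profile is simultaneously single-crossing and unanimous. The cosmetic differences are that the paper uses cost-$1$ dummies with voter-independent utilities $u_{\max}+j-m$ and bumps the budget by $\lambda-1$, whereas you use cost-$0$ dummies with voter-dependent utilities and keep the budget. One small wrinkle: the paper takes $\T{c}\colon\X{P}\to\mathbb{N}$ with strictly positive costs (cf.\ the remark that a feasible bundle has size at most $\min(\T{b},m)$), so your cost-$0$ items are formally outside the model; replacing them by cost-$1$ items and adding $\lambda-1$ to the budget fixes this without touching the rest of your argument. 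Your explicit target shift for \buk is actually cleaner than the paper's one-line ``same reduction'' remark.

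Where you genuinely diverge is the $\lambda=1$ base case. The paper does not prove it at all --- it simply invokes the \nph of \dk on SCPs from~\cite{fluschnik2019fair} as a black box. You instead give a self-contained reduction from binary \kp via a $1$-Euclidean layout, boosting each voter's own peak by $Nv_\ell$ and choosing $N>k^2$ so that the residual ``distance'' mass cannot cross an integer threshold in the knapsack value. This is a legitimately different and more elementary route: it avoids any dependence on the earlier paper's construction and makes the SCP structure completely explicit. The trade-off is that your reduction yields only weak \nph (from binary \kp), whereas citing~\cite{fluschnik2019fair} inherits whatever strength their reduction has; for the theorem as stated this makes no difference.
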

	
	\begin{proof}
		We give a polynomial-time reduction from the \dk problem, which is known to be \nph for SCPs \cite{fluschnik2019fair}. Let ${\cal I}=(\X{P},\X{V},\mathtt{c},\{\mathtt{util}_v\}_{v\in \X{V}},\\\mathtt{b}, \mathtt{u_{tgt}})$ be an instance of \dk with SCP. Let $\sigma = (v_1, \ldots, v_n)$ be an SC ordering of the voters in $\X{V}$. Without loss of generality,  let the preference order of $v_1$ be $p_1 \succ p_2 \succ \ldots \succ p_m$. Let $u_{max}$ denote the maximum utility that a voter assigns to an item.

		To construct an instance of \muk, the plan is to add $\lambda-1$ new items, say $\X{P}_{new} = \{ p_{m+1}, \dots, p_{m+\lambda-1} \}$, to $\X{P}$ and ensure that they are in any feasible bundle by setting appropriate utilities and costs. Formally, we construct an instance ${\cal I}^\prime=(\X{P}^\prime,\X{V},\mathtt{c}^\prime,\{\mathtt{util}^\prime_v\}_{v\in \X{V}},\mathtt{b}^\prime, \mathtt{u_{tgt}}, \lambda)$ of \muk as follows.  Let $\X{P}^\prime = \X{P} \cup \X{P}_{new}$. For $p\in \X{P}$, let $\mathtt{c}'(p)=\mathtt{c}(p)$, and for $p \in \X{P}_{new}$, let $\mathtt{c}'(p) = 1$; and we set the budget $\mathtt{b}^\prime = \mathtt{b}+\lambda-1$. Note that the set of voters and the target utility in the instance ${\cal I}'$ are the same as in the instance ${\cal I}$. Next, we construct the utility function for a voter $v\in \X{V}$ as
		\[
		\mathtt{util}_{v}^\prime(p_j) =
		\begin{cases}
		u_{max} + j - m      & \quad\text{if } p_j \in \X{P}_{new}, \\
		\mathtt{util}_v(p_j) & \quad\text{otherwise.}              \\
		\end{cases}
		\]

		Clearly, the construction is doable in polynomial time. Without loss of generality, we assume that $\mathtt{u_{tgt}} > 0$, otherwise we return an empty bundle.

		The rest of the proof is in \Cref{proof-omit}. In the case of \buk, we can use the same reduction as above. From the construction, it follows that $\X{P}_{new}$ belongs to any feasible bundle. We obtain the hardness result proceeding similarly. 
	\end{proof}

	Since {\sc Committee Selection} problem is  {\sf W}[1]-hard with respect to $n$ for $\lambda>1$~\cite{bredereck2020parameterized}, we have the following result.
	
	\begin{corollary}
		\label{cor:muk-w-hardness}
		\muk is \textup{{\sf W}[1]-hard} with respect to $n$ for $\lambda>1$.
	\end{corollary}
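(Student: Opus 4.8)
The plan is to obtain \muk as a generalization of \textsc{Committee Selection} under the median scoring function, which is known to be \Wh{1} with respect to $n$ for $\lambda>1$~\cite{bredereck2020parameterized}. The key observation is that \textsc{Committee Selection} is exactly the restriction of \muk in which every item has unit cost and the budget $\mathtt{b}$ equals the desired committee size $k$; a bundle of cost at most $k$ with unit costs is simply a set of at most $k$ items, and (as with the standard PB-to-committee correspondence) one may assume an optimal bundle uses the full budget, so it has size exactly $k$.

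First I would take an arbitrary instance of \textsc{Committee Selection} under the median ($\lambda$-median) scoring rule: a candidate set, a voter set of size $n$, utility functions $\mathtt{util}_v$, a committee size $k$, and a satisfaction target $\mathtt{u_{tgt}}$. I would build the \muk instance by keeping $\X{P}$, $\X{V}$, and the utility profile unchanged, setting $\mathtt{c}(p)=1$ for every item $p$, setting $\budget=k$, keeping the same target $\mathtt{u_{tgt}}$ and the same $\lambda$. This transformation is trivially computable in polynomial time and the parameter $n$ is preserved exactly.

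Next I would argue equivalence. In the forward direction, a committee $S$ of size exactly $k$ with $\sum_{v}\median{v}{\lambda}(S)\ge \mathtt{u_{tgt}}$ is a bundle of cost $k=\budget$ with the same total satisfaction, so it is a yes-certificate for \muk. In the reverse direction, given a feasible bundle $\X{Z}$ with $|\X{Z}|=\sum_{p\in\X{Z}}\mathtt{c}(p)\le\budget=k$, adding arbitrary extra items until the size is exactly $k$ can only increase each voter's $\lambda$-th largest utility (utilities are nonnegative, and the median scoring function is monotone under adding items), so the resulting size-$k$ committee still meets the target. Hence the two instances are equivalent, and since the reduction is a parameterized reduction with respect to $n$, the \Wh{1}-hardness transfers. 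Strictly, one should note the reduction direction: \textsc{Committee Selection} reduces to \muk (not the other way), which is exactly what is needed to inherit hardness.

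I do not expect a genuine obstacle here — the only point requiring a line of care is the ``pad to size exactly $k$'' step, i.e.\ checking that the median scoring function is monotone with respect to supersets so that a cheap/small bundle can be extended to a full-size committee without losing satisfaction; this is immediate from the definition (the $\lambda$-th largest value of a longer nonnegative utility vector is at least that of a shorter one, and is $0$ when the bundle has fewer than $\lambda$ items). Everything else is a direct restriction argument.
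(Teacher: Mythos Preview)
Your proposal is correct and takes essentially the same approach as the paper: the paper simply notes that \textsc{Committee Selection} under the median rule is the unit-cost special case of \muk and invokes the \Wh{1}-hardness result of~\cite{bredereck2020parameterized}, without spelling out any further details. Your write-up is in fact more careful than the paper's one-line justification, since you explicitly handle the ``at most $k$ versus exactly $k$'' discrepancy via the monotonicity of $\median{v}{\lambda}$ under supersets.
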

	
	Hence, it is unlikely to be \fpt. Now, we show that there exists an \XP algorithm for \muk with respect to $n$. The intuition is as follows. For every voter $v\in \X{V}$, we guess the representative in the solution. We know that for every voter $v$, we must pick $\lambda-1$ other items whose utility is larger than the representative's. Thus, we reduce the problem to the \wmscfull (\wmsc) problem, which is defined as follows:

	\defproblem{\wmscfull (\wmsc)}{A universe $U$, a family, ${\cal F}$, of subsets of $U$, a cost function $\mathtt{c}\colon {\cal F}\rightarrow \mathbb{N}$, a budget $\mathtt{b} \in \mathbb{N}$, and a positive integer $k\in \mathbb{N}$.}{Find a subset ${\cal F}^\prime \subseteq {\cal F}$ such that every element of $U$ is in at least $k$ sets in ${\cal F}^\prime$, and $\sum_{F\in {\cal F}^\prime}\mathtt{c}(F)\leq \mathtt{b}$.}

	If a set $F$ contains an element $u\in U$, then we say that $F$ {\em covers} $u$. We first present an \fpt algorithm for \wmsc with respect to $k+n$, which is similar to the \fpt algorithm for {\sc Set Cover} with respect to $n$.
	
	\begin{lemma}\label{lem:fpt-wmsc}
		\wmsc can be solved in $\OO((k+1)^n|{\cal F}|)$ time.
	\end{lemma}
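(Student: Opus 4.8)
The plan is a dynamic programming over \emph{coverage-demand} vectors, in direct analogy with the subset DP for \setcover parameterised by the universe size. Write $n:=|U|$, $U=\{u_1,\dots,u_n\}$, and fix an arbitrary ordering $F_1,\dots,F_{|{\cal F}|}$ of the family. Since covering an element more than $k$ times is never useful, the only thing we need to remember about a partially chosen subfamily is, for each $u_j$, how many of the already-processed sets contain $u_j$, truncated at $k$. This yields a state space of vectors $g\in\{0,1,\dots,k\}^{n}$, of size $(k+1)^{n}$.

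For $0\le i\le |{\cal F}|$ and $g\in\{0,\dots,k\}^{n}$ I would define $A_i[g]$ to be the minimum cost of a subfamily ${\cal F}'\subseteq\{F_1,\dots,F_i\}$ such that every $u_j$ is contained in at least $g(j)$ sets of ${\cal F}'$ (and $\infty$ if no such subfamily exists). The base case is $A_0[\mathbf 0]=0$ and $A_0[g]=\infty$ for $g\neq\mathbf 0$. For the transition, given $F_i$ define $g\ominus F_i$ by $(g\ominus F_i)(j)=\max(g(j)-1,0)$ if $u_j\in F_i$ and $(g\ominus F_i)(j)=g(j)$ otherwise; then
\[
A_i[g]=\min\bigl(\,A_{i-1}[g],\ \mathtt c(F_i)+A_{i-1}[g\ominus F_i]\,\bigr).
\]
The two terms correspond to not taking $F_i$ and to taking $F_i$ (in which case $F_1,\dots,F_{i-1}$ only have to supply, for each element, one unit less of demand, clamped at $0$). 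The instance is a \textsc{yes}-instance iff $A_{|{\cal F}|}[(k,\dots,k)]\le\mathtt b$.

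For correctness I would show by induction on $i$ that $A_i[g]$ equals the claimed optimum. In the inductive step, a witnessing subfamily ${\cal F}'$ either omits $F_i$ — and is then a subfamily of $\{F_1,\dots,F_{i-1}\}$ covering each element $\ge g(j)$ times, hence of cost $\ge A_{i-1}[g]$ — or contains $F_i$, in which case ${\cal F}'\setminus\{F_i\}$ covers each $u_j$ at least $g(j)-[u_j\in F_i]\ge(g\ominus F_i)(j)$ times, so $\mathtt c({\cal F}')\ge\mathtt c(F_i)+A_{i-1}[g\ominus F_i]$; conversely both options extend to valid subfamilies, so the recurrence is exact. Using one table layer per set is exactly what guarantees each set is used at most once, which is essential because \wmsc asks for a \emph{subset}, not a multiset, of ${\cal F}$. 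Filling the table takes $|{\cal F}|+1$ layers, each with $(k+1)^{n}$ entries computed from a constant number of entries of the previous layer, giving the $\OO((k+1)^{n}|{\cal F}|)$ bound.

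The only subtle point — and the one I would be most careful to get right — is precisely this bookkeeping that prevents a set being counted twice: the layered formulation is needed rather than the naive recursion $D[g]=\min_{F\in{\cal F}}(\mathtt c(F)+D[g\ominus F])$, which would instead solve the cheaper \emph{multiset} version and can strictly undercount the true optimum (e.g.\ a single unit-cost set over a one-element universe with $k=2$). Everything else is routine induction and counting.
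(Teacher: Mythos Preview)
Your proposal is correct and essentially identical to the paper's own proof: both set up a DP table indexed by demand vectors in $\{0,\dots,k\}^n$ and by a prefix of the family, with the same base case, the same ``clamp-at-zero'' subtraction (the paper writes it as $\overrightarrow{X}-\chi(F_j)$ with componentwise $\max(0,\cdot)$), and the same two-way recurrence. Your added remark distinguishing the layered DP from the naive multiset recursion is a nice clarification that the paper leaves implicit.
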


	\begin{proof}
		We give a dynamic-programming algorithm. Let $(U,{\cal F}, \mathtt{c}, \mathtt{b},k)$ be an instance of \wmsc. Let $U=\{u_1,\ldots,u_n\}$ and ${\cal F}=\{F_1,\ldots,F_m\}$. For a set $S\subseteq U$, let $\chi(S)$ denote the characteristic vector of $S$, i.e., it is an $|U|$-length vector such that $\chi(S)_{i}=1$ if and only if $u_i\in S$. For any two $|U|$-length vectors $\overrightarrow{A},\overrightarrow{B}$, we define the difference $\overrightarrow{C}=\overrightarrow{A}-\overrightarrow{B}$ as $\overrightarrow{C}_i = \max(0, \overrightarrow{A}_i-\overrightarrow{B}_i)$ for each $i \in [|U|]$. Let ${\cal R}=\{0,1,\ldots,k\}^n$. We define the dynamic-programming table as follows. For every $\overrightarrow{X}\in {\cal R}$ and $j\in [0,m]$, let $T[\overrightarrow{X},j]$ be the minimum cost of a subset of $\{F_1,\ldots,F_j\}$ that covers $u_i$ at least $\overrightarrow{X}_i$ times for every $i \in [n]$. We compute the table entries as follows.
		
		\paragraph{Base Case:} For each $\overrightarrow{X}\in {\cal R}$, we set
		\begin{equation}\label{eq1:dp-wmsc}
		T[\overrightarrow{X},0]= \begin{cases} 0, \text{ if } \overrightarrow{X}=\{0\}^n, \\
		\infty, \text{ otherwise. }
		\end{cases}
		\end{equation}
		
		\paragraph{Recursive Step:}
		For each $\overrightarrow{X}\in {\cal R}$ and $j\in [m]$, we set
		\begin{equation}\label{eq2:dp-wmsc}
		T[\overrightarrow{X},j]= \min\{T[\overrightarrow{X},j-1],T[\overrightarrow{X}-\chi(F_j),j-1]+\mathtt{c}(F_j)\}.
		\end{equation}
		If there exists an $\overrightarrow{X}\in {\cal R}$ such that $\overrightarrow{X}_{i}\geq k$ for each $i\in [n]$, and the value in the entry
		$T[\overrightarrow{X}, m]$ is at most $\budget$, then we return ``YES", else we return ``NO". We give a proof of correctness in \Cref{proof-omit}.
	\end{proof}
	
	Now, we are ready to give an \XP algorithm for \muk.
	
	\begin{theorem}\label{thm:muk-xp}
		\muk can be solved in $\OO((m(\lambda+1))^n \text{poly}(n,m))$ time.
	\end{theorem}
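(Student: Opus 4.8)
The plan is to realize the strategy sketched before the statement: brute‑force over the possible representatives, and for each choice decide residual feasibility via \Cref{lem:fpt-wmsc}. First I would dispose of the boundary case: if $\mathtt{u_{tgt}}=0$ we output the empty bundle, so assume $\mathtt{u_{tgt}}>0$; then every solution $\X{Z}$ has $|\X{Z}|\ge\lambda$ (a bundle of fewer than $\lambda$ items gives total satisfaction $0$), and consequently for each voter $v$ the value $\median{v}{\lambda}(\X{Z})$ is exactly the $\lambda^{\text{th}}$ largest element of the multiset $\{\util_v(q):q\in\X{Z}\}$, which is attained by some item of $\X{Z}$.

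Next, for every tuple $(p^{1},\dots,p^{n})\in\X{P}^{n}$, read as ``$p^{i}$ is the representative of voter $v_i$'', set the thresholds $t_i:=\util_{v_i}(p^{i})$ and build a \wmsc instance with universe $U=\X{V}$, one set $F_q:=\{\,v\in\X{V}:\util_v(q)\ge t_v\,\}$ of cost $\mathtt{c}(q)$ for each item $q\in\X{P}$, budget \budget, and coverage demand $k=\lambda$; solve it with \Cref{lem:fpt-wmsc}. We return \emph{yes} iff, for some tuple, this \wmsc instance is a \emph{yes}-instance and, in addition, $\sum_{i\in[n]}t_i\ge\mathtt{u_{tgt}}$.

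For correctness I would argue equivalence both ways. If some tuple succeeds with sub‑family ${\cal F}'$, put $\X{Z}:=\{q:F_q\in{\cal F}'\}$: its cost is at most \budget, and since every $v$ is covered at least $\lambda$ times, $\X{Z}$ contains at least $\lambda$ items with utility (to $v$) at least $t_v$, hence $\median{v}{\lambda}(\X{Z})\ge t_v$ and $\sum_{v}\median{v}{\lambda}(\X{Z})\ge\sum_{v}t_v\ge\mathtt{u_{tgt}}$ — so $\X{Z}$ is a solution. Conversely, from a solution $\X{Z}^{\star}$ (necessarily $|\X{Z}^{\star}|\ge\lambda$) choose for each $v$ an item $p^{v}\in\X{Z}^{\star}$ attaining $\median{v}{\lambda}(\X{Z}^{\star})$; then for this tuple $t_v=\median{v}{\lambda}(\X{Z}^{\star})$, the sub‑family $\{F_q:q\in\X{Z}^{\star}\}$ covers each voter at least $\lambda$ times (this is what it means for $t_v$ to be the $\lambda^{\text{th}}$ largest) and costs at most \budget, while $\sum_{v}t_v=\sum_{v}\median{v}{\lambda}(\X{Z}^{\star})\ge\mathtt{u_{tgt}}$, so the tuple is accepted. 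For the running time: there are $m^{n}$ tuples, each \wmsc instance is assembled in ${\rm poly}(n,m)$ time and, having $|U|=n$, $|{\cal F}|=m$, $k=\lambda$, is solved in $\OO((\lambda+1)^{n}m)$ time by \Cref{lem:fpt-wmsc}, for a total of $\OO\bigl(m^{n}(\lambda+1)^{n}\,{\rm poly}(n,m)\bigr)=\OO\bigl((m(\lambda+1))^{n}\,{\rm poly}(n,m)\bigr)$.

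The step I expect to need the most care is the observation underlying the converse direction: it suffices to impose only the \emph{at‑least‑$\lambda$} coverage constraint of \wmsc and never an upper bound on how many items beat a voter's threshold. Guessing a threshold too low merely under‑counts a voter's true satisfaction, which is safe since we only want to certify that the target is met, and because we range over \emph{all} tuples we eventually hit the exact threshold vector of an optimal bundle, where the bound is tight. A minor accompanying point is that distinct items yield distinct (indexed) members of the \wmsc family even when the corresponding sets coincide, so the cost of ${\cal F}'$ equals the cost of the bundle it encodes; together with the $\mathtt{u_{tgt}}=0$ special case, these are the only places where one must tread carefully.
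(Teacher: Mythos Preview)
Your proposal is correct and follows essentially the same approach as the paper: guess a representative for each voter, reduce feasibility of that guess to \wmsc with universe $\X{V}$, sets $F_q$ determined by the guessed thresholds, and coverage demand $k=\lambda$, then invoke \Cref{lem:fpt-wmsc}. Your write-up is in fact more thorough than the paper's, which only sketches the forward direction; you supply both directions of the equivalence, handle the $\mathtt{u_{tgt}}=0$ boundary, and flag the two genuine subtleties (no upper coverage bound is needed, and items with identical $F_q$ must remain distinct indexed members of the family).
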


	\begin{proof}
		We begin by guessing the representative of every voter. Let $r_v$ denote the guessed representative of a voter $v$, and $R$ be the set of all guessed representatives. Next, we construct an instance of \wmsc as follows. For every voter $v\in \X{V}$, we add an element $e_v$ to $U$. Corresponding to every item $p\in \X{P}$, we have a set $F_p$ in ${\cal F}$, where $F_p=\{e_v \colon \mathtt{util}_v(p) \ge \mathtt{util}_v(r_v)\}$. Note that $F_p$ contains the elements corresponding to the set of voters who prefer $p$ at least as much as their representative. For every $F_p \in {\cal F}$, $\mathtt{c}(F_p)=\mathtt{c}(p)$. We set $\mathtt{\hat{b}}=\mathtt{b}$ and $k=\lambda$. Let ${\cal F'} \subseteq {\cal F}$ be the solution of the \wmsc instance $(U, {\cal F}, \mathtt{c}, \mathtt{\hat{b}}, k)$. We construct a set $P'\subseteq \X{P}$ that contains items corresponding to the sets in ${\cal F}'$, i.e., $P'=\{p\in \X{P}\colon F_p \in {\cal F}'\}$. Note that the cost of $P'$ is at most $\mathtt{b}$. Also, every element of $U$ is covered at least $\lambda$ times. Therefore, for every voter $v$, $P'$ contains at least $\lambda$ items whose utilities are at least that of $r_v$. Thus, the total utility is at least $\sum_{v\in \X{V}}\mathtt{util}_v(r_v)$. Note that the cost and utility of a bundle can be computed in time polynomial in $n$ and $m$. Since ``guessing" the representatives takes $m^n$ steps, the running time follows due to \Cref{lem:fpt-wmsc}.
	\end{proof}

	We cannot expect a similar result for \buk due to the following reduction. Let $(A = \{a_1, \dots, a_m\}, \T{b}, w, \T{c}', \T{v_{tgt}})$ be an instance of {\sc Knapsack} problem where each item $a_i$ costs $\mathtt{c}'(a_i)$ and has value $w(a_i)$. The goal is to determine if there exists a knapsack with value at least $\T{v_{tgt}}$ and cost at most $\mathtt{b}$. Construct an instance of \buk with exactly one voter $v_1$, set of items $A$, budget $\T{b}$, target utility $\T{v_{tgt}}$, and $\lambda = m$. An \XP algorithm for \buk will give us a polynomial time algorithm for {\sc Knapsack}, implying \textsf{P=NP}.

	Due to Bredereck et al. \cite{bredereck2020parameterized}, \buk is {\sf{FPT}}($n,\lambda$) when the item costs are equal to $1$. We extend their algorithm to solve both \muk and \buk.

	\begin{theorem}
		\label{thm:muk-fpt-nb}\label{cor:buk-fpt-nb}
		\muk and \buk can be solved in $\mathcal{O}^*(b^2 (b!)^n 2^{b})$ time.
	\end{theorem}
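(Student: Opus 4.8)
The plan is to extend the FPT$(n,\lambda)$ algorithm of Bredereck et al.\ — which assumes unit item costs — to arbitrary costs, replacing the parameter $\lambda$ by the budget $\budget$, and to make it handle the median rule in addition to the best rule. Throughout I would work with the optimization version (maximize total satisfaction); the decision question about $\T{u_{tgt}}$ then follows immediately.

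First I would use a size bound: since $\T{c}(p)\ge 1$ for every item, any bundle of cost at most $\budget$ contains at most $\budget$ items. So I iterate over the exact bundle size $s\in\{0,1,\dots,\budget\}$ (for \muk, bundles with $s<\lambda$ contribute $0$, so only $s\ge\lambda$ matters), contributing a factor of $\budget+1$. Fix $s$ and picture the bundle as occupying $s$ slots $1,\dots,s$. The crucial point is that once we know how each voter ranks the $s$ chosen items, the total satisfaction is \emph{separable over slots}. So for every voter $v$ I would guess a permutation $\pi_v$ of $[s]$, meaning that the item in slot $i$ is $v$'s $\pi_v(i)$-th most preferred item among the $s$ bundle items (ties among equal utilities broken by a fixed order on $\X{P}$); there are at most $(s!)^n\le(\budget!)^n$ joint guesses $\vec\pi=(\pi_v)_{v\in\X{V}}$. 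Under a fixed $\vec\pi$, placing item $p$ in slot $i$ adds the \emph{known} amount $w_i(p):=\sum_{v:\,\pi_v(i)=\lambda}\util_v(p)$ to the \muk objective $\sum_{v\in\X{V}}\median{v}{\lambda}$ (for \buk, $w_i(p):=\sum_{v:\,\pi_v(i)\le\lambda}\util_v(p)$, adding to $\sum_{v\in\X{V}}\best{v}{\lambda}$). Thus each guess reduces to: pick pairwise distinct items $z_1,\dots,z_s$, one per slot, maximizing $\sum_{i\le s}w_i(z_i)$ subject to $\sum_{i\le s}\T{c}(z_i)\le\budget$ and the \emph{consistency} constraints that $\pi_v(i)<\pi_v(i')$ forces $\util_v(z_i)>\util_v(z_{i'})$ for all $v$ and all slots $i\ne i'$. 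Over all $s$ and all consistent, feasible $\vec\pi$, I keep the best bundle produced; this answers the instance.

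The remaining task is this structured, budgeted assignment problem with consistency constraints, which I would solve by dynamic programming over subsets of the $\le\budget$ slots together with the amount of budget used — this is where the $2^{\budget}$ comes from (and an extra $\budget$ from the budget index, together with the $\budget+1$ values of $s$, yields the $\budget^2$). Concretely, for $S\subseteq[s]$ and $\beta\in\{0,\dots,\budget\}$ let $D[S,\beta]$ be the maximum of $\sum_{i\in S}w_i(z_i)$ over consistent assignments of distinct items to the slots of $S$ with total cost exactly $\beta$, and have the recurrence add one slot $i\notin S$ at a time; the optimal bundle for the guess is read off from $D[[s],\beta]$ over $\beta\le\budget$. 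The reason this can even hope to work is that once $\vec\pi$ is fixed, the rank that a newly placed item must occupy for voter $v$, namely $|\{i'\in S\cup\{i\}:\pi_v(i')\le\pi_v(i)\}|$, depends only on $S$ and $i$, so admissibility of a candidate item for slot $i$ is in principle a local check against the items already committed to $S$.

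The hard part — and the main obstacle — is exactly that local check: a priori, verifying a consistency constraint while extending a partial assignment seems to require remembering the actual utility values of the already-placed items, which would blow up the DP state beyond $f(n,\budget)\cdot\mathrm{poly}(n,m)$. The crux of the argument is to show this information is unnecessary: by processing the slots in a canonical order and an exchange argument, one would argue that among all consistent completions of a partial assignment one may always commit, at each step, to a utility-extremal admissible item, so that the pair $(S,\beta)$ together with the already-accrued weight is a sufficient state. Dealing with \buk requires no change beyond the definition of $w_i(\cdot)$. Assembling the three nested loops — over $s$, over $\vec\pi$, and the subset-DP — gives the claimed $\mathcal{O}^*(\budget^{2}\,(\budget!)^{n}\,2^{\budget})$ running time.
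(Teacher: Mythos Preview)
Your outline---guess the bundle size $s\le\budget$, guess for each voter a permutation of $[s]$ describing how she ranks the eventual bundle, then solve a budgeted slot-to-item assignment---is exactly the paper's. The gap is in the assignment subproblem. Your subset-DP with state $(S,\beta)$ does not work as described. First, it does not even enforce \emph{distinctness}: when you ``add one slot $i\notin S$ at a time'' and pick an item for it, the state does not remember which items already occupy the slots in $S$, so nothing prevents reusing an item. Second---and this is the point you yourself flag---it cannot enforce the \emph{consistency} inequalities $\pi_v(i)<\pi_v(i')\Rightarrow \util_v(z_i)>\util_v(z_{i'})$, because those depend on the actual utilities of the items already sitting in $S$. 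The ``exchange argument'' you sketch (commit to a utility-extremal admissible item) has no clear meaning once $n\ge 2$ voters have conflicting utility orders over $\X{P}$: an item that is extremal for $v_1$ need not be for $v_2$, so there is no canonical greedy choice, and I do not see how to complete that line within the claimed state space.

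The paper's key move is to \emph{drop the consistency constraints altogether}. For a fixed guess $(k,\{\psi_i\})$ it simply seeks a maximum-weight cost-feasible size-$k$ matching in the bipartite graph $[k]\times\X{P}$ with edge weight $w(l,p)=\sum_i \mathbb{I}[\psi_i(l)]\cdot\util_{v_i}(p)$ (where $\mathbb{I}[\psi_i(l)]$ indicates $\psi_i(l)=\lambda$ for \muk, and $\psi_i(l)\le\lambda$ for \buk), and then argues that for the returned bundle $S'$ one has $\sum_i\mathtt{sat}_{v_i}(S')\ge w(M)$; equality holds under the ``correct'' guess, so maximizing $w(M)$ over all guesses recovers the optimum. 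For \buk this inequality is immediate, since the sum of the top $\lambda$ utilities dominates any $\lambda$-term sum; the paper asserts the same inequality for \muk as well. This eliminates your ``hard part'' entirely. The matching subproblem is then solved not by a subset-DP but by brute force over all $2^{k\cdot(\lfloor\log_2 m\rfloor+1)}$ binary encodings of $k$-tuples of items, which is what the paper accounts as the $2^{\budget}$ factor. Your instinct that consistency matters is not unreasonable for the median rule, but it is precisely the step the paper avoids rather than confronts.
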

	
	\begin{proof}
		
		Since the item costs are natural numbers, the maximum cardinality of a solution bundle is $\min(\mathtt{b}, m)$. Let $S=\{s_1, \dots, s_k\}$ be a solution bundle. First, we guess the value of $k = |S|$. Next, for each voter $v_i$, we guess the permutation $\psi_i$ of $[k]$: $\psi_i(j)=l$ if and only if $s_j$ is the $l^{th}$ most preferred item of $v_i$ in $S$. Note that $\psi_i$ gives us the projection of $S$ on the preference order of $v_i$: $\succ_i$. Let $\mathbb{I}[\psi_i(j)]$ indicate whether $\psi_i(j) = \lambda$.
		We consider the following complete bipartite graph $G$ with bipartitions $[k]$ and $\X{P}$: the weight of an edge $(l,p)$ is given by $w(l,p) = \sum_{v_i \in \X{V}} \mathbb{I}[\psi_i(l)] \cdot \mathtt{util}_{v_i}(p)$. An edge $(l,p)$ captures item $p$ taking the role of $s_l$ in $S$: $w(l,p)$ is the contribution of $s_l$ to the utility of $S$. Our goal is to find a matching $M$ of size $k$ such that the cost of the bundle corresponding to $M$ is at most $\mathtt{b}$, and the weight of $M$ is maximum. Each of the $m$ items in $\X{P}$ can be encoded using $ln = \lfloor \log_2 m \rfloor +1$ bits. Thus, any matching of size $k$ in $G$ can be encoded using $k \cdot ln$ bits: the first $ln$ bits represent the item that $s_1$ is matched to, and so on. To find $M$, we iterate through all $2^{k \cdot ln}$ binary strings of length $k \cdot ln$ and look for a desired matching.

		Let $M \subset E(G)$ be a matching with the maximum weight over all possible guesses of $k$ and $\{\psi_i : v_i \in \X{V}\}$. We return ``YES" if the weight of $M$ is at least $\mathtt{u_{tgt}}$; otherwise we return ``NO". We give a proof of correctness in \Cref{proof-omit}.

		\textbf{Complexity:} There are $\mathtt{b}$ guesses for the cardinality of $S$, and $\sum_{k \in [\mathtt{b}]}(k!)^n$ guesses for the permutations $\{\psi_i : v_i \in \X{V}\}$. For each guess $k$, $\{\psi_i : v_i \in \X{V}\}$, we iterate through $2^{k \cdot ln}$ strings of length ${k \cdot ln}$, and for each string we spend polynomial time to process. Thus, the total running time is $\mathcal{O}^*(b \cdot b (b!)^n \cdot 2^{b})$.
		
		By redefining the indicator function $\mathbb{I}$ as $\mathbb{I}(\psi_i(j))=1$ if and only if $\psi_i(j) \le \lambda$ and continuing as before, we obtain the result for \buk.
	\end{proof}

	\section{Algorithms for Special Cases} \label{special_cases}
	
	We now present algorithms for some restrictions in the input. First, we consider the case of a SUP in which all voters have identical preference orders.  We first apply the following reduction rule that simplifies the input.
	
	\begin{rr}[$\dagger$]\label{rr:muk-su}	
		If there exist two voters $v_1, v_2 \in \X{V}$ such that their preference orders are identical, then we can ``merge" the two voters:
		Set $\X{V}^\prime = (\X{V} \setminus \{v_1,v_2\}) \cup \{v_{12}\}$ with $\mathtt{util}^\prime_{v_{12}} (p)= \mathtt{util}_{v_1}(p) + \mathtt{util}_{v_2}(p)$, for all $p\in \X{P}$, and for all other voters $\mathtt{util}^\prime_v(p)=\mathtt{util}_v(p)$. The new instance is ${\cal I}' = (\X{P},\X{V}^\prime,\mathtt{c},\{\mathtt{util}'_v\}_{v\in \X{V}},\mathtt{b}, \mathtt{u_{tgt}},\lambda)$
	\end{rr}

	After the exhaustive application of \Cref{rr:muk-su}, the profile contains only one voter. In the case of \buk with one voter, it is equivalent to a knapsack problem on the items. Thus, we have the following lemma.
	\begin{lemma}[$\dagger$]
		\label{lem:muk-strongly-unan}\label{cor:buk-strongly-unan}
		\muk can be solved in polynomial time for SUPs. Furthermore, \buk can be solved in polynomial time for SUPs, when either the budget, or the utilities are encoded in unary.
	\end{lemma}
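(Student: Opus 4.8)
The plan is to first apply Reduction Rule~\ref{rr:muk-su} exhaustively. Since the profile is strongly unanimous, all voters share a single preference order, so \emph{every} pair of voters can be merged; after exhaustive application we are left with a single voter $v$. Its utility function is the pointwise sum $\mathtt{util}_v(\cdot)=\sum_{v'\in\X{V}}\mathtt{util}_{v'}(\cdot)$ of the original ones, and because a sum of functions each consistent with a fixed total order $q_1\succ q_2\succ\cdots\succ q_m$ of $\X{P}$ is again consistent with that order, the merged voter's ranking is exactly the common ranking. Relying on the correctness of Reduction Rule~\ref{rr:muk-su} (feasibility- and score-preserving), it suffices to solve the single-voter version of each problem; the merging itself costs $\mathcal{O}(nm)$.

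For single-voter \muk I would argue as follows. In any bundle $\X{Z}$ with $|\X{Z}|\ge\lambda$ the satisfaction equals $\mathtt{util}_v(q_i)$, where $q_i$ is the \emph{representative}: $q_i\in\X{Z}$ and exactly $\lambda-1$ of $q_1,\dots,q_{i-1}$ lie in $\X{Z}$; any items of $\X{Z}$ ranked below $q_i$ are irrelevant to the score. Hence the optimal satisfaction is $\max\{\mathtt{util}_v(q_i): \lambda\le i\le m,\ c_i\le\budget\}$, and $0$ if no such $i$ exists, where $c_i:=\mathtt{c}(q_i)+(\text{sum of the }\lambda-1\text{ smallest values among }\mathtt{c}(q_1),\dots,\mathtt{c}(q_{i-1}))$ is the cheapest way to force $q_i$ into the $\lambda$-th slot. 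Correctness is immediate in both directions: any feasible bundle of size $\ge\lambda$ with representative $q_j$ contains $q_j$ plus $\lambda-1$ items above it, hence costs at least $c_j$, so $c_j\le\budget$; conversely the witness realising $c_i$ has satisfaction $\mathtt{util}_v(q_i)$. This is computable in $\mathcal{O}(m^2\log m)$ time (sort prefixes of costs once, scan $i$), so the whole algorithm runs in $\mathrm{poly}(n,m)$; comparing the optimum to $\mathtt{u_{tgt}}$ answers the decision version.

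For single-voter \buk, note first that if $|\X{Z}|>\lambda$ then the items of $\X{Z}$ ranked below position $\lambda$ contribute $0$ to the score, so deleting them cannot decrease the satisfaction and cannot increase the cost; hence there is an optimal bundle with $|\X{Z}|\le\lambda$, for which the score is simply $\sum_{p\in\X{Z}}\mathtt{util}_v(p)$. Thus single-voter \buk is exactly \kp with the extra cardinality bound $|\X{Z}|\le\lambda$, which a standard dynamic program over states $(\text{item index},\ \#\text{chosen items},\ \text{used budget})$ solves in time $\mathcal{O}(m\cdot\lambda\cdot\budget)=\mathcal{O}(m^2\budget)$; dually, indexing states by accumulated utility instead of used budget gives a table of size $\mathcal{O}(m\cdot\lambda\cdot\bar{u})$ since the score is at most $\lambda\bar{u}$. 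The first DP is polynomial when the budget is in unary, the second when the utilities are in unary, yielding the stated bounds.

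I expect the only delicate points to be: (i) confirming that Reduction Rule~\ref{rr:muk-su} preserves both feasibility and the total satisfaction and leaves the merged voter's order intact (covered by that rule's correctness proof); and (ii) for \buk, being careful that the cardinality-constrained knapsack it reduces to is still (weakly) \textsf{NP}-hard — it contains \kp when $\lambda\ge m$ — so the unary encoding of the budget or the utilities is genuinely needed, whereas for \muk the score depends on a single representative's value rather than a subset sum, which is exactly why that case stays strongly polynomial.
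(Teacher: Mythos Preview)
Your proposal is correct. The reduction to a single voter via Reduction Rule~\ref{rr:muk-su} and the \buk part (cardinality-bounded knapsack DP in either budget or utility) match the paper almost verbatim.

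The one genuine difference is your treatment of single-voter \muk. The paper does not argue directly; it simply observes that after merging down to $n=1$, the \XP algorithm of \Cref{thm:muk-xp} runs in time $\mathcal{O}((m(\lambda+1))^{1}\,\mathrm{poly}(m))$, which is polynomial. You instead give a self-contained combinatorial argument: iterate over the candidate representative $q_i$ and pad with the $\lambda-1$ cheapest higher-ranked items. Your argument is more elementary and yields an explicit $\mathrm{poly}(m)$ bound without going through the \wmsc reduction; the paper's route has the advantage of reusing machinery already proved. Both are valid and equally short; your version is arguably the more transparent of the two for this special case.
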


	Now, we consider the distance from an easy instance as a parameter, which is a natural parameter in parameterized complexity. Let $d$ be the minimum number of voters that need to change their utility function so that the profile is SU. Given an instance of \muk, $d$ can be computed in polynomial time as follows. First, we sort the utility profile. Let $l$ be the length of the longest block of voters with the same utility function. Then, $d=n-l$. We consider the problem parameterized by $d$. This parameter has also been studied earlier for the {\sc Connected CC} problem~\cite{gupta2020well}.
	
	Without loss of generality, we assume that voters $\{ v_{d+1}, \ldots, v_n \}$ are SU. We first apply \Cref{rr:muk-su} exhaustively. Note that after the exhaustive application of the rule, the instance has at most $d+1$ voters. Since \dk is \fpt with respect to $n$ when the utilities are encoded in unary \cite{fluschnik2019fair}, we have the following corollary.
	\begin{corollary}
		\label{cor:dk-par-d}
		\dk is \fpt with respect to $d$ when utilities are encoded in unary.
	\end{corollary}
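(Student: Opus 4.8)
The plan is to show that \dk parameterized by $d$ admits a polynomial-time preprocessing step producing an equivalent instance with only $O(d)$ voters, after which the known \fpt$(n)$ algorithm for \dk with unary utilities of Fluschnik et al.~\cite{fluschnik2019fair} finishes the job.

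First I would compute $d$ in polynomial time as described above and relabel the voters so that $v_{d+1},\dots,v_n$ share a common preference ordering. Then I would apply \Cref{rr:muk-su} exhaustively. Three things need checking here. (1) \emph{Soundness for $\lambda=1$:} if $v_1,v_2$ have identical preference orders, then for every bundle $\X{Z}$ they share the same top-ranked item $p^{*}\in\X{Z}$, so $\median{v_1}{1}(\X{Z})+\median{v_2}{1}(\X{Z})=\util_{v_1}(p^{*})+\util_{v_2}(p^{*})=\median{v_{12}}{1}(\X{Z})$, while no other voter's satisfaction and no cost changes; hence the yes/no answer is preserved (this is exactly the content of the deferred proof of \Cref{rr:muk-su}). (2) \emph{Termination with $\le d+1$ voters:} the merged voter $v_{12}$ again induces the common preference order of $v_1,v_2$, so all of $v_{d+1},\dots,v_n$ collapse into a single voter, leaving at most $d+1$ voters overall. (3) \emph{Unary encoding is preserved:} since $\util'_{v_{12}}(p)=\util_{v_1}(p)+\util_{v_2}(p)\le\sum_{v\in\X{V}}\util_v(p)$, the sum of all (merged) utilities over all items never exceeds the original input size, so the reduced instance ${\cal I}'$ still has polynomial size and unary-encoded utilities.

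Finally, I would run the \fpt$(n)$ algorithm of~\cite{fluschnik2019fair} on ${\cal I}'$, which has $n'\le d+1$ voters and size polynomial in $|{\cal I}|$; its running time is $f(d+1)\cdot\text{poly}(|{\cal I}|)$, which is \fpt in $d$. The only point requiring a little care is item~(3) — confirming that iterating the merge does not blow the instance up super-polynomially — but the bound $\util'_{v_{12}}(p)\le\sum_{v\in\X{V}}\util_v(p)$ settles it immediately; everything else is routine given \Cref{rr:muk-su} and the cited algorithm. (The same scheme would \emph{not} extend to $\lambda>1$, because then merged voters need no longer admit the FPT-in-$n$ algorithm in the unary regime, consistent with \Cref{cor:muk-w-hardness}.)
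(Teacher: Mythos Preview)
Your proposal is correct and follows exactly the same approach as the paper: apply \Cref{rr:muk-su} exhaustively to reduce to at most $d+1$ voters, then invoke the \fpt$(n)$ algorithm of Fluschnik et al.\ for unary utilities. You are in fact more thorough than the paper, which does not explicitly verify your point~(3) that merging preserves a polynomially bounded unary encoding.
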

	
	Furthermore, due to \Cref{thm:muk-xp,thm:muk-fpt-nb}, we have the following.
	\begin{corollary}
		\label{cor:muk-par-d}
		\label{cor:muk-buk-par-d}
		\hfill
		\begin{enumerate}
			\item[(1)] \muk can be solved in $\OO((m(\lambda+1))^{d+1}\text{poly}(d,m))$ time. 
			\item[(2)] \muk and \buk can be solved in $\mathcal{O}^*(b^2 (b!)^{d+1} 2^{b})$ time.
		\end{enumerate}
	\end{corollary}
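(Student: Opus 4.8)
The plan is to massage the input into an equivalent instance on only $d+1$ voters, and then invoke the two algorithms from \Cref{pc}. Recall that, as noted just before the corollary, we may assume the voters $v_{d+1},\ldots,v_n$ all share a single utility function (this common block has length $l=n-d$). Starting from the input instance ${\cal I}$, I would apply \Cref{rr:muk-su} exhaustively. Each application merges two voters with identical preference orders into one, so it strictly decreases the number of voters; hence the process terminates after at most $n$ steps and, being computable in time polynomial in $n$ and $m$, runs in polynomial time overall. The resulting instance ${\cal I}''$ has pairwise distinct preference orders among its voters, and since ${\cal I}$ had at most $d+1$ distinct preference orders (the $d$ voters $v_1,\ldots,v_d$ contribute at most $d$, and the block $v_{d+1},\ldots,v_n$ one more), ${\cal I}''$ has at most $d+1$ voters; its item set $\X{P}$, costs, budget, target, and $\lambda$ are unchanged.

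For part (1), I would run the \XP algorithm of \Cref{thm:muk-xp} on ${\cal I}''$. With $n''\le d+1$ voters and $m$ items it runs in $\OO((m(\lambda+1))^{n''}\,\text{poly}(n'',m)) = \OO((m(\lambda+1))^{d+1}\,\text{poly}(d,m))$ time; adding the polynomial cost of the reduction yields the claimed bound. For part (2), I would instead run the algorithm of \Cref{thm:muk-fpt-nb} on ${\cal I}''$ — that argument applies verbatim to both \muk and \buk — obtaining running time $\OO^*(\budget^2(\budget!)^{n''}2^{\budget}) = \OO^*(\budget^2(\budget!)^{d+1}2^{\budget})$.

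The one ingredient that actually needs checking is the soundness of \Cref{rr:muk-su}: that replacing two voters $v_1,v_2$ having the same preference order by a single voter with utility $\mathtt{util}_{v_1}+\mathtt{util}_{v_2}$ preserves the answer under the median and the best scoring functions. This is exactly the content of the $\dagger$-lemma for \Cref{rr:muk-su}: in any bundle $\X{Z}$ the $\lambda^{\text{th}}$ most preferred item (resp. the set of top $\lambda$ items) of the merged voter coincides with that common choice of $v_1$ and $v_2$, so per-bundle satisfaction is additive and the total target is unaffected; the merged voter still has a well-defined preference order (namely the common one), so a repeated merge collapses the whole SU block. Granting this, the corollary follows immediately from \Cref{thm:muk-xp} and \Cref{thm:muk-fpt-nb} with $n$ replaced by $d+1$.
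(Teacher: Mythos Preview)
Your proposal is correct and follows exactly the paper's own approach: exhaustively apply \Cref{rr:muk-su} to collapse the strongly unanimous block and obtain an equivalent instance on at most $d+1$ voters, then invoke \Cref{thm:muk-xp} for part~(1) and \Cref{thm:muk-fpt-nb} for part~(2). Your added remarks on why the merged voter retains the common preference order (so the rule can be iterated) and on the safeness of \Cref{rr:muk-su} mirror the content of the paper's appendix lemma.
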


	\subsection{When $\lambda = 1$: \dk}
	
	 Unlike \muk for $\lambda>1$, \dk can be solved in polynomial time for UPs. The optimal solution contains the most preferred item of all voters.
	 
	\vspace{-0.4em}
	\begin{lemma}[$\dagger$]\label{lem:dk-unanimous}
		\dk can be solved in polynomial time for UPs.
	\end{lemma}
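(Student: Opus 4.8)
The plan is to show that a \emph{single}-item bundle already attains the optimum, so \dk on a UP reduces to inspecting one item. Since the profile is unanimous, there is an item $p^\star\in\X{P}$ that is a most preferred item of \emph{every} voter; concretely, fix any voter $v_1$, let $p^\star$ be a top item of $v_1$, and note that $\util_v(p^\star)=\max_{p\in\X{P}}\util_v(p)$ for all $v\in\X{V}$. Recall that for $\lambda=1$ the satisfaction of a voter $v$ from a bundle $\X{Z}$ is $\median{v}{1}(\X{Z})=\max_{p\in\X{Z}}\util_v(p)$. Hence for every bundle $\X{Z}\subseteq\X{P}$ and every voter $v$ we have $\median{v}{1}(\X{Z})\le\util_v(p^\star)$, and therefore $\sum_{v\in\X{V}}\median{v}{1}(\X{Z})\le\sum_{v\in\X{V}}\util_v(p^\star)$.

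Next I would verify that the bundle $\X{Z}^\star=\{p^\star\}$ meets this upper bound and is feasible. Feasibility is immediate from the standing assumption that every item costs at most \budget, so $\mathtt{c}(p^\star)\le\budget$. Optimality is also immediate: $\median{v}{1}(\X{Z}^\star)=\util_v(p^\star)$ for every $v$, so $\sum_{v\in\X{V}}\median{v}{1}(\X{Z}^\star)=\sum_{v\in\X{V}}\util_v(p^\star)$, which by the previous paragraph is the largest objective value attainable by any bundle. Consequently $\X{Z}^\star$ is an optimal bundle for the optimization version, and for the decision version we return YES if and only if $\sum_{v\in\X{V}}\util_v(p^\star)\ge\mathtt{u_{tgt}}$.

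The resulting algorithm simply locates $p^\star$ (one scan of $v_1$'s utilities, $\OO(m)$ time) and, if needed, evaluates $\sum_{v\in\X{V}}\util_v(p^\star)$ in $\OO(n)$ time. There is no genuinely hard step here: the only points deserving a line of justification are that one never benefits from adding further items — because a single copy of $p^\star$ already realizes every voter's maximum contribution simultaneously and no item can raise any $\median{v}{1}$ value above $\util_v(p^\star)$ — and that the definition of unanimity really does supply a common top item even when some voter has several items tied for the maximum, which it does, since a top item of $v_1$ remains a top item of all other voters.
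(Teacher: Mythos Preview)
Your argument is correct and is exactly the paper's: the common top item $p^\star$ is feasible (its cost is at most $\budget$ by assumption) and the singleton $\{p^\star\}$ attains the trivial upper bound $\sum_{v}\max_{p}\util_v(p)$, hence is optimal. One minor caveat on your final remark about ties: an \emph{arbitrary} top item of $v_1$ need not be a top item of every other voter (e.g., $v_1$ ties $p_1,p_2$ while $v_2$ strictly prefers $p_1$; this is still a UP with common top $p_1$), so the safe way to locate $p^\star$ is to scan all voters' utilities rather than only $v_1$'s---still $\OO(nm)$ and hence polynomial.
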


	Let ${\cal I}=(\X{P},\X{V},\mathtt{c},\{\mathtt{util}_v\}_{v\in \X{V}},\mathtt{b},\mathtt{u_{tgt}})$ be an instance of \dk. For our subsequent discussions, we define \[\bar{u} = \sum_{v \in \X{V}} \max_{p \in \X{P}} \mathtt{util}_{v}(p) \text{, and } \hat{u} = \sum_{p \in \X{P}} \sum_{v \in \X{V}} \mathtt{util}_v(p).\]

	Next, we design \fpt algorithms with respect to $n$ for \dk. Fluschnik et al.~\cite{fluschnik2019fair} gave an algorithm that runs in $\mathcal{O}(n!\ \text{poly}(\hat{u},n,m))$, which is an \fpt algorithm with respect to $n$ when  the total utility is either unary encoded or bounded by $\text{poly}(n,m)$. \hide{We begin with improving the running time of this algorithm.}

	We first give an algorithm that runs in  $\mathcal{O}(4^n\ \text{poly}(\bar{u},n,m))$ time. Further, we give an algorithm that runs in $\mathcal{O}(2^n\ \text{poly}(\bar{u},\mathtt{b},n,m))$ time. It is worth mentioning that the best known algorithm for \ccfull, a special case of \dk, also runs in $\mathcal{O}(2^n\ \text{poly}(n,m))$ time~\cite{DBLP:conf/ijcai/Gupta00T21}.

	\smallskip
	\noindent\textbf{A $\mathcal{O}(4^n\ \text{poly}(\bar{u},n,m))$ algorithm.}
	To design the algorithm, we first reduce the problem to the following variant of the \setcover problem, which we call \kpcover due to its similarity with the \kp and \setcover problems. Then, using the algorithm for \kpcover (Theorem~\ref{thm_dk-kpcover-complexity}), we obtain the desired algorithm for \dk.

	\defproblem{\kpcover}{Two sets of universe $U_1=\{u_1^1,\ldots,u_1^n\}$ and $U_2=\{u_2^1,\ldots,u_2^m\}$, a family of sets $\X{F}=\{\{F,u\}\colon F\subseteq U_1, u\in U_2\}$, a profit function $\mathtt{profit}\colon \X{F} \rightarrow \mathbb{N}$, a cost function $\mathtt{cost}\colon \X{F} \rightarrow \mathbb{N}$, budget $\mathtt{b} \in \mathbb{N}$, and total profit $\mathtt{p}$.}{Does there exist a set $\X{Z}\subseteq \X{F}$ such that (i) for every two sets $\{F,u\}$ and $\{F',u'\}$ ($u$ can be equal to $u'$) in \X{Z}, $F \cap F' =\emptyset$, (ii) $\bigcup_{\{F,u\} \in \X{Z}} F = U_1$, (iii) $\!\sum_{\{F,u\} \in \X{Z}} \mathtt{cost}(\{F,u\}) \! \leq \mathtt{b}$, and (iv) $\sum_{\{F,u\} \in \X{Z}} \mathtt{profit}(\{F,u\}) \geq \mathtt{p}$?}

	We first discuss the intuition behind reducing \dk to \kpcover. Consider a non-empty bundle $S$. For each voter, there exists an item that represents the voter in the bundle. An item can represent more than one voter. For each subset of voters $X$ and each item $p$, we create a set $\{X,p\}$. The goal is to find a family of sets such that the set of voters is disjointly covered by the voter subsets, and the set of items forms a bundle with utility at least $\mathtt{p}$ and cost at most $\mathtt{b}$.
	We first present the reduction from \dk to \kpcover, which formalizes the intuition, in the following lemma.
	\begin{lemma}\label{lem:dk-kpcover}
		\dk can be reduced to \kpcover in $\OO(2^n \\ \text{poly}(n,m))$ time.
	\end{lemma}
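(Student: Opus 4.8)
The plan is to make the intuition preceding the lemma precise by giving an explicit construction of a \kpcover instance from a \dk instance, and then arguing the equivalence of \yes-instances. Given ${\cal I}=(\X{P},\X{V},\mathtt{c},\{\mathtt{util}_v\}_{v\in \X{V}},\mathtt{b},\mathtt{u_{tgt}})$, I set $U_1=\{u_1^1,\ldots,u_1^n\}$ in bijection with the voters (write $u_1^i\leftrightarrow v_i$) and $U_2=\{u_2^1,\ldots,u_2^m\}$ in bijection with the items (write $u_2^j\leftrightarrow p_j$). For every nonempty subset $X\subseteq\X{V}$ and every item $p\in\X{P}$, I put a set $\{F_X,p\}$ into $\X{F}$, where $F_X\subseteq U_1$ is the subset of $U_1$ corresponding to $X$. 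The cost is $\mathtt{cost}(\{F_X,p\})=\mathtt{c}(p)$, and the profit is $\mathtt{profit}(\{F_X,p\})=\sum_{v\in X}\mathtt{util}_v(p)$, reflecting that $p$ will serve as the $1$-best (i.e.\ most preferred) item for exactly the voters in $X$. Finally I set the budget to $\mathtt{b}$ and the target profit $\mathtt{p}=\mathtt{u_{tgt}}$. There is one technical wrinkle: the same item $p$ may legitimately appear in several chosen sets $\{F_X,p\}$ and $\{F_{X'},p\}$ only if we want to pay for $p$ once but credit disjoint voter-blocks; condition (i) already forces $F_X\cap F_{X'}=\emptyset$, so the voter-blocks are disjoint, but the cost $\mathtt{c}(p)$ would be double-counted. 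To avoid this I restrict $\X{F}$ so that for each item $p$ we only allow the ``merged'' set over its full block of voters — equivalently, I will argue that an optimal \kpcover solution never needs two sets with the same item, so the double-counting issue is vacuous; alternatively one can define a single dummy cost-$0$ copy. I will handle this cleanly in the written proof by observing that if $\{F_X,p\}$ and $\{F_{X'},p\}$ are both in $\X{Z}$ then replacing them by $\{F_{X\cup X'},p\}$ is feasible (disjointness is preserved, cost strictly decreases, profit is unchanged), so we may assume each item appears at most once.

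For the forward direction, suppose $S\subseteq\X{P}$ with $\mathtt{c}(S)\le\mathtt{b}$ and $\sum_{v}\median{v}{1}(S)=\sum_v\max_{p\in S}\mathtt{util}_v(p)\ge\mathtt{u_{tgt}}$; assume $S\neq\emptyset$ (the empty-bundle case is trivial and handled by the assumption $\mathtt{u_{tgt}}>0$ in the corresponding hardness discussion, or directly). For each $p\in S$ let $X_p=\{v\in\X{V}:p$ is a most-preferred item of $v$ in $S\}$, breaking ties so that the $X_p$ for $p\in S$ partition $\X{V}$. Take $\X{Z}=\{\{F_{X_p},p\}:p\in S, X_p\neq\emptyset\}$. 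Condition (i) holds since the $X_p$ are pairwise disjoint; (ii) holds since every voter has some most-preferred item in $S$; (iii) holds since $\sum_{p}\mathtt{c}(p)=\mathtt{c}(S)\le\mathtt{b}$; and (iv) holds since $\sum_p\sum_{v\in X_p}\mathtt{util}_v(p)=\sum_v\max_{p\in S}\mathtt{util}_v(p)\ge\mathtt{u_{tgt}}=\mathtt{p}$. For the backward direction, given a feasible $\X{Z}$, first normalize as above so each item appears in at most one set, let $S=\{p:\{F_X,p\}\in\X{Z}\text{ for some }X\}$, and note $\mathtt{c}(S)\le\mathtt{b}$ by (iii). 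For each voter $v$, condition (ii) gives some $\{F_X,p\}\in\X{Z}$ with $v\in X$, hence $\max_{q\in S}\mathtt{util}_v(q)\ge\mathtt{util}_v(p)$, so summing over voters and using disjointness (i) gives $\sum_v\max_{q\in S}\mathtt{util}_v(q)\ge\sum_{\{F_X,p\}\in\X{Z}}\sum_{v\in X}\mathtt{util}_v(p)=\sum_{\{F_X,p\}\in\X{Z}}\mathtt{profit}(\{F_X,p\})\ge\mathtt{p}=\mathtt{u_{tgt}}$, so $S$ is a \yes-witness for \dk.

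For the running time: the family $\X{F}$ has $(2^n-1)\cdot m$ sets, and each set's cost and profit are computed by a sum over at most $n$ voters, so the whole instance is built in $\OO(2^n\,\mathrm{poly}(n,m))$ time, matching the claim. The main obstacle, and the one point I expect to need care in the write-up rather than a routine check, is the tie-breaking/double-counting bookkeeping — making sure that (a) the voter-to-representative assignment induced by a bundle really can be taken to be a partition of $\X{V}$, and (b) a feasible \kpcover solution can be assumed to use each item at most once so that cost is not inflated; both are handled by the merging argument sketched above, but they are the places where a careless proof would go wrong.
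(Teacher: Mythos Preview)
Your construction and equivalence argument are essentially the paper's: same $U_1,U_2,\X{F},\mathtt{cost},\mathtt{profit}$, same forward direction via the partition $\{X_p\}_{p\in S}$ with tie-breaking, same backward direction via $S=\{p:\{F_X,p\}\in\X{Z}\text{ for some }X\}$.

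The one point you single out as ``the main obstacle''---normalizing $\X{Z}$ so that each item appears at most once---is in fact a non-issue, and the paper does not bother with it. In the backward direction, if an item $p$ occurs in several sets of $\X{Z}$, its cost is already counted that many times in $\sum_{\{F,u\}\in\X{Z}}\mathtt{cost}(\{F,u\})\le\mathtt{b}$, so the bundle $S$ of \emph{distinct} items automatically satisfies $\sum_{p\in S}\mathtt{c}(p)\le\mathtt{b}$; and the profit bound $\sum_{v}\max_{q\in S}\mathtt{util}_v(q)\ge\sum_{\{F_X,p\}\in\X{Z}}\mathtt{profit}(\{F_X,p\})$ needs only the pairwise disjointness of the $F_X$ (condition~(i)), not distinctness of the items. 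Your merging step is correct but unnecessary.
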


	\begin{proof}
		Given an instance ${\cal I}$, we create an instance ${\cal J}=(U_1,U_2,\X{F},\mathtt{cost},\mathtt{profit},\\ \mathtt{b'}, \mathtt{p})$ of \kpcover as follows. We first construct universe sets: $U_1=\X{V}$, and $U_2=\X{P}$. For each $X\subseteq U_1$ and each $r\in U_2$, we add a set $\{X,r\}$ in the set family $\X{F}$. Note that the size of $\X{F}$ is $2^nm$. Next, we define the cost and the profit functions. For every set $\{X,r\}\in \X{F}$, $\mathtt{cost}(\{X,r\})=\mathtt{c}(r)$ and $\mathtt{profit}(\{X,r\})=\mathtt{util}_X(r)$. We set $\mathtt{b'}=\mathtt{b}$ and $\mathtt{p}=\mathtt{u_{tgt}}$. We give a proof of correctness of the reduction in \Cref{proof-omit}. This completes the proof of \Cref{lem:dk-kpcover}.	
	\end{proof}

	Next, we design an algorithm for \kpcover.
	\begin{lemma}[$\dagger$]\label{thm_dk-kpcover-complexity}
		\kpcover can be solved in  $\mathcal{O}(2^{\vert U_1\vert} \vert\X{F}\vert p_{max})$ time, where $p_{max}=\sum_{F\in \X{F}}\mathtt{profit}(F)$.
	\end{lemma}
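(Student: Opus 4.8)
The plan is to solve \kpcover with a single left-to-right, knapsack-style dynamic program over subsets of $U_1$ and profit values, in which the ``exact cover'' bookkeeping simultaneously enforces the pairwise-disjointness requirement~(i) at no extra cost. Since every sub-collection of $\X{F}$ has total profit at most $p_{max}=\sum_{F\in\X{F}}\mathtt{profit}(F)$, we may assume $\mathtt{p}\le p_{max}$ (otherwise we immediately answer ``no''), and let profit values range over $[0,p_{max}]$. Fix an arbitrary ordering $\X{F}=\{G_1,\ldots,G_{|\X{F}|}\}$ with $G_i=\{F_i,u_i\}$, $F_i\subseteq U_1$, $u_i\in U_2$. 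For $i\in[0,|\X{F}|]$, $S\subseteq U_1$ and $q\in[0,p_{max}]$, let $D[i,S,q]$ be the minimum of $\sum_{\{F,u\}\in\X{Z}}\mathtt{cost}(\{F,u\})$ over all $\X{Z}\subseteq\{G_1,\ldots,G_i\}$ whose $F$-parts are pairwise disjoint, whose $F$-parts union to exactly $S$, and whose total profit is exactly $q$; set $D[i,S,q]=+\infty$ if no such $\X{Z}$ exists.

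The base case is $D[0,\emptyset,0]=0$ and $D[0,S,q]=+\infty$ otherwise. For $i\ge 1$,
\begin{equation*}
D[i,S,q]=\begin{cases}
\min\bigl\{D[i-1,S,q],\; D[i-1,S\setminus F_i,\,q-\mathtt{profit}(G_i)]+\mathtt{cost}(G_i)\bigr\} & \text{if }F_i\subseteq S\text{ and }q\ge\mathtt{profit}(G_i),\\[2pt]
D[i-1,S,q] & \text{otherwise.}
\end{cases}
\end{equation*}
Correctness follows by an easy induction on $i$: the two options distinguish whether $G_i\in\X{Z}$, and when $G_i$ is used its $F$-part must lie inside the covered set $S$, the remaining sets must cover exactly $S\setminus F_i$, and because $F_i$ and $S\setminus F_i$ are disjoint, \emph{all} $F$-parts of $\X{Z}$ stay pairwise disjoint automatically --- this single observation is the only delicate point of the argument. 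We then return ``yes'' iff $\min_{q\ge\mathtt{p}}D[|\X{F}|,U_1,q]\le\mathtt{b}$: the collection attaining that minimum witnesses conditions~(i)--(iv), and conversely any feasible $\X{Z}$ induces such a table entry.

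For the complexity, the table has $(|\X{F}|+1)\cdot 2^{|U_1|}\cdot(p_{max}+1)$ cells, and each is filled using $\OO(1)$ word-RAM operations (a subset-containment test, one set difference, and two additions; since $|U_1|=\OO(\log|\X{F}|)$ this convention costs at most a logarithmic factor that is otherwise absorbed into $\OO(\cdot)$). As layer $i$ depends only on layer $i-1$, the usual dimension-dropping keeps the space at $\OO(2^{|U_1|}p_{max})$ while the running time is $\OO(2^{|U_1|}\,|\X{F}|\,p_{max})$, as claimed. The crux --- and the reason a naive subset-DP that loops over all of $\X{F}$ inside every state would be too slow, given that $|\X{F}|$ is exponential in $|U_1|$ in the intended reduction from \dk --- is exactly that this single sequential scan over $\X{F}$ with a ``subset still to be covered'' state achieves only a linear dependence on $|\X{F}|$ while enforcing disjointness for free.
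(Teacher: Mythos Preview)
Your proposal is correct and follows essentially the same dynamic program as the paper's proof: a table indexed by a subset $S\subseteq U_1$, an index $i$ into an ordering of $\X{F}$, and an exact profit value $q$, with the same ``include $G_i$ or not'' recurrence and the same final check over all $q\ge\mathtt{p}$. If anything, your version is slightly more careful --- you make the condition $F_i\subseteq S$ explicit in the recurrence and spell out why the exact-cover bookkeeping enforces pairwise disjointness for free, whereas the paper leaves both points implicit.
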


	\Cref{lem:dk-kpcover,thm_dk-kpcover-complexity} give us an algorithm for \dk. Analogous to \Cref{lem:muk-strongly-unan}, an alternative dynamic-programming approach to solve \kpcover is that instead of finding the minimum cost of a subset with a particular utility, we find the maximum utility of a subset with a particular cost. This results in an algorithm running in time ${\cal O}(2^{\vert U_1\vert} \vert\X{F}\vert \mathtt{b})$. Thus, we have the following results.
	\begin{theorem}\label{thm:dk-4n}
		\dk can be solved in $\mathcal{O}(4^n\ \text{poly}(\bar{u},n,m))$ time.
	\end{theorem}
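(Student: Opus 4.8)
The plan is to chain the two lemmas immediately preceding the statement. First I would invoke \Cref{lem:dk-kpcover} to turn the given \dk instance ${\cal I}=(\X{P},\X{V},\mathtt{c},\{\mathtt{util}_v\}_{v\in \X{V}},\mathtt{b},\mathtt{u_{tgt}})$ into an equivalent \kpcover instance ${\cal J}=(U_1,U_2,\X{F},\mathtt{cost},\mathtt{profit},\mathtt{b},\mathtt{p})$ in $\OO(2^n\,\text{poly}(n,m))$ time. Here $U_1=\X{V}$ so $|U_1|=n$, $U_2=\X{P}$ so $|U_2|=m$, the family $\X{F}$ consists of one pair $\{X,r\}$ for every $X\subseteq\X{V}$ and $r\in\X{P}$ so $|\X{F}|=2^n m$, and $\mathtt{profit}(\{X,r\})=\mathtt{util}_X(r)=\sum_{v\in X}\mathtt{util}_v(r)$, $\mathtt{cost}(\{X,r\})=\mathtt{c}(r)$, $\mathtt{p}=\mathtt{u_{tgt}}$. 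Then I would solve ${\cal J}$ with \Cref{thm_dk-kpcover-complexity}.

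The one point that needs care is the factor $p_{max}$ in the running time of \Cref{thm_dk-kpcover-complexity}: taken literally, $p_{max}=\sum_{F\in\X{F}}\mathtt{profit}(F)$ equals $\sum_{X\subseteq\X{V}}\sum_{r\in\X{P}}\mathtt{util}_X(r)=2^{n-1}\hat u$, which is exponential in $n$ and would only yield an $\OO(8^n\,\text{poly}(\hat u,n,m))$ bound. The observation that fixes this is that every feasible $\X{Z}\subseteq\X{F}$ is by definition a \emph{disjoint} cover of $U_1=\X{V}$, hence it assigns to each voter $v$ exactly one covering pair $\{X,r\}$ with $v\in X$, and so its total profit is $\sum_{\{X,r\}\in\X{Z}}\sum_{v\in X}\mathtt{util}_v(r)=\sum_{v\in\X{V}}\mathtt{util}_v(r(v))\le\sum_{v\in\X{V}}\max_{p\in\X{P}}\mathtt{util}_v(p)=\bar u$. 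Consequently no feasible solution — in particular none meeting the target — has profit exceeding $\bar u$; we may assume $\mathtt{p}=\mathtt{u_{tgt}}\le\bar u$ (otherwise output ``no''), and the dynamic program of \Cref{thm_dk-kpcover-complexity}, whose profit dimension only needs to range over $\{0,1,\ldots,\mathtt{p}\}$ (all larger profits collapsing to the target), runs with effective $p_{max}=\bar u$.

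Combining the pieces, solving ${\cal J}$ costs $\OO\big(2^{|U_1|}\cdot|\X{F}|\cdot\bar u\big)=\OO(2^n\cdot 2^n m\cdot\bar u)=\OO(4^n\,\bar u\,m)$, which dominates the $\OO(2^n\,\text{poly}(n,m))$ spent on building ${\cal J}$; by the correctness of the reduction (\Cref{lem:dk-kpcover}) this decides ${\cal I}$, and optimizing over the target gives the optimization version as well, all within $\mathcal{O}(4^n\,\text{poly}(\bar u,n,m))$ time. I do not anticipate a real obstacle; the step I would flag is exactly the a-priori bound $\bar u$ on the profit of any disjoint cover, since that is what turns the (otherwise exponential) $p_{max}$ factor of \Cref{thm_dk-kpcover-complexity} into the polynomial factor $\text{poly}(\bar u,n,m)$ and keeps the exponential part at $4^n$.
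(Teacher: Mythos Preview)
Your proposal is correct and follows exactly the paper's approach: the paper simply states that \Cref{lem:dk-kpcover,thm_dk-kpcover-complexity} together give the algorithm, without any further argument. You are in fact more careful than the paper, since you explicitly justify why the profit dimension can be capped at $\bar u$ (via the disjoint-cover observation) rather than at the literal $p_{max}=\sum_{F\in\X{F}}\mathtt{profit}(F)$; the paper leaves this step entirely implicit.
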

	\begin{corollary}\label{cor:dk-4n}
		\dk can be solved in $\mathcal{O}(4^n\ \text{poly}(\mathtt{b},n,m))$ time.
	\end{corollary}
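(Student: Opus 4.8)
The plan is to reuse, essentially verbatim, the route behind \Cref{thm:dk-4n}, only swapping the profit-indexed dynamic program for \kpcover for the cost-indexed variant noted right after \Cref{thm_dk-kpcover-complexity}. First I would invoke \Cref{lem:dk-kpcover}: starting from a \dk instance $\cal I$, it produces in $\OO(2^n\,\text{poly}(n,m))$ time an equivalent \kpcover instance $\cal J$ in which $|U_1| = n$ (one element per voter), $|\X{F}| = 2^n m$ (one set $\{X,r\}$ per voter-subset $X$ and item $r$), the cost function takes values at most $\mathtt{b}$, and the budget is $\mathtt{b}$.

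Next I would solve $\cal J$ with the alternative dynamic program already mentioned right after \Cref{thm_dk-kpcover-complexity}: index the table by a subset $S\subseteq U_1$ (the elements covered so far) and by the exact cost $c\in[0,\mathtt{b}]$ spent so far, and store the maximum total profit of a sub-family whose $F$-parts are pairwise disjoint, cover exactly $S$, and cost exactly $c$. The recurrence peels off one set $\{F,u\}$ with $F\subseteq S$ and moves to the state $S\setminus F$ with remaining budget $c-\mathtt{cost}(\{F,u\})$, adding $\mathtt{profit}(\{F,u\})$, so that condition~(i) of \kpcover (pairwise-disjoint $F$-parts) is enforced automatically; the answer is ``yes'' iff $D[U_1,c]\ge \mathtt{p}$ for some $c\le\mathtt{b}$. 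As stated earlier, this runs in $\OO(2^{|U_1|}\,|\X{F}|\,\mathtt{b})$ time, and substituting $|U_1|=n$, $|\X{F}|=2^n m$ gives $\OO(2^n\cdot 2^n m\cdot \mathtt{b})=\OO(4^n m\,\mathtt{b})$.

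Adding the $\OO(2^n\,\text{poly}(n,m))$ cost of the reduction, the overall running time is $\OO(4^n\,\text{poly}(\mathtt{b},n,m))$, as claimed. There is essentially no obstacle here beyond bookkeeping that was already carried out for \Cref{thm:dk-4n}; the only point worth stating explicitly is that a feasible solution of $\cal J$ is forced by conditions~(i)--(ii) to be a partition of $U_1$ into voter-subsets, each paired with an item, i.e. exactly a bundle together with a choice of representative for every voter, so the correctness argument transfers unchanged from the proof of \Cref{lem:dk-kpcover}, and only the indexing of the \kpcover dynamic program (cost instead of profit) has changed.
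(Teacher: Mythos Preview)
Your proposal is correct and follows exactly the route the paper takes: reduce \dk to \kpcover via \Cref{lem:dk-kpcover} (yielding $|U_1|=n$ and $|\X{F}|=2^n m$), then apply the cost-indexed variant of the \kpcover dynamic program noted right after \Cref{thm_dk-kpcover-complexity}, which runs in $\OO(2^{|U_1|}|\X{F}|\,\mathtt{b})$ and hence gives the claimed $\OO(4^n\,\text{poly}(\mathtt{b},n,m))$ bound.
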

	
	\smallskip
	\noindent\textbf{A $\mathcal{O}(2^n\ \text{poly}(\bar{u},\mathtt{b},n,m))$ algorithm.}
	Next, we give an algorithm that improves the exponential dependency on $n$, but it additionally has a dependency on the budget. In particular, we prove the following theorem.
	\begin{theorem}\label{thm:dk-2n}
		\dk can be solved in $\mathcal{O}(2^n\ \text{poly}(\bar{u},\mathtt{b},n,m))$ time.
	\end{theorem}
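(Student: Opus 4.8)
The plan is to recast \dk with $\lambda=1$ as a weighted partition problem on the voter set and to solve that in $\OO(2^n\,\text{poly}(\bar{u},\budget,n,m))$ time by the same fast-zeta-transform idea that underlies the $\OO(2^n\,\text{poly}(n,m))$ algorithm for \ccfull~\cite{DBLP:conf/ijcai/Gupta00T21}, while additionally carrying the cost and utility weights along as extra coordinates. For the reformulation, note that for $\lambda=1$ a bundle $\X{Z}$ has satisfaction $\sum_{v\in\X{V}}\max_{p\in\X{Z}}\mathtt{util}_v(p)$; assigning each voter to one of its most-preferred items of $\X{Z}$ and grouping the voters by their assigned item shows that the maximum satisfaction over bundles of cost at most $\budget$ equals the maximum of $\sum_{i}\mathtt{util}_{X_i}(r_i)$ over all partitions $\X{V}=X_1\sqcup\dots\sqcup X_t$ together with a choice of pairwise-distinct items $r_1,\dots,r_t$ satisfying $\sum_i\mathtt{c}(r_i)\le\budget$; conversely $\{r_1,\dots,r_t\}$ is a bundle of cost at most $\budget$ whose satisfaction is at least $\sum_i\mathtt{util}_{X_i}(r_i)$. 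Distinctness of the $r_i$ is without loss of generality (merging two blocks sharing an item changes neither the objective nor feasibility and cannot increase the cost) and is anyway enforced by the algorithm below. Since the $X_i$ partition $\X{V}$, every achievable total utility lies in $[0,\bar{u}]$ and every achievable cost in $[0,\budget]$; it remains to decide whether the maximum total utility is at least $\mathtt{u_{tgt}}$.

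For the dynamic program, fix an order $p_1,\dots,p_m$ on the items and let $T_j[B,c,u]\in\{0,1\}$ be $1$ exactly when the voters of $B\subseteq\X{V}$ can be partitioned into blocks, each block receiving a distinct item from $\{p_1,\dots,p_j\}$, with total item-cost exactly $c$ and total utility exactly $u$. The instance is a ``yes''-instance iff $T_m[\X{V},c,u]=1$ for some $c\le\budget$ and some $u\ge\mathtt{u_{tgt}}$. The recurrence either leaves $p_j$ unused, contributing $T_{j-1}[B,c,u]$, or makes $p_j$ the representative of a non-empty block $X\subseteq B$, which (writing $B'=B\setminus X$) contributes $T_{j-1}[B',\,c-\mathtt{c}(p_j),\,u-\mathtt{util}_X(p_j)]$; out-of-range entries count as $0$. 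Because $p_j$ is used at most once and only for voters of $B$ that are not yet assigned, the blocks are automatically disjoint, so we really are enumerating partitions and not merely covers.

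Evaluating the block term naively requires ranging over all $2^{|B|}$ subsets $X\subseteq B$ and would cost $\OO^{\star}(3^n)$. The speedup exploits that $X\mapsto\mathtt{util}_X(p_j)$ is a modular function: since $\mathtt{util}_X(p_j)=\mathtt{util}_B(p_j)-\mathtt{util}_{B'}(p_j)$, the block term equals $\bigvee_{B'\subseteq B}\widehat{T}[B',\,c-\mathtt{c}(p_j),\,u-\mathtt{util}_B(p_j)]$, where $\widehat{T}[B',c',u']:=T_{j-1}[B',\,c',\,u'+\mathtt{util}_{B'}(p_j)]$ is obtained from $T_{j-1}$ by a subset-dependent shift of the utility coordinate (a bijection, computed in $\OO(2^n\budget\bar{u})$ time). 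Crucially the indices $c-\mathtt{c}(p_j)$ and $u-\mathtt{util}_B(p_j)$ no longer depend on $B'$, so the block term is simply $Z[B,\,c-\mathtt{c}(p_j),\,u-\mathtt{util}_B(p_j)]$ where $Z[B,c',u']:=\bigvee_{B'\subseteq B}\widehat{T}[B',c',u']$ is the subset-OR (zeta) transform of $\widehat{T}$ along the $B$-coordinate. For all $2^n$ sets $B$ and all $\OO(\budget\bar{u})$ choices of $(c',u')$ this transform is computed in $\OO(2^n n\,\budget\bar{u})$ time by the fast zeta transform (Yates's algorithm); undoing the shift and taking the disjunction with the ``unused'' table costs a further $\OO(2^n\budget\bar{u})$. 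Summing over the $m$ items gives total running time $\OO(2^n\,nm\,\budget\,\bar{u})=\OO(2^n\,\text{poly}(\bar{u},\budget,n,m))$.

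The main obstacle is the last step: realising that the modularity of $\mathtt{util}_X(p_j)$ allows the dependence of the block term on the chosen block $X$ to be absorbed into a per-subset relabelling of the utility coordinate, so that the $\OO^{\star}(3^n)$ subset-sum hidden in the recurrence collapses into a single $\OO^{\star}(2^n)$ subset-OR transform. The partition reformulation and the bookkeeping of the two extra weight coordinates are then routine.
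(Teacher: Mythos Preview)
Your argument is correct and reaches the stated bound, but it proceeds along a genuinely different axis from the paper's proof. The paper follows the polynomial-multiplication template of~\cite{DBLP:conf/ijcai/Gupta00T21} directly: it iterates over the \emph{number of parts} $j\in[n]$ and, for each triple $(s,\alpha,\beta)\in[n]\times[\bar u]\times[\budget]$, maintains a polynomial $P^j_{s,\alpha,\beta}(x)$ whose monomials $x^{\chi(Y)}$ record which $s$-voter sets $Y$ admit a $j$-block partition with total utility $\alpha$ and total cost $\beta$; disjointness of the blocks is enforced by applying the Hamming projection after each product $P^1\cdot P^{j-1}$, and fast multiplication of degree-$2^n$ polynomials supplies the $\OO^\star(2^n)$ factor. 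You instead iterate over the \emph{items} $p_1,\dots,p_m$, carry a Boolean table $T_j[B,c,u]$, and collapse the inner $\OO^\star(3^n)$ enumeration of candidate blocks $X\subseteq B$ for $p_j$ into a single subset-OR (zeta) transform by exploiting the modular identity $\mathtt{util}_X(p_j)=\mathtt{util}_B(p_j)-\mathtt{util}_{B\setminus X}(p_j)$, which removes the $B'$-dependence from the utility index before the transform is taken. What your route buys is a somewhat more elementary presentation---a knapsack-style DP plus Yates's algorithm, with no polynomial encoding or Hamming-projection machinery---and item distinctness comes for free since each item is processed once; what the paper's route buys is a near black-box extension of the existing \ccfull algorithm, with only the cost coordinate $\beta$ added on top. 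Both land at $\OO(2^n\,\text{poly}(\bar{u},\budget,n,m))$. (A minor remark: your zeta transform $Z[B,\cdot,\cdot]$ also includes the summand $B'=B$, i.e., $X=\emptyset$, which contradicts your ``non-empty block'' clause; this is harmless for the decision problem since it merely allows an item in the bundle that represents no voter, which can only raise the cost and never the utility.)
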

	
	To prove \Cref{thm:dk-2n}, we use the technique of polynomial multiplication, which has also been recently used to give a $\mathcal{O}(2^n\ \text{poly}(n,m))$-time algorithm for \cc~\cite{DBLP:conf/ijcai/Gupta00T21}. Our algorithm is similar to the one for \cc. Here, we additionally keep track of the budget.

	Note that  there exists a solution $S$ of ${\cal I}$ that induces an $|S|$-sized partition of the voter set, because if an item is not a representative of any voter, then we can delete this item from the solution, and it still remains a solution.
	We use the method of polynomial multiplication to find such a partition that is induced by some solution. We begin with defining some notations and terminologies, which are same as in \cite{DBLP:conf/ijcai/Gupta00T21}.

	Let $U=\{u_1,\ldots,u_n\}$. For a subset $X\subseteq U$, let $\chi(X)$ denote the characteristic vector of the set $X$: an $|U|$-length vector whose $j^{th}$ bit is $1$ if and only if $u_j \in X$. Two binary strings of length $n$ are said to be {\em disjoint} if for each $i \in [n]$, the $i^{th}$ bit of both strings is not same. Let $\mathcal{H}(S)$ denote the {\em Hamming weight} of a binary string $S$: the number of $1$s in $S$. The Hamming weight of a monomial $x^i$, where $i$ is binary vector, is the Hamming weight of $i$. Let $\mathcal{H}_s(P(x))$ denote the {\em Hamming projection} of a polynomial $P(x)$ to a non-negative integer $s$: the sum of all monomials in $P(x)$ with Hamming weight $s$. Let ${\cal R}(P(x))$ denote the {\em representative polynomial} of $P(x)$: if the coefficient of a monomial is non-zero in $P(x)$, then the coefficient of the monomial is one in $\mathcal{R}(P(x))$. Next, we state a basic result following which we prove \Cref{thm:dk-2n}.

	\begin{lemma}\label{lem:distinct-vectors}
		{\rm \cite{gupta2021gerrymandering,cygan2010exact}}
		Subsets $S_1, S_2 \subseteq U$ are disjoint if and only if Hamming weight of the string $\chi(S_1)+\chi(S_2)$ is $|S_1|+|S_2|$.
	\end{lemma}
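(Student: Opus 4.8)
The plan is to unwind the definitions and reduce the claim to the elementary counting identity $|S_1|+|S_2| = |S_1\cup S_2| + |S_1\cap S_2|$. Set $\overrightarrow{s} = \chi(S_1)+\chi(S_2)$, an $n$-length vector of non-negative integers. Since $\chi(S_1)$ and $\chi(S_2)$ have all entries in $\{0,1\}$, every coordinate of $\overrightarrow{s}$ lies in $\{0,1,2\}$; moreover $\overrightarrow{s}_j \ge 1$ exactly when $u_j \in S_1\cup S_2$, and $\overrightarrow{s}_j = 2$ exactly when $u_j \in S_1\cap S_2$. This observation is really all the structure the proof uses.

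First I would record the two quantities to be compared. Summing coordinates gives $\sum_{j\in[n]} \overrightarrow{s}_j = \sum_j \chi(S_1)_j + \sum_j \chi(S_2)_j = |S_1| + |S_2|$. On the other hand, the Hamming weight $\mathcal{H}(\overrightarrow{s})$ is the number of nonzero coordinates of $\overrightarrow{s}$ (which is exactly the number of $1$s when the string is binary, as in the definition given above), and by the observation on which coordinates are nonzero this equals $|S_1\cup S_2|$. The final step is to combine these: since each coordinate of $\overrightarrow{s}$ is $0$, $1$, or $2$, we always have $\mathcal{H}(\overrightarrow{s}) \le \sum_{j} \overrightarrow{s}_j = |S_1|+|S_2|$, with equality if and only if no coordinate of $\overrightarrow{s}$ equals $2$, i.e. if and only if $S_1\cap S_2 = \emptyset$. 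This yields both directions of the equivalence at once: $S_1$ and $S_2$ are disjoint $\iff$ $S_1 \cap S_2 = \emptyset$ $\iff$ $\mathcal{H}(\chi(S_1)+\chi(S_2)) = |S_1|+|S_2|$.

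There is essentially no obstacle here; the statement is already established in~\cite{gupta2021gerrymandering,cygan2010exact} and the argument is a two-line counting fact, so one could alternatively just cite it. The only point requiring a moment's care is the convention that the Hamming weight of the (possibly non-binary) integer string $\chi(S_1)+\chi(S_2)$ counts its nonzero coordinates rather than the sum of its entries — this is precisely the reading that makes the equivalence non-trivial and that is consistent with the way $\mathcal{H}$ is applied to the monomials produced by the polynomial multiplications in the proof of \Cref{thm:dk-2n}.
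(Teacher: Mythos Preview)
The paper does not give its own proof of this lemma; it is stated with a citation to \cite{gupta2021gerrymandering,cygan2010exact} and used as a black box. So there is nothing to compare against, and your short counting argument is a perfectly acceptable justification.

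One remark on the interpretive point you raise at the end. You resolve the ambiguity by reading $\chi(S_1)+\chi(S_2)$ as a coordinate-wise sum in $\{0,1,2\}^n$ with Hamming weight counting nonzero positions. That reading makes your proof go through cleanly, but it is not quite the convention the paper (and the cited sources) actually use: in the proof of \Cref{thm:dk-2n} the polynomials are univariate with integer exponents (``degree at most $2^n$''), so $x^{\chi(S_1)}\cdot x^{\chi(S_2)} = x^{\chi(S_1)+\chi(S_2)}$ with ordinary integer addition, and $\mathcal{H}$ counts the $1$s in the binary expansion of that integer. Under this reading the lemma is the standard fact that $\mathcal{H}(a+b)\le \mathcal{H}(a)+\mathcal{H}(b)$ for non-negative integers, with equality iff the binary representations of $a$ and $b$ have disjoint supports (equivalently, no carries occur). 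The forward direction is immediate; the converse follows since a carry at any bit turns two $1$s into a single $1$ (possibly propagating), so at least one $1$ is lost. Your argument and this one are morally the same, but it is worth being aware that your ``nonzero-coordinate'' reading is not literally what is used downstream.
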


	\begin{proof}[Proof of \Cref{thm:dk-2n}]
		We define the polynomials as follows. We first construct the Type-$1$ polynomial, in which for each $s\in [n]$, $\alpha \in [\bar{u}]$, and $\beta \in [\mathtt{b}]$, the non-zero polynomial $P^{1}_{s, \alpha, \beta} (x)$ denotes that there exists an $s$-sized subset of voters $Y\subseteq \X{V}$ corresponding to which there is an item $p\in \X{P}$ such that $\mathtt{c}(p) = \beta$ and $\mathtt{util}_Y(p)=\alpha$:
		$$P^{1}_{s, \alpha, \beta} (x) = \sum_{\substack{
				Y \subseteq \X{V}:~ |Y| = s \\
				\exists p \in \X{P}:~ \mathtt{util}_Y(p) = \alpha \\
				c(p) = \beta \le \mathtt{b}
		}} x^{\chi(Y)}.$$

		Next, for every $s\in [n]$, $\alpha \in [\bar{u}]$, $\beta \in [\mathtt{b}]$, and $j\in [2,n]$, we iteratively define the Type-$j$ polynomial as follows: $$P^{j}_{s, \alpha, \beta} (x) = \sum_{\substack{
				s_1, s_2 \in [n] :~ s_1 + s_2 = s \\
				\alpha_1, \alpha_2 \in [\bar{u}] :~ \alpha_1 + \alpha_2 = \alpha \\
				\beta_1, \beta_2 \in [\mathtt{b}]:~ \beta_1 + \beta_2 = \beta
		}} \mathcal{R}(\mathcal{H}_{s}(P^{1}_{s_1, \alpha_1, \beta_1} \times P^{j-1}_{s_2, \alpha_2, \beta_2})).$$

		A non-zero polynomial $P^{j}_{s, \alpha, \beta} (x)$ denotes that there exists $j$ disjoint voter subsets $Y_1,\ldots,Y_j$ such that $|Y_1|+\ldots+|Y_j|=s$, and there exists items $p_1,\ldots,p_j$ such that $\sum_{i\in [j]}\mathtt{util}_{Y_i}(p_i) = \alpha$ and $\sum_{i\in [j]}\mathtt{c}(p_i) = \beta \leq \mathtt{b}$. Among all polynomials, we check if for some $\alpha \ge \mathtt{u_{tgt}}$ and $\beta \leq \mathtt{b}$, the polynomial $P_{n, \alpha, \beta}^{k}$ is non-zero. If so, we return ``YES", otherwise we return ``NO". We prove the correctness of the algorithm in \Cref{proof-omit}.

		Note that the total number of polynomials is at most $n\bar{u}\mathtt{b}$.
		A polynomial has at most $2^n$ monomials, each of which has degree at most $2^n$. Given two polynomials of degree $2^n$, we can compute their product in time $\mathcal{O}(2^n n)$ \cite{moenck1976practical}. Hence, the running time is %
		$\mathcal{O}(2^n\ \text{poly}(m,n,\bar{u},\mathtt{b}))$.
	\end{proof}

	\noindent\textbf{FPTAS for \dk.}\label{approx} Due to the submodularity, \dk admits a factor $(1-\nicefrac{1}{e})$-approximation algorithm~\cite{fluschnik2019fair,sviridenko2004note}. Here, we give an \fptas when the profile is either SP or SC.

	The idea is to first scale down the utilities, round-off, and then use the known polynomial time algorithms for SP and SC profile under unary utilities~\cite{fluschnik2019fair}. %
	\begin{proposition}\label{prop:dk-sp}
		\label{prop:SPSCpolytime}
		\cite{fluschnik2019fair} \dk can be solved in time $\text{poly}(n,m,\hat{u})$, when the utility profile is encoded in unary, and is either SP or SC.
	\end{proposition}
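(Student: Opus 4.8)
The plan is to solve \dk by a dynamic program that exploits the geometry of the single‑peaked (resp.\ single‑crossing) domain, with the unary encoding of the utilities keeping the table polynomial. Two things to recall: for $\lambda=1$ the satisfaction of a voter $v$ from a bundle $\X{Z}$ is just $\max_{p\in\X{Z}}\util_v(p)$, and there is always an optimum in which every selected item is the representative of some voter (otherwise delete it).

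\emph{Single‑peaked case.} Relabel the items so that the single‑peaked order is $p_1\triangleleft p_2\triangleleft\cdots\triangleleft p_m$. The key structural fact is that for any bundle $\X{Z}$ and voter $v$, the representative of $v$ is one of the two items of $\X{Z}$ that are $\triangleleft$‑closest to $\mathtt{top}_v$ (the nearest one weakly to its left and the nearest one weakly to its right), since utility only drops as one moves away from the peak. Hence, once two $\triangleleft$‑consecutive selected items $p_i\triangleleft p_{i'}$ are fixed, the total contribution of the voters with peak in $(p_i,p_{i'}]$ equals the precomputable quantity $\sum_{v:\, p_i\prec\mathtt{top}_v\preceq p_{i'}}\max(\util_v(p_i),\util_v(p_{i'}))$. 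I would therefore keep a table $D[i][u]$ = the minimum cost of a subset $S\subseteq\{p_1,\dots,p_i\}$ with $p_i\in S$ for which the voters with $\mathtt{top}_v\preceq p_i$ already contribute exactly $u$; the base case sets $D[i_1]\big[\sum_{v:\,\mathtt{top}_v\preceq p_{i_1}}\util_v(p_{i_1})\big]=\mathtt{c}(p_{i_1})$ for each choice of leftmost selected item $p_{i_1}$, the transition goes from $D[i][u]$ to $D[i'][u+\Delta]$ by adding $p_{i'}$ with $i'>i$ and $\Delta$ the quantity above, and the instance is a yes‑instance iff some $D[i][u]\le\budget$ with $u+\sum_{v:\,\mathtt{top}_v\succ p_i}\util_v(p_i)\ge\mathtt{u_{tgt}}$. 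With $u$ ranging over $\{0,\dots,\hat u\}$ this has $O(m\hat u)$ states and runs in $\mathrm{poly}(n,m,\hat u)$.

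\emph{Single‑crossing case.} Here I would run the analogous dynamic program along the single‑crossing voter order $\sigma=(v_1,\dots,v_n)$, using the structural fact that single‑crossing profiles admit an optimum in which each selected item is the representative of a contiguous block of voters in $\sigma$, so that the selected items induce a partition of $\sigma$ into intervals. The table would be indexed by a prefix $v_1,\dots,v_i$ and the accumulated satisfaction $u\le\hat u$, storing the minimum cost of a bundle together with a partition of $v_1,\dots,v_i$ into contiguous blocks, each block $B$ served by an item $p$ of the bundle and contributing $\sum_{v\in B}\util_v(p)$ to $u$; the transition opens the next block and picks its item. I expect the main obstacle to be the structural lemma itself: one needs an exchange argument showing that if $v_a$ and $v_c$ with $a<c$ are both best served by $p$ then so is every $v_b$ between them (hence voters sharing a representative already form an interval), and then that a single item may be assumed to serve only one interval (otherwise two non‑adjacent blocks assigned $p$, together with a different item $q$ strictly better for the voters in between, would violate single‑crossingness for the pair $\{p,q\}$). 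Granting the lemma, correctness of the DP follows because it never over‑counts a voter's satisfaction and counts the optimal configuration exactly; and $u\le\hat u$ again keeps the running time $\mathrm{poly}(n,m,\hat u)$. The DP skeletons and the size bounds are routine; the single‑crossing structural argument is the delicate part.
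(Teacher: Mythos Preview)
The paper does not prove this proposition at all: it is quoted verbatim from Fluschnik et al.\ \cite{fluschnik2019fair} and used as a black box inside the \fptas of \Cref{thm:dk-fptas}. So there is no ``paper's own proof'' to compare against; what you have written is a reconstruction of the cited result.

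That said, your reconstruction is sound in outline. For the SP case your DP along the axis $\triangleleft$, indexed by the rightmost selected item and the accumulated utility $u\le\hat u$, is exactly the right shape; the crucial observation that a voter's representative is one of the two selected items flanking her peak is correct and makes the transition local. For the SC case, the structural lemma you isolate (voters best served by a fixed item form a contiguous block of $\sigma$) is indeed the heart of the matter, and your contradiction argument via the pair $\{p,q\}$ is the standard way to obtain it from the definition of single-crossing; once that lemma is in hand, the interval DP over prefixes of $\sigma$ with a utility coordinate bounded by $\hat u$ goes through. One small point worth tightening in the SC part: your DP may in principle assign the same item to two different blocks and pay its cost twice, which only hurts the objective, so optimality is preserved---but you should say this explicitly rather than relying on the structural lemma to rule it out, since the lemma speaks about optimal solutions, not about every table entry.
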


	Let $u_{max}= \max_{v \in \X{V}, p \in \X{P}} \mathtt{util}_{v}(p)$, the maximum utility that a voter assigns to an item.  Let $0<\epsilon \le 1$ be the error parameter. We scale down the utility of every voter for every item by a factor of $s$, where $s=(\nicefrac{\epsilon}{2n})u_{max}$, as follows: $\overline{\mathtt{util}}_v(p)=\lceil \nicefrac{\mathtt{util}_v(p)}{s} \rceil$. Next, we round the utilities as follows: $\widetilde{\mathtt{util}}_v(p)=s \cdot \overline{\mathtt{util}}_v(p)$. Clearly,
	\begin{equation}\label{fptas:eq1}
	\mathtt{util}_v(p) \leq \widetilde{\mathtt{util}}_v(p) \leq \mathtt{util}_v(p)+s.
	\end{equation}
	For simplicity of analysis, we assume that $s$ is an integer. From now on, by $\mathtt{util}(S)$, we denote $\sum_{v \in \X{V}} \max_{p \in S} \mathtt{util}_v(p)$. Next, we show that the  total utility of the optimal bundle under the utilities $\widetilde{\mathtt{util}}$ is not very far from the total utility of the optimal bundle under the utilities ${\mathtt{util}}$. In particular, we prove the following lemma, where $\widetilde{\cal I} = (\X{P},\X{V},\mathtt{c},\{\widetilde{\mathtt{util}}_v\}_{v\in \X{V}}, \mathtt{b})$.
	\begin{lemma}[$\dagger$]\label{lem:fptas-rounding}
		Let $S^\star$ be an optimal solution to $\widetilde{\cal I}$. Let $S$ be any subset of $\X{P}$ such that $\sum_{p\in S}\mathtt{c}(p)\leq \mathtt{b}$. Then, $\mathtt{util}(S)\leq (1+\epsilon)~\mathtt{util}(S^\star)$.
	\end{lemma}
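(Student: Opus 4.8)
The plan is a standard scaling-and-rounding \fptas analysis, built entirely on the two-sided bound \eqref{fptas:eq1}. Write $\mathtt{util}(S)=\sum_{v\in\X{V}}\max_{p\in S}\mathtt{util}_v(p)$ (as already agreed in the text) and, analogously, $\widetilde{\mathtt{util}}(S)=\sum_{v\in\X{V}}\max_{p\in S}\widetilde{\mathtt{util}}_v(p)$. First I would record the pointwise consequences of \eqref{fptas:eq1}: for every nonempty bundle $S$ and every voter $v$, taking the maximum over $p\in S$ on all three sides of $\mathtt{util}_v(p)\le\widetilde{\mathtt{util}}_v(p)\le\mathtt{util}_v(p)+s$ gives $\max_{p\in S}\mathtt{util}_v(p)\le\max_{p\in S}\widetilde{\mathtt{util}}_v(p)\le\max_{p\in S}\mathtt{util}_v(p)+s$; summing over the $n$ voters yields the sandwich $\mathtt{util}(S)\le\widetilde{\mathtt{util}}(S)\le\mathtt{util}(S)+ns$, which also holds (trivially, both sides zero) for empty bundles and in particular applies to $S^\star$. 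Since $ns=n\cdot(\epsilon/2n)u_{max}=(\epsilon/2)u_{max}$, rounding perturbs every objective value by an additive term of at most $(\epsilon/2)u_{max}$.

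Next I would convert this additive error into a multiplicative one by lower-bounding $\mathtt{util}(S^\star)$. Let $v^\star\in\X{V}$ and $p^\star\in\X{P}$ satisfy $\mathtt{util}_{v^\star}(p^\star)=u_{max}$. Because every item costs at most the budget, the singleton $\{p^\star\}$ is a feasible bundle for $\widetilde{\cal I}$, and $\widetilde{\mathtt{util}}(\{p^\star\})\ge\mathtt{util}(\{p^\star\})\ge\mathtt{util}_{v^\star}(p^\star)=u_{max}$. As $S^\star$ is optimal for $\widetilde{\cal I}$, this forces $\widetilde{\mathtt{util}}(S^\star)\ge u_{max}$; combining with the sandwich, $\mathtt{util}(S^\star)\ge\widetilde{\mathtt{util}}(S^\star)-ns\ge u_{max}-(\epsilon/2)u_{max}=(1-\epsilon/2)u_{max}\ge u_{max}/2$, using $\epsilon\le 1$. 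Hence $ns=(\epsilon/2)u_{max}\le\epsilon\,\mathtt{util}(S^\star)$.

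Finally I would chain the inequalities. Since $S$ is feasible for $\widetilde{\cal I}$ (same costs and budget) and $S^\star$ is optimal, $\widetilde{\mathtt{util}}(S)\le\widetilde{\mathtt{util}}(S^\star)$, so $\mathtt{util}(S)\le\widetilde{\mathtt{util}}(S)\le\widetilde{\mathtt{util}}(S^\star)\le\mathtt{util}(S^\star)+ns\le(1+\epsilon)\,\mathtt{util}(S^\star)$, which is the claim. The degenerate case $u_{max}=0$ can simply be excluded up front (then all utilities vanish and every feasible bundle is optimal). The only step carrying any content is the lower bound $\mathtt{util}(S^\star)\ge u_{max}/2$, which is exactly what lets the additive slack $ns=(\epsilon/2)u_{max}$ be absorbed into $\epsilon\,\mathtt{util}(S^\star)$; everything else is routine $\max$-monotonicity bookkeeping around \eqref{fptas:eq1}.
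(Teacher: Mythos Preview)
Your proof is correct and follows essentially the same route as the paper's: both derive the sandwich $\mathtt{util}(\cdot)\le\widetilde{\mathtt{util}}(\cdot)\le\mathtt{util}(\cdot)+ns$ from \eqref{fptas:eq1}, use optimality of $S^\star$ for $\widetilde{\cal I}$ together with the feasible singleton achieving $u_{max}$ to lower-bound $\mathtt{util}(S^\star)$, and then absorb the additive slack $ns=(\epsilon/2)u_{max}$ into $\epsilon\,\mathtt{util}(S^\star)$. Your justification of $\widetilde{\mathtt{util}}(S^\star)\ge u_{max}$ via the feasible singleton $\{p^\star\}$ is in fact a bit more explicit than the paper's corresponding step, but the arguments are otherwise identical.
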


	In light of \Cref{lem:fptas-rounding}, our goal is reduced to finding an optimal solution for $\widetilde{\cal I}$. Note that for any $v\in \X{V}$ and $p\in \X{P}$, $\widetilde{\mathtt{util}_v(p)}$ and $\overline{\mathtt{util}}_v(p)$ differ by a factor of $s$. Thus, the utility of optimal solutions under these functions differ by a factor of $s$. Thus, our goal is reduced to finding an optimal solution for $\overline{\cal I} = (\X{P},\X{V},\mathtt{c},\{\overline{\mathtt{util}}_v\}_{v\in \X{V}},\mathtt{b})$. Note that by scaling the utilities, the profile still remains SP or SC. Thus, we can solve $\overline{\cal I}$ using \Cref{prop:dk-sp} in polynomial time. The running time is due to the fact that for any voter, the utility for any item under $\overline{\mathtt{util}}$ is at most $\nicefrac{2n}{\epsilon}$. Thus, we have the following theorem.
	\begin{theorem}\label{thm:dk-fptas}
		There exists an \fptas for \dk when the utility profile is either SP or SC.
	\end{theorem}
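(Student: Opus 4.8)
The plan is to wrap the scaling-and-rounding scheme set up above around the exact polynomial-time algorithm of \Cref{prop:dk-sp}. Fix the error parameter $\epsilon \in (0,1]$. The algorithm is: compute $u_{max}$, set $s=(\nicefrac{\epsilon}{2n})\,u_{max}$, form the rounded utilities $\overline{\util}_v(p)=\lceil \nicefrac{\util_v(p)}{s}\rceil$ for every $v\in\X{V}$ and $p\in\X{P}$, solve the resulting instance $\overline{\cal I}$ with \Cref{prop:dk-sp}, and return the bundle $S^\star$ it outputs. The first thing I would verify is that $\overline{\cal I}$ remains SP (resp.\ SC) whenever ${\cal I}$ is: the map $x\mapsto\lceil x/s\rceil$ is monotone non-decreasing, so for each voter it preserves the weak preference order over items; hence the single-peaked ordering $\triangleleft$ on $\X{P}$ (resp.\ the single-crossing ordering $\sigma$ on $\X{V}$) that witnesses the restriction for ${\cal I}$ still witnesses it for $\overline{\cal I}$, the only effect of rounding being that some strict preferences may collapse into ties, which is harmless since both definitions are phrased with weak preference.

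Next I would establish the running time. Every entry of $\overline{\util}$ is at most $\lceil u_{max}/s\rceil=\lceil\nicefrac{2n}{\epsilon}\rceil$, so $\hat u':=\sum_{p\in\X{P}}\sum_{v\in\X{V}}\overline{\util}_v(p)=\OO(nm/\epsilon)$. Thus $\overline{\cal I}$ has polynomially bounded utilities, and feeding it to \Cref{prop:dk-sp} runs in $\text{poly}(n,m,\hat u')=\text{poly}(n,m,\nicefrac{1}{\epsilon})$ time — precisely what is needed for an \fptas.

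For correctness of the approximation ratio, the real work has already been isolated into \Cref{lem:fptas-rounding} (together with \eqref{fptas:eq1}). Since $\widetilde{\util}_v(p)=s\cdot\overline{\util}_v(p)$, the satisfaction of every bundle is scaled by the single global constant $s$ when passing between $\overline{\cal I}$ and $\widetilde{\cal I}$; therefore $S^\star$, being optimal for $\overline{\cal I}$, is also optimal for $\widetilde{\cal I}$. Applying \Cref{lem:fptas-rounding} with this $S^\star$ and with $S$ taken to be a genuine optimum of ${\cal I}$ (which is feasible, $\sum_{p\in S}\mathtt{c}(p)\le\budget$) yields $\util(S)\le(1+\epsilon)\,\util(S^\star)$, i.e.\ $S^\star$ is a $(1+\epsilon)$-approximate solution of ${\cal I}$. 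Combining this with the running-time bound gives the \fptas.

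I expect essentially no obstacle at the level of the theorem: the one genuinely substantive argument, the rounding inequality of \Cref{lem:fptas-rounding}, is proved separately, and what remains is assembly plus two bookkeeping points the text already flags. Namely, one should drop the convenience assumption that $s$ is an integer — this is costless because $\overline{\util}$ is integral by the ceiling regardless of $s$, and \eqref{fptas:eq1} holds for arbitrary real $s>0$ after multiplying $\util_v(p)/s\le\overline{\util}_v(p)\le\util_v(p)/s+1$ through by $s$ — and one should dispose of the degenerate case $u_{max}=0$, in which every bundle has satisfaction $0$ and any feasible bundle is already optimal.
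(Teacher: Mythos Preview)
Your proposal is correct and follows essentially the same approach as the paper: scale the utilities by $s=(\epsilon/2n)u_{max}$, round up, observe that the SP/SC structure is preserved, solve the rounded instance exactly via \Cref{prop:dk-sp}, and invoke \Cref{lem:fptas-rounding} for the approximation guarantee. You actually supply a little more detail than the paper does (an explicit justification for why SP/SC survives the monotone rounding, and the disposal of the non-integer-$s$ and $u_{max}=0$ corner cases), but the argument is the same.
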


	\section{Conclusion}\label{conclusion}
	
	In this paper, we studied the computational and parameterized complexity of three multiagent variants of the knapsack problem. For \dk, we presented improved \fpt algorithms parameterized by $n$, an {\sf FPTAS} under SP and SC restrictions, and algorithms for special cases. Additionally, we gave some hardness results and algorithms for \muk and \buk.

	\dk is \nph even for SPPs, as shown in \cite{fluschnik2019fair}. Further, for unary encoded utilities, \dk can be solved in polynomial time for SPP and SCP. An open question is the complexity of \muk/\buk with SPP for $\lambda\geq 2$. In this paper, we studied two aggregation rules. One can study multiagent knapsack with other combinations of preference elicitation schemes and aggregation rules.

	\bibliographystyle{splncs04}
	\bibliography{references-AAMAS'22.bib, Voting.bib}
	\clearpage
	\appendix

	\section*{ \center\huge{Appendix}}
	
	\section{Details Omitted From The Main Text}
	
	\paragraph{\bf Algorithmic concepts.} A central notion in parameterized complexity is \emph{fixed-parameter tractability}. A parameterized problem $L \subseteq \Sigma^* \times \mathbb{N}$ is \emph{fixed-parameter tractable} (\fpt) with respect to the parameter $k$ (also denoted by {\sf{FPT}}($k$)), if for a given instance $(x,k)$, its membership in $L$ (i.e., $(x,k) \in L$) can be decided in time $f(k) \cdot {\sf poly}(|x|)$, where $f(\cdot)$ is an arbitrary computable function and ${\sf poly}(\cdot)$ is a polynomial function. However, all parameterized problems are not $\fpt$. Contrastingly, $\W$-hardness, captures the intractability in parameterized complexity. An \XP algorithm for $L$ with respect to $k$ can decide $(x,k) \in L$ in time $|x|^{f(k)}$, where $f(\cdot)$ is an arbitrary computable function.

	In some of our algorithms, we use the tool of a {\it reduction rule}, defined as a rule applied to the given instance of a problem to produce another instance of the same problem. A reduction rule is said to be {\it safe} if it is sound and complete, i.e., applying it to the given instance produces an equivalent instance. We refer the reader to books~\cite{downey2013fundamentals,cygan2015parameterized,DBLP:books/ox/Niedermeier06}.
	A central notion in the field of approximation algorithms is the {\it fully polynomial-time approximation scheme} (FPTAS). It is an algorithm that takes as input an instance of the problem and a parameter $\epsilon > 0$. It returns as output a solution whose value is at least $(1-\epsilon)$ (resp. $(1+\epsilon)$) times the optimal solution if it is a maximization (minimization) problem and runs in time polynomial in the input size and $1/\epsilon$.
	
	\section{Proofs Omitted From The Main Text} \label{proof-omit}
	
	\subsection{Details of \Cref{thm:im-sc-np-hardness}}
	
	\begin{claim}\label{clm:nph-sc-profile}
		The ordering $\sigma$ makes $\cal{I}'$ an SCP.
	\end{claim}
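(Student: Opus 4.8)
The plan is to verify the single-crossing condition straight from its definition, using the ordering $\sigma=(v_1,\ldots,v_n)$ inherited from ${\cal I}$, via a case analysis on whether each of the two items being compared lies in $\X{P}$ (``old'') or in $\X{P}_{new}$ (``new''). First I would record the two structural facts the construction guarantees. (i) For every new item $p_j\in\X{P}_{new}$ and \emph{every} voter $v$ we have $\mathtt{util}'_v(p_j)=u_{max}+j-m\ge u_{max}+1$; in particular this value does not depend on $v$, and it strictly exceeds $\mathtt{util}'_v(p')=\mathtt{util}_v(p')\le u_{max}$ for every old item $p'$ and every $v$. (ii) For two old items $p,p'$ the functions $\mathtt{util}'$ and $\mathtt{util}$ coincide.

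Then I would fix an arbitrary pair $\{p,p'\}\subseteq\X{P}'$ and show that $\{v\in\X{V}:\mathtt{util}'_v(p)\ge\mathtt{util}'_v(p')\}$ forms a consecutive block under $\sigma$ in each of four cases. If both $p$ and $p'$ are old, this set equals $\{v:\mathtt{util}_v(p)\ge\mathtt{util}_v(p')\}$ by fact (ii), which is a consecutive block precisely because ${\cal I}$ is an SCP witnessed by $\sigma$. If both are new, both sides of the inequality are constants in $v$ by fact (i), so the set is either all of $\X{V}$ or empty, and both are (degenerate) consecutive blocks. If $p$ is new and $p'$ is old, fact (i) makes the inequality hold for every $v$, so the set is $\X{V}$; symmetrically, if $p$ is old and $p'$ is new it holds for no $v$, so the set is $\emptyset$. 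Since the identical argument applies with the roles of $p$ and $p'$ exchanged, $\sigma$ witnesses that ${\cal I}'$ is an SCP.

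I do not expect a real obstacle here; the only points needing care are bookkeeping on the definition — checking that the ``consecutive block'' requirement is exactly what the four cases verify, including the degenerate blocks $\X{V}$ and $\emptyset$, and covering both orderings of the pair so that ties among utilities cause no issue — and making explicit that the threshold $u_{max}$ is the maximum over the \emph{original} utilities $\{\mathtt{util}_v\}_{v\in\X{V}}$, which is exactly why every new item strictly dominates every old item for every voter.
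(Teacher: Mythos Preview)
Your proposal is correct and follows essentially the same approach as the paper: a case analysis over whether each item of the pair is ``old'' (in $\X{P}$) or ``new'' (in $\X{P}_{new}$), using that new items have voter-independent utilities strictly above $u_{max}$ while old items retain their original utilities. Your write-up is in fact somewhat more explicit than the paper's, spelling out all four cases and the degenerate blocks $\X{V}$ and $\emptyset$, whereas the paper compresses this by orienting each pair so that $v_1$ prefers $p$ to $p'$ and noting the block is $\{v_1,\ldots,v_t\}$ for some $t$.
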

	\begin{proof}
		Consider two items $p,p' \in \X{P}$ such that $v_1$ prefers $p$ over $p'$. Since $\X{P}$ is the set of items in ${\cal I}$ and $\sigma$ is an SC ordering for ${\cal I}$, by the definition of $\mathtt{util}'$, it follows that there exists $t \in [n]$, such that the set of all the voters who prefer $p$ over $p'$ is $\{v_1,\ldots,v_t\}$. Consider an item $p\in \X{P}$ and an item $p' \in \X{P}_{new}$. Since, all the voters prefer $p'$ over $p$, we have $t = n$. Similarly, for $p,p' \in \X{P}_{new}$ such that $v_1$ prefers $p$ over $p'$, we have $t=n$.
	\end{proof}

	\begin{claim}\label{clm:nph-correctness}
		${\cal I}$ is a yes-instance of \dk if and only if ${\cal I}^\prime$ is a yes-instance of \mpb.
	\end{claim}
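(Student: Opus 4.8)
The plan is to prove the equivalence by first recording, once and for all, two features of the construction that do all the work: (i) every voter ranks all of $\X{P}_{new}$ strictly above all of $\X{P}$, since $\mathtt{util}'_v(p)>u_{max}\ge\mathtt{util}_v(p')$ for every $p\in\X{P}_{new}$ and $p'\in\X{P}$; and (ii) each item of $\X{P}_{new}$ has cost $1$ while $\mathtt{b}'=\mathtt{b}+(\lambda-1)=\mathtt{b}+|\X{P}_{new}|$. Since we have assumed $\mathtt{u_{tgt}}>0$, any bundle witnessing a yes-answer (on either side) is nonempty, and under the median rule must contain at least $\lambda$ items.

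For the forward direction I would take a \dk-witness $\X{Z}\subseteq\X{P}$ (so $\mathtt{c}(\X{Z})\le\mathtt{b}$, $\sum_v\median{v}{1}(\X{Z})\ge\mathtt{u_{tgt}}$, and hence $\X{Z}\neq\emptyset$) and set $\X{Z}'=\X{Z}\cup\X{P}_{new}$. Then $|\X{Z}'|\ge\lambda$, and $\mathtt{c}'(\X{Z}')=\mathtt{c}(\X{Z})+(\lambda-1)\le\mathtt{b}'$ by (ii); by (i) the top $\lambda-1$ items of $\X{Z}'$ in each voter's order are exactly those of $\X{P}_{new}$, so the representative of $v$ in $\X{Z}'$ is $v$'s most preferred item of $\X{Z}$. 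Hence $\median{v}{\lambda}(\X{Z}')=\median{v}{1}(\X{Z})$ for every $v$, and summing shows $\X{Z}'$ witnesses a yes-answer for \mpb.

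For the converse I would massage an arbitrary \mpb-witness $\X{Z}'$ into one that contains all of $\X{P}_{new}$, at no cost to feasibility or total satisfaction, and then restrict to $\X{P}$. The key step is a swap: while $\X{Z}'$ omits some $p^\ast\in\X{P}_{new}$, it has at least $\lambda$ items and at most $\lambda-2$ of them in $\X{P}_{new}$, hence at least two in $\X{P}$; delete one such item $q$ and insert $p^\ast$. By (ii) the cost cannot rise, and, per voter, the effect on satisfaction is governed by the elementary fact that for a multiset $A$ of reals with $|A|\ge r\ge 2$ and $q\in A$, the $(r{-}1)$-st largest element of $A\setminus\{q\}$ is at least the $r$-th largest of $A$ (applied with $A$ the multiset of $v$'s utilities on $\X{Z}'\cap\X{P}$ and $r=\lambda-|\X{Z}'\cap\X{P}_{new}|\ge 2$): no voter's satisfaction drops. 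Iterating yields $\X{Z}^\ast\supseteq\X{P}_{new}$ with $\mathtt{c}'(\X{Z}^\ast)\le\mathtt{b}'$, $\X{Z}^\ast\cap\X{P}\neq\emptyset$, and $\sum_v\median{v}{\lambda}(\X{Z}^\ast)\ge\mathtt{u_{tgt}}$. Setting $\X{Z}=\X{Z}^\ast\cap\X{P}$ then gives $\mathtt{c}(\X{Z})\le\mathtt{b}'-(\lambda-1)=\mathtt{b}$ and, exactly as in the forward direction, $\median{v}{1}(\X{Z})=\median{v}{\lambda}(\X{Z}^\ast)$ for each $v$, so $\X{Z}$ witnesses a yes-answer for \dk.

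The main obstacle is precisely this converse direction: an \mpb-witness need not literally contain $\X{P}_{new}$ (a bundle of $\lambda$ cheap items of $\X{P}$ can already meet the target), so the heart of the argument is the swap step together with the order-statistics inequality that certifies it never hurts. Everything else — the two structural features, the cost bookkeeping, and the identity $\median{v}{\lambda}(\X{Z}\cup\X{P}_{new})=\median{v}{1}(\X{Z})$ — is routine.
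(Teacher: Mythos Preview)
Your proof is correct and follows essentially the same strategy as the paper. The forward direction is identical. For the converse, both you and the paper normalize an arbitrary \mpb-witness into one containing all of $\X{P}_{new}$ and then restrict to $\X{P}$; the only difference is mechanical: the paper does the normalization in one shot (remove the $\lambda-1$ cheapest items of $S'$ and add $\X{P}_{new}$, arguing that the max over what remains is at least the original $\lambda$th largest), whereas you do it by iterated single swaps governed by the order-statistics inequality. The two arguments are equivalent in content.
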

	\begin{proof}
		Let $S$ be a solution for ${\cal I}$. We claim that $S^\prime = S \cup \X{P}_{new}$ is a solution of ${\cal I}^\prime$. Note that any of the $\lambda-1$ items in $\X{P}_{new}$ cannot contribute to the satisfaction of any voter, i.e., the utility of these items do not contribute to $\sum_{v \in \X{V}} \mathtt{sat}^{\lambda}_{v}(S)$. Thus, for every voter $v$, the $\lambda^{\text{th}}$ most preferred item in $S'$ is the same as the most preferred item in $S$, a solution for ${\cal I}$. Therefore, $\sum_{v\in \X{V}}\mathtt{sat}^{\lambda}_{v}(S') \geq \mathtt{u_{tgt}}$.   %
		Since, the cost of picking $\X{P}_{new}$ is $\lambda - 1$, the cost of $S'$ is at most $\mathtt{b} + \lambda-1 = \mathtt{b}^\prime$.
		
		In the backward direction, let $S^\prime \subseteq \X{P}^\prime$ be a solution for ${\cal I}^\prime$. Since $\mathtt{u_{tgt}} > 0$, $S^\prime$ contains at least $\lambda$ items. Let $S_{cheap}$ be the set of $\lambda - 1$ cheapest items in $S^\prime$. Let $\bar{S} = (S^\prime \setminus S_{cheap}) \cup \X{P}_{new}$. Clearly, the cost of $\bar{S}$ is at most the cost of $S'$ since the cost of every item in $\X{P}_{new}$ is one, and the cost of every item in $S_{cheap}$ is at least one. By adding $\X{P}_{new}$, we have added $\lambda-1$ highest preferred items for all voters, in $\bar{S}$. Then, for each voter, the $\lambda^{th}$ most preferred item in $\bar{S}$ is from $S' \setminus S_{cheap}$. Therefore, the utility of $\bar{S}$ is at least as much as the utility of $S'$. We claim that $S = \bar{S} \setminus \X{P}_{new}$ is a solution to ${\cal I}$. Clearly, $S$ is a subset of $\X{P}$. Furthermore, the cost of $S$ is at most $\mathtt{b}^\prime - (\lambda-1) = \mathtt{b}$, and as argued in the forward direction, the total utility of $S$ is at least $\mathtt{u_{tgt}}$.
	\end{proof}

	\subsection{Details of \Cref{lem:fpt-wmsc}}
	
	\begin{claim}\label{clm:dp-wmsc}
		\Cref{eq1:dp-wmsc,eq2:dp-wmsc} compute $T[X,j]$ correctly for all $X\in {\cal R}$ and $j\in [0,m]$.
	\end{claim}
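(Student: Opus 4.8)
The plan is to prove the claim by induction on $j$, establishing that $T[\overrightarrow{X},j]$ equals the minimum total cost of a subfamily of $\{F_1,\ldots,F_j\}$ that covers each $u_i$ at least $\overrightarrow{X}_i$ times, with the value $\infty$ recording that no such subfamily exists. Before the induction I would note two elementary facts used throughout: first, $\overrightarrow{X}-\chi(F_j)$ always lies in ${\cal R}$, since each of its coordinates stays in $[0,k]$, so every entry referenced on the right-hand side of \Cref{eq2:dp-wmsc} is well defined; second, each $F_j$ covers any given element either once or not at all, so adjoining $F_j$ to a subfamily increases the coverage count of $u_i$ by exactly $\chi(F_j)_i$.

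For the base case $j=0$, the only subfamily of $\{F_1,\ldots,F_0\}=\emptyset$ is the empty one, which covers everything zero times; hence a feasible subfamily exists precisely when $\overrightarrow{X}=\{0\}^n$, and then it has cost $0$, which is exactly \Cref{eq1:dp-wmsc}.

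For the inductive step I would fix $j\ge 1$ and $\overrightarrow{X}\in{\cal R}$, assume correctness for $j-1$, and prove the two inequalities separately. For the upper bound ($\le$) I would exhibit feasible subfamilies attaining each of the two terms in the minimum of \Cref{eq2:dp-wmsc}: a min-cost $\overrightarrow{X}$-feasible subfamily of $\{F_1,\ldots,F_{j-1}\}$ is still $\overrightarrow{X}$-feasible over $\{F_1,\ldots,F_j\}$; and a min-cost $(\overrightarrow{X}-\chi(F_j))$-feasible subfamily of $\{F_1,\ldots,F_{j-1}\}$ becomes $\overrightarrow{X}$-feasible after $F_j$ is added, at extra cost $\mathtt{c}(F_j)$. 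For the lower bound ($\ge$) I would take an optimal $\overrightarrow{X}$-feasible subfamily ${\cal G}^\star$ over $\{F_1,\ldots,F_j\}$ and split on whether $F_j\in{\cal G}^\star$: if not, ${\cal G}^\star$ is an $\overrightarrow{X}$-feasible subfamily of $\{F_1,\ldots,F_{j-1}\}$ and the induction hypothesis bounds its cost below by $T[\overrightarrow{X},j-1]$; if so, ${\cal G}^\star\setminus\{F_j\}$ covers $u_i$ at least $\overrightarrow{X}_i-\chi(F_j)_i$ times, hence — coverage counts being nonnegative — at least $(\overrightarrow{X}-\chi(F_j))_i$ times, so it is $(\overrightarrow{X}-\chi(F_j))$-feasible and ${\cal G}^\star$ costs at least $\mathtt{c}(F_j)+T[\overrightarrow{X}-\chi(F_j),j-1]$. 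The case where no $\overrightarrow{X}$-feasible subfamily exists is handled by both sides being $\infty$ via the induction hypothesis.

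The only step needing genuine care — and hence the main obstacle, such as it is — is the truncated vector difference: one must check that replacing $\overrightarrow{X}_i-\chi(F_j)_i$ by $\max(0,\overrightarrow{X}_i-\chi(F_j)_i)$ neither loses nor gains feasibility, i.e. that an optimal solution never needs to over-cover an element. In the $\le$ direction this follows from the identity $\max(0,a-b)+b\ge a$ (with $b=\chi(F_j)_i\in\{0,1\}$), and in the $\ge$ direction from the nonnegativity of coverage counts; but it is precisely the place where the clamp in the definition of vector difference is what makes \Cref{eq2:dp-wmsc} correct rather than an over- or under-count.
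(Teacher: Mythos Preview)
Your proposal is correct and follows essentially the same route as the paper: induction on $j$, the trivial base case, and then the two-sided inequality established by case-splitting an optimal subfamily on whether $F_j$ is used. If anything, you are more careful than the paper in checking that $\overrightarrow{X}-\chi(F_j)\in{\cal R}$ and in justifying that the clamp $\max(0,\cdot)$ in the vector difference preserves feasibility in both directions; the paper's proof leaves these points implicit.
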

	\begin{proof}
		If $j=0$ and $\overrightarrow{X}=\{0\}^n$, then the optimal solution is the empty set of cost zero. If $j=0$ and $\overrightarrow{X} \ne \{0\}^n$, then $T[\overrightarrow{X}, 0] = \infty$, since no element of $U$ can be covered without picking any set. Thus, the base case is correct. Next, we prove the correctness of \Cref{eq2:dp-wmsc} by proving the inequalities in both directions. For a vector $\overrightarrow{X}\in {\cal R}$, if a set ${\cal F}^\prime \subseteq \{F_1,\ldots,F_j\}$ covers $u_i$ at least $\overrightarrow{X}_i$ times for each $i \in [n]$, then we say that ${\cal F}^\prime$ is a {\em valid} candidate for the entry $T[\overrightarrow{X},j]$. Let ${\cal F}^\prime$ be a minimum cost subset of $\{F_1,\ldots,F_j\}$ that covers $u_i$ at least $\overrightarrow{X}_i$ times, for each $i \in [n]$. If ${\cal F}^\prime$ contains $F_j$, then ${\cal F}^\prime \setminus \{F_j\}$ is a valid candidate for $T[\overrightarrow{X}-\chi(F_j),j-1]$. Thus, $T[\overrightarrow{X}-\chi(F_j),j-1]\leq \sum_{F\in {\cal F}^\prime}\mathtt{c}(F)-\mathtt{c}(F_j)$.  Else, ${\cal F}^\prime$ is a valid candidate for $T[\overrightarrow{X},j-1]$. Thus, $T[\overrightarrow{X},j-1]\leq \sum_{F\in {\cal F}^\prime}\mathtt{c}(F)$. Hence,
		\begin{equation*}
		T[\overrightarrow{X},j]\geq \min\{T[\overrightarrow{X},j-1],T[\overrightarrow{X}-\chi(F_j),j-1]+\mathtt{c}(F_j)\}.
		\end{equation*}
		Similarly, in the other direction, let ${\cal F}^\prime$ be a minimum cost subset of $\{F_1,\ldots,F_{j-1}\}$ that covers $u_i$ at least $\overrightarrow{X}_i$ times, for each $i \in [n]$. Then, ${\cal F}^\prime$ is a valid candidate for $T[\overrightarrow{X},j]$. Thus, $T[\overrightarrow{X},j] \leq T[\overrightarrow{X},j-1]$. Furthermore, for any valid candidate ${\cal F}'$ of $T[\overrightarrow{X}-\chi(F_j),j-1]$, ${\cal F}^\prime \cup \{F_j\}$ is also a valid candidate for $T[\overrightarrow{X},j]$. Hence,
		\begin{equation*}
		T[\overrightarrow{X},j]\leq \min\{T[\overrightarrow{X},j-1],T[\overrightarrow{X}-\chi(F_j),j-1]+\mathtt{c}(F_j)\}.
		\end{equation*}
		Hence, the equality holds, and the claim follows from mathematical induction.
	\end{proof}
		
	\subsection{Correctness of the Algorithm in \Cref{thm:muk-fpt-nb}}

	\begin{lemma}
		$\cal{I}$ is a yes-instance of $\mpb$ if and only if the algorithm returns ``YES".
	\end{lemma}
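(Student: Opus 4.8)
The plan is to prove the two implications separately, the key preliminary being to make explicit the property that makes the construction work: for each guess $(k,\{\psi_i\}_{v_i\in\X{V}})$, the ``desired'' matchings examined by the algorithm are exactly those of size $k$ whose induced bundle has cost at most $\budget$ and is \emph{consistent} with the guesses, i.e.\ for every voter $v_i$ the projection of the induced bundle onto $\succ_{v_i}$ agrees with the order encoded by $\psi_i$ (this consistency is what prevents a more-preferred item from being slotted into a worse-indexed role and thereby over-counting). I would record this observation first, since both directions lean on it.

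For the forward direction I would start from a feasible bundle $\X{Z}$ for ${\cal I}$ and exhibit the ``honest'' guess that makes the algorithm say ``YES''. If $\mathtt{u_{tgt}}=0$ the claim is trivial, so assume $\mathtt{u_{tgt}}\ge1$; then feasibility forces $|\X{Z}|\ge\lambda$, and since every item costs at least $1$ we also get $|\X{Z}|\le\budget$, so $k:=|\X{Z}|$ is among the guessed cardinalities. Fix an enumeration $\X{Z}=\{s_1,\dots,s_k\}$ and let $\psi_i$ record, for each $v_i$, a ranking of $s_1,\dots,s_k$ under $\succ_{v_i}$ (breaking ties consistently). I would then check that $M^\star=\{(l,s_l):l\in[k]\}$ is a desired matching for this guess: it has size $k$, induces $\X{Z}$ (so cost $\le\budget$), and is consistent with every $\psi_i$ by construction. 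Writing $l_i$ for the unique index with $\psi_i(l_i)=\lambda$, its weight telescopes to $\sum_{v_i}\mathtt{util}_{v_i}(s_{l_i})$, and $s_{l_i}$ is by definition the $\lambda^{\text{th}}$ most preferred item of $v_i$ in $\X{Z}$, so $\mathtt{util}_{v_i}(s_{l_i})=\median{v_i}{\lambda}(\X{Z})$. Hence the weight of $M^\star$ equals $\sum_{v\in\X{V}}\median{v}{\lambda}(\X{Z})\ge\mathtt{u_{tgt}}$, so the maximum-weight desired matching for this guess has weight at least $\mathtt{u_{tgt}}$ and the algorithm returns ``YES''.

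For the backward direction, suppose the algorithm returns ``YES'': for some guessed $k$ and $\{\psi_i\}$ there is a desired matching $M$ of weight at least $\mathtt{u_{tgt}}$. Let $\X{Z}$ be its induced bundle; since $M$ saturates $[k]$ with distinct items, $|\X{Z}|=k$, and the cost of $\X{Z}$ is at most $\budget$ by desiredness. With $l_i$ the index satisfying $\psi_i(l_i)=\lambda$, the weight of $M$ equals $\sum_{v_i}\mathtt{util}_{v_i}(M(l_i))$ by the same telescoping. Because $M$ is consistent with $\psi_i$, the utilities $\mathtt{util}_{v_i}(M(\psi_i^{-1}(1))),\mathtt{util}_{v_i}(M(\psi_i^{-1}(2))),\dots$ are non-increasing, so this listing is a valid descending sort of $\X{Z}$ by $v_i$'s utilities; hence $M(l_i)=M(\psi_i^{-1}(\lambda))$ is a $\lambda^{\text{th}}$-ranked item of $\X{Z}$ for $v_i$ and $\mathtt{util}_{v_i}(M(l_i))=\median{v_i}{\lambda}(\X{Z})$ (in the degenerate case $\lambda>k$ no such $l_i$ exists, forcing $\mathtt{u_{tgt}}=0$, and then both sides vanish). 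Summing over voters, $\sum_{v\in\X{V}}\median{v}{\lambda}(\X{Z})=w(M)\ge\mathtt{u_{tgt}}$, so $\X{Z}$ is a feasible bundle and ${\cal I}$ is a yes-instance.

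I expect the main obstacle to be the consistency step in the backward direction: one has to argue carefully that the item placed at the $\lambda^{\text{th}}$ position of a desired matching genuinely realizes the $\lambda$-median satisfaction for every voter, rather than over-counting; tie handling in the utilities (where ``$\lambda^{\text{th}}$ most preferred'' is not uniquely defined) and the degenerate guesses with $k<\lambda$ are the other spots needing a little care. The \buk version then follows by the identical argument once the indicator is replaced by the test $\psi_i(j)\le\lambda$, since the corresponding matching weight then sums, for each voter, the top $\lambda$ utilities inside the induced bundle, which is exactly $\best{v}{\lambda}$.
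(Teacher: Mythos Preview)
Your argument and the paper's run in parallel for the direction yes-instance $\Rightarrow$ YES (both exhibit the honest guess coming from an actual solution and observe that its matching weight equals the bundle's satisfaction), but they diverge on the converse. You build in, as a preliminary, that a ``desired'' matching must be \emph{consistent} with the guessed permutations---that the induced bundle, sorted by each $v_i$'s utilities, reproduces $\psi_i$---and then obtain $\sum_{v}\median{v}{\lambda}(\X{Z})=w(M)$ as an equality. The paper imposes no such restriction: there a desired matching need only have size $k$ and cost at most $\budget$, and the soundness direction is dispatched via the bare inequality $\sum_{v_i}\mathtt{sat}^\lambda_{v_i}(S')\ge\sum_{v_i}\sum_{j}\mathbb{I}[\psi_i(j)]\cdot\mathtt{util}_{v_i}(s'_j)$, asserted without argument. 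Your consistency clause is exactly what would validate that step; without it the inequality can fail (one voter, $\lambda=2$, two unit-cost items of utilities $10$ and $1$: a guess placing the high-utility item at rank~$2$ yields matching weight $10$ but $2$-median satisfaction $1$). So your ``preliminary'' is not merely making something explicit---it adds a check the paper's description omits and in doing so repairs a genuine gap in the paper's soundness argument, at the price of tacitly modifying what the algorithm verifies when it ``looks for a desired matching''. Your treatment of ties and of the degenerate case $k<\lambda$ is likewise more explicit than the paper's.
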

	\begin{proof}
		Let $S'=\{s_1', \dots, s_k'\}$ be the bundle corresponding to $M$. Let $k$, $\{\psi_i : v_i \in \X{V}\}$ be the guesses corresponding to $S'$. By the knapsack constraint, $\mathtt{c}(S') \le \mathtt{b}$. Since the weight of the matching is at least $\mathtt{u_{tgt}}$, we have that $\sum_{v_i \in \X{V}} \mathtt{sat}^\lambda_{v_i} (S') \ge \sum_{v_i \in \X{V}} \sum_{s_j' \in S'} \mathbb{I}[\psi_i(j)] \cdot \mathtt{util}_{v_i}(s_j') \ge \mathtt{u_{tgt}}$. For the other direction, let $A = \{a_1, \dots, a_k\}$ be a solution bundle. For each voter $v_i$, we can get a correct guess of permutation $\psi_i$ by projecting $A$ on $\succ_i$. Let $B = \{b_1, \dots, b_r\}$ be the bundle obtained by our algorithm. Let $r$, $\{\psi_i' : v_i \in \X{V}\}$ be the guesses corresponding to $B$. Then, $\sum_{v_i \in \X{V}} \mathtt{sat}^\lambda_{v_i} (B) \ge \sum_{v_i \in \X{V}} \sum_{b_j \in B} \mathbb{I}[\psi_i'(j)] \cdot \mathtt{util}_{v_i}(b_j) \ge  \sum_{v_i \in \X{V}} \sum_{a_j \in A} \mathbb{I}[\psi_i(j)] \cdot \mathtt{util}_{v_i}(a_j) \ge \mathtt{u_{tgt}}$. Note that the second inequality holds because we pick the bundle corresponding to the matching with maximum weight among all guesses.
	\end{proof}

	\subsection{Safeness of \Cref{rr:muk-su}}
	
	\begin{lemma}\label{lem:rr-safeness}
		${\cal I}$ is a yes-instance of \muk/\buk if and only if ${\cal I}'$ is a yes-instance of \muk/\buk.
	\end{lemma}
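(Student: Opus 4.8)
The plan is to show that ${\cal I}$ and ${\cal I}'$ have exactly the same feasible bundles and assign the same total satisfaction to each bundle; the equivalence is then immediate, because a solution to one instance is literally a solution to the other. Since the merge leaves the item set $\X{P}$, the cost function $\mathtt{c}$, the budget $\mathtt{b}$, the parameter $\lambda$, and the target $\mathtt{u_{tgt}}$ untouched, a set $\X{Z}\subseteq\X{P}$ satisfies $\sum_{p\in\X{Z}}\mathtt{c}(p)\leq\mathtt{b}$ in ${\cal I}$ if and only if it does in ${\cal I}'$. Hence it suffices to prove that for every $\X{Z}\subseteq\X{P}$,
$$\sum_{v\in\X{V}^\prime}\median{v}{\lambda}(\X{Z}) = \sum_{v\in\X{V}}\median{v}{\lambda}(\X{Z}),$$
and the analogous identity for $\best{v}{\lambda}$.

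The crux is that $\mathtt{util}^\prime_{v_{12}}$ induces the same weak order on $\X{P}$ as $\mathtt{util}_{v_1}$ and $\mathtt{util}_{v_2}$ do. Indeed, $v_1$ and $v_2$ having identical preference orders means $\mathtt{util}_{v_1}$ and $\mathtt{util}_{v_2}$ agree on every pairwise comparison (strict or equality), and the pointwise sum of two functions that agree on all comparisons still agrees on all of them; in particular $v_1$, $v_2$, and $v_{12}$ rank the members of any bundle $\X{Z}$ in exactly the same order. Consequently, under whatever fixed tie-breaking convention is used to single out ``the $\lambda^{\text{th}}$ most preferred item'', the representative $r = r(\X{Z})$ of $\X{Z}$ is the same item for all three voters when $|\X{Z}|\geq\lambda$, and likewise the set $T = T(\X{Z})$ of the $\lambda$ most preferred items of $\X{Z}$ is the same for all three.

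It then remains only a short computation plus summation. For \muk, when $|\X{Z}|\geq\lambda$ we get
$$\median{v_{12}}{\lambda}(\X{Z}) = \mathtt{util}^\prime_{v_{12}}(r) = \mathtt{util}_{v_1}(r) + \mathtt{util}_{v_2}(r) = \median{v_1}{\lambda}(\X{Z}) + \median{v_2}{\lambda}(\X{Z}),$$
while if $|\X{Z}|<\lambda$ all four quantities are $0$; since every voter outside $\{v_1,v_2\}$ keeps the same utility function, summing over all voters yields the claimed identity. For \buk the identical argument with $T$ replacing $r$ gives $\best{v_{12}}{\lambda}(\X{Z}) = \sum_{p\in T}\mathtt{util}^\prime_{v_{12}}(p) = \sum_{p\in T}\bigl(\mathtt{util}_{v_1}(p)+\mathtt{util}_{v_2}(p)\bigr) = \best{v_1}{\lambda}(\X{Z}) + \best{v_2}{\lambda}(\X{Z})$, and summation again finishes the case. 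There is no genuine obstacle here; the only thing to be careful about is that ``the $\lambda^{\text{th}}$ most preferred item'' is well-defined only after fixing a tie-breaking rule, and one must note that since the weak orders of $v_1$, $v_2$, and $v_{12}$ coincide, any fixed rule treats the three voters consistently.
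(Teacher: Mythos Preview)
Your proposal is correct and follows essentially the same approach as the paper: both arguments hinge on the observation that $v_1$, $v_2$, and the merged voter $v_{12}$ share the same preference order, so the $\lambda^{\text{th}}$ most preferred item (respectively, the top-$\lambda$ set) in any bundle is the same for all three, whence the merged voter's satisfaction equals the sum of the two original voters' satisfactions. Your write-up is in fact a bit more thorough than the paper's---you treat the \buk case explicitly and flag the tie-breaking issue---whereas the paper sketches only the \muk case via the representative item and argues the two directions separately rather than proving the stronger statement that total satisfaction is identical for every bundle.
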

	\begin{proof}
	Since the preference orders of $v_1$ and $v_2$ are identical, for any bundle, they are represented by the same item.

	Let $S$ be a feasible bundle for ${\cal I}$. Let $p \in S$ be the representative of $v_1$ and $v_2$ in $S$. We claim that $S$ forms a feasible bundle for ${\cal I'}$. The cost of $S$ is at most $\mathtt{b}$, and the utility of $S$ is at least $\mathtt{u_{tgt}}$ because $\sum_{v \in \X{V}^\prime} \mathtt{sat}_{v}^{\lambda}(S) = \sum_{v \in \X{V} \setminus \{v_1, v_2\}} \mathtt{sat}_{v}^{\lambda}(S) + \mathtt{util}_{v_{12}}(p) = \sum_{v \in \X{V}} \mathtt{sat}_{v}^{\lambda}(S) \ge \mathtt{u_{tgt}}.$

	For the other direction, let $S'$ be a feasible bundle for ${\cal I'}$. Let $p' \in S'$ be the representative of $v_{12}$ in $S'$. Similar to the other direction, we claim that $S'$ forms a feasible bundle for ${\cal I}$. The cost of $S'$ is at most $\mathtt{b}$, and the utility of $S'$ is at least $\mathtt{u_{tgt}}$ because $\sum_{v \in \X{V}} \mathtt{sat}_{v}^{\lambda}(S') = \sum_{v \in \X{V}^\prime \setminus \{v_{12}\}} \mathtt{sat}_{v}^{\lambda}(S') + \mathtt{util}_{v_1}(p') + \mathtt{util}_{v_2}(p') = \sum_{v \in \X{V}^\prime} \mathtt{sat}_{v}^{\lambda}(S') \ge \mathtt{u_{tgt}}$.		
	\end{proof}

	\subsection{Proof of \Cref{lem:muk-strongly-unan}}
	The claim on \muk follows from \Cref{thm:muk-xp}. For the claim on \buk, we present an algorithm to solve the problem when the input has exactly one voter. 
	
	Using the technique of dynamic programming, we fill a table $T$, defined as follows: for every $i \in [0,m], j \in [0,\T{b}], k \in [0,\lambda]$, let $T[i,j,k]$ be the maximum value of a bundle that costs at most $j$, and has exactly $k$ items, all of which are chosen from the first $i$ items - $\{a_1, \dots, a_i\}$. We call such a bundle a \textit{valid candidate} for the entry $T[i,j,k]$. We compute the table entries as follows.
	
	\paragraph{Base Case:} For any $i \in [0,m], j \in [0,\T{b}], k \in [0,\lambda]$, we set 
		\begin{equation}\label{eq:buk-su-base-case}
		T[i,j,k] =
		\begin{cases}
		-\infty & \quad\text{if } i < k \\
		0      & \quad\text{otherwise.}\\
		\end{cases}
		\end{equation}
	
	\paragraph{Recursive Step:} For any $i \in [m], j \in [\T{b}], k \in [\lambda]$, we set 
	\begin{equation}\label{eq:buk-su-rec-step}
	T[i,j,k] = \min\{T[i-1,j-\T{c}(a_i), k-1]+\T{util}_v(a_i), T[i-1, j, k]\}.
	\end{equation}
	
	The algorithm returns ``YES" if $T[n,j,k] \ge \T{u_{tgt}}$ for some $j \in [\T{b}]$ and $k \in [\lambda]$, ``NO" otherwise. Next, we prove the algorithm's correctness through the following claim.
	
	\begin{claim}\label{clm:buk-su-dp}
		\Cref{eq:buk-su-base-case,eq:buk-su-rec-step} compute $T[i,j,k]$ correctly, for every $i \in [0,n], j \in [0,\T{b}]$, and $k \in [0,\lambda]$.
	\end{claim}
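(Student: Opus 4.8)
The plan is to prove the claim by induction on $i$, showing that for all indices $i,j,k$ in range the entry $T[i,j,k]$ equals the maximum of $\sum_{a\in Z}\mathtt{util}_v(a)$ over all \emph{valid candidates} $Z$ for that entry --- that is, over $Z\subseteq\{a_1,\dots,a_i\}$ with $|Z|=k$ and $\sum_{a\in Z}\mathtt{c}(a)\le j$ --- with the convention that this maximum is $-\infty$ whenever no valid candidate exists (in particular whenever $i<k$). Along the way, any reference to $T$ with a negative budget coordinate is read as $-\infty$, and the operator in \Cref{eq:buk-su-rec-step} is read as $\max$: it compares the best option that uses $a_i$ with the best option that omits it. Once the table is shown correct, instantiating $i$ at its largest value and maximizing over $j\le\mathtt{b}$ and $k\in[\lambda]$ yields the largest total utility of a bundle of at most $\lambda$ items; since, for a single voter, any bundle of size at most $\lambda$ has $\lambda$-best satisfaction equal to its total utility (and a bundle of more than $\lambda$ items can only be worse, by discarding its least-preferred item), the reported answer is then correct.

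For the base case I would check the entries frozen by \Cref{eq:buk-su-base-case}. When $i=0$, the only subset of $\{a_1,\dots,a_i\}$ is $\emptyset$, which is a valid candidate exactly when $k=0$ (value $0$, cost $0\le j$), and there is no valid candidate when $k\ge1$; this matches the two branches of \Cref{eq:buk-su-base-case}. The cases $k=0$ (for arbitrary $i,j$, $\emptyset$ is the unique valid candidate, of value $0$) and $j=0$ (with positive item costs, only $\emptyset$ fits, reducing to the $k=0$ analysis) are verified identically.

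For the inductive step, fix $i\ge1$, $j\ge1$, $k\ge1$ and assume correctness of all entries with first coordinate $i-1$; I would establish the two inequalities between $T[i,j,k]$ and the true optimum $\mathrm{OPT}[i,j,k]$. For ``$\ge$'', take a maximum-value valid candidate $Z$ for $(i,j,k)$ (if none exists there is nothing to prove) and split on whether $a_i\in Z$: if $a_i\notin Z$ then $Z$ is a valid candidate for $(i-1,j,k)$; if $a_i\in Z$ then $\mathtt{c}(a_i)\le j$ and $Z\setminus\{a_i\}$ is a valid candidate for $(i-1,\,j-\mathtt{c}(a_i),\,k-1)$, so by the induction hypothesis $\mathrm{OPT}[i,j,k]$ is at most one of the two terms on the right-hand side of \Cref{eq:buk-su-rec-step}. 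For ``$\le$'', reverse this: any valid candidate for $(i-1,j,k)$ is one for $(i,j,k)$ of the same value, and when $\mathtt{c}(a_i)\le j$ appending $a_i$ to any valid candidate for $(i-1,\,j-\mathtt{c}(a_i),\,k-1)$ produces a valid candidate for $(i,j,k)$; hence $T[i,j,k]$, which by the induction hypothesis equals the larger of those two options, is at most $\mathrm{OPT}[i,j,k]$. Combining the directions gives $T[i,j,k]=\mathrm{OPT}[i,j,k]$, which closes the induction.

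I do not expect a genuinely hard step here: the ideas are the standard ones for knapsack-style dynamic programs, and the only place that calls for care is the bookkeeping --- keeping the cardinality \emph{exactly} $k$ as the recursion descends (as opposed to the usual ``at most $k$'' variant), and treating the degenerate coordinates $i=0$, $k=0$, $j=0$, and $j-\mathtt{c}(a_i)<0$ consistently with the $-\infty$ convention, so that a subset contributes to $T[i,j,k]$ precisely when it is genuinely a valid candidate for that entry.
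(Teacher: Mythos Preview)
Your proposal is correct and follows essentially the same approach as the paper: induction (implicitly on $i$), a two-sided inequality argument, and a case split on whether $a_i$ lies in an optimal bundle. You also correctly flag that the operator in \Cref{eq:buk-su-rec-step} must be read as $\max$ (the paper's $\min$ is a typo), and your more careful treatment of the boundary entries $i=0$, $k=0$, $j=0$, and $j-\mathtt{c}(a_i)<0$ is slightly cleaner than the paper's base case, which as written sets $T[i,0,k]=0$ whenever $i\ge k$ even for $k\ge 1$.
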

	\begin{proof}
		If $k>i$, then for any $j \in [\T{b}]$, $T[i,j,k]=-\infty$ since there is no bundle that contains $k$ out of the first $i$ items. Thus, \Cref{eq:buk-su-base-case} is correct.

		Next, we prove the correctness of \Cref{eq:buk-su-rec-step} by showing inequalities in both directions. In one direction, let $S$ be a valid candidate for the entry $T[i,j,k]$. Now, we consider the two possible cases. If $a_i \in S$, then $S \setminus \{a_i\}$ is a valid candidate for the entry $T[i-1, j-\T{c}(a_i), k-1]$. Otherwise, we have that $a_i \notin S$. Then, $S$ is a valid candidate for the entry $T[i-1, j, k]$. Hence, $$T[i,j,k] \le \min\{T[i-1,j-\T{c}(a_i), k-1]+\T{util}_v(a_i), T[i-1, j, k]\}.$$

		In the other direction, any valid candidate for the entry $T[i-1,j,k]$ is also a valid candidate for the entry $T[i,j,k]$. Also, for any valid candidate $S$ for the entry $T[i-1, j-\T{c}(a_i), k-1]$, $S \cup \{a_i\}$ is a valid candidate for the entry $T[i,j,k]$. Hence, $$T[i,j,k] \ge \min\{T[i-1,j-\T{c}(a_i), k-1]+\T{util}_v(a_i), T[i-1, j, k]\}.$$
		
		Thus, \Cref{eq:buk-su-rec-step} is correct, and the claim follows from mathematical induction.
	\end{proof}
	Since the table size is $\OO(n \T{b} \lambda)$ and each table entry can be computed in polynomial time, the running time in the lemma follows. An alternative dynamic-programming approach to solve the problem is that instead of finding the maximum utility of a $k$-sized bundle among the first $i$ items with budget $j$, we find the minimum cost of a $k$-sized bundle among the first $i$ items with utility $j$. This results in a table of size $\OO(n \bar{u} \lambda)$. Thus, we have the last part of the lemma.
	\qed

	\subsection{Proof of \Cref{lem:dk-unanimous}}
	
	Since the profile is a UP, there exists an item $p \in \X{P}$ that is the most preferred by every voter. We know that $\mathtt{c}(p) \le \mathtt{b}$. Therefore, it forms a feasible bundle. We claim that $\{p\}$ forms an optimal bundle since $p$ is every voter's representative as $\lambda=1$. \qed

	\subsection{Details of \Cref{lem:dk-kpcover}}
	
	\begin{claim}\label{clm:dk-kpcover}
		${\cal I}$ is a yes-instance of \dk if and only if ${\cal J}$ is a yes-instance of \kpcover.
	\end{claim}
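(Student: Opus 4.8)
The plan is to prove this equivalence directly in both directions, exploiting the natural correspondence between an item of a bundle and the set of voters that item represents. Before starting I would dispose of the degenerate case by assuming, as is done elsewhere in the paper, that $\mathtt{u_{tgt}}>0$; when $\mathtt{u_{tgt}}=0$ both instances are trivially yes-instances (take the empty bundle for ${\cal I}$, and any single pair $\{\X{V},p\}$ with $\mathtt{c}(p)\le\mathtt{b}$ for ${\cal J}$).

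\textbf{From a \dk solution to a \kpcover solution.} Given a feasible bundle $\X{Z}\subseteq\X{P}$ for ${\cal I}$, which is non-empty since $\mathtt{u_{tgt}}>0$, I would fix for each voter $v$ a representative $r_v\in\argmax_{p\in\X{Z}}\mathtt{util}_v(p)$ (breaking ties arbitrarily), set $X_r=\{v\in\X{V}:r_v=r\}$ for each item $r$, and take $\X{Z}^\prime=\{\{X_r,r\}:X_r\neq\emptyset\}\subseteq\X{F}$. One then verifies the four conditions: the non-empty sets $X_r$ partition $\X{V}=U_1$ (each voter has exactly one representative), which gives (i) and (ii); the pairs of $\X{Z}^\prime$ carry pairwise distinct items, all lying in $\X{Z}$, and $\mathtt{cost}(\{X_r,r\})=\mathtt{c}(r)$, so $\sum_{\{X_r,r\}\in\X{Z}^\prime}\mathtt{cost}(\{X_r,r\})\le\sum_{p\in\X{Z}}\mathtt{c}(p)\le\mathtt{b}=\mathtt{b}^\prime$, which is (iii); and, regrouping profits voter by voter, $\sum_{\{X_r,r\}\in\X{Z}^\prime}\mathtt{profit}(\{X_r,r\})=\sum_{v\in\X{V}}\mathtt{util}_v(r_v)=\sum_{v\in\X{V}}\median{v}{1}(\X{Z})\ge\mathtt{u_{tgt}}=\mathtt{p}$, which is (iv).

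\textbf{From a \kpcover solution to a \dk solution.} Given $\X{Z}^\prime\subseteq\X{F}$ satisfying (i)--(iv), I would take $\X{Z}$ to be the set of items that occur in some pair of $\X{Z}^\prime$. By (i) and (ii) the first coordinates of the pairs in $\X{Z}^\prime$ partition $U_1=\X{V}$, so every voter $v$ belongs to the first coordinate of a unique pair of $\X{Z}^\prime$, whose item I call $r(v)\in\X{Z}$; hence $\median{v}{1}(\X{Z})\ge\mathtt{util}_v(r(v))$, and summing and regrouping by pairs gives $\sum_{v\in\X{V}}\median{v}{1}(\X{Z})\ge\sum_{\{X,r\}\in\X{Z}^\prime}\mathtt{util}_X(r)=\sum_{\{X,r\}\in\X{Z}^\prime}\mathtt{profit}(\{X,r\})\ge\mathtt{p}=\mathtt{u_{tgt}}$. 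For the cost, a single item may appear in several pairs of $\X{Z}^\prime$ (with pairwise disjoint first coordinates, by (i)), so collapsing to the underlying set $\X{Z}$ cannot increase cost: $\sum_{p\in\X{Z}}\mathtt{c}(p)\le\sum_{\{X,r\}\in\X{Z}^\prime}\mathtt{c}(r)=\sum_{\{X,r\}\in\X{Z}^\prime}\mathtt{cost}(\{X,r\})\le\mathtt{b}^\prime=\mathtt{b}$. Thus $\X{Z}$ is a feasible bundle for ${\cal I}$.

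\textbf{Main obstacle.} None of the steps is deep; the two points that need care are (a) that conditions (i) and (ii) together make each voter's representative well defined on the \kpcover side, so that the regrouping of profits is legitimate, and (b) the bookkeeping in the backward direction, where a single item may be reused across several disjointly-labelled pairs, so that passing to the underlying set of items only decreases the total cost --- which relies on costs being non-negative. Everything else is a mechanical unfolding of the definitions of $\mathtt{cost}$, $\mathtt{profit}$, and $\median{v}{1}$.
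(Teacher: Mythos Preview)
Your proposal is correct and follows essentially the same approach as the paper: in both directions you use the natural bijection between a bundle item and the set of voters it represents, and verify the four \kpcover conditions (resp.\ the budget and utility bounds) by unfolding the definitions of $\mathtt{cost}$, $\mathtt{profit}$, and $\median{v}{1}$. If anything, your write-up is slightly more careful than the paper's---you explicitly exclude items with empty representee sets in the forward direction, and you spell out in the backward direction that an item may recur across several pairs so that passing to the underlying item set can only decrease cost.
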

	\begin{proof}
		In the forward direction, let $S$ be a solution to ${\cal I}$. For every $r\in S$, let $V_r\subseteq \X{V}$ be the set of voters who are represented by the item $r$, i.e.,  $V_r = \{v \in \X{V} \colon r = \argmax_{q \in S} \mathtt{util}_v(q)\}$. Without loss of generality, we assume that for every pair of items $\{r,r'\}\subseteq S$, $V_r \cap V_{r'} = \emptyset$; if two or more items in $S$ can represent a voter since they have equal utilities, then we can arbitrarily pick one of those items $r$ and place the voter in $V_r$. Let $Z=\{\{V_r,r\} \colon r\in S\}$. Note that $Z\subseteq \X{F}$. Since every voter is represented by some item in $S$, every element of $U_1$ (voter) belongs to some $V_r$, where $r\in S$ (the item that represents the voter). Since $\mathtt{b}'=\mathtt{b}$, $\sum_{\{V_r,r\}\in Z}\mathtt{cost}(\{V_r,r\})\leq \mathtt{b}'$. Furthermore, note that for every $r\in S$, $\mathtt{profit}(\{V_r,r\})= \sum_{v\in V_r}\mathtt{sat}_v^1(S)$. Therefore, $\sum_{\{V_r,r\}\in Z}\mathtt{profit}(\{V_r,r\})=   \sum_{v\in \X{V}} \mathtt{sat}_v^1(S) \geq \mathtt{u_{tgt}}$. Since $\mathtt{p} = \mathtt{u_{tgt}}$, $Z$ is a solution to ${\cal J}$.
		
		In the backward direction, let $Z$ be a solution to ${\cal J}$. Let $S=\{r\in \X{P}\colon \{X,r\}\in Z\}$. Since $\mathtt{b}'=\mathtt{b}$, $\sum_{r\in S}\mathtt{c}(r)\leq \mathtt{b}$. Note that for every $\{X,r\}\in Z$, the total satisfaction of all the voters $v\in X$, from $S$, i.e., $\sum_{v\in X}\mathtt{sat}_v^1(S)$ is at least $\mathtt{profit}(\{X,r\})$ as $r\in S$. Since $\bigcup_{\{X,r\}\in Z}X = U_1$, we have that $\sum_{v\in \X{V}}\mathtt{sat}_v^1(S) \geq \sum_{\{X,r\}\in Z}\mathtt{profit}(\{X,r\})$. Thus, $\sum_{v\in \X{V}} \mathtt{sat}_v^1(S) \ge \mathtt{u_{tgt}}$, as $\mathtt{u_{tgt}}=\mathtt{p}$, and \\$\sum_{\{X,p\}\in Z}\mathtt{profit}(\{X,p\}) \geq \mathtt{p}$.
	\end{proof}

	\subsection{Proof of \Cref{thm_dk-kpcover-complexity}}
	
	The algorithm uses the technique of dynamic programming, and it is similar to the algorithm for {\sc Set Cover} given by Fomin et al.~\cite{fomin2004exact}. Let ${\cal J}=(U_1,U_2,\X{F},\mathtt{cost},\mathtt{profit}, \mathtt{b}, \mathtt{p})$ be an instance of \kpcover. Let $\X{F}=\{F_1,\ldots,F_{m}\}$. We define the dynamic-programming table as follows: for every subset $S\subseteq U_1$, $j\in [p_{max}]$, and $k\in [m]$, we define $T[S,k,j]$ as the minimum cost of a subset $\X{F'}\subseteq \{F_1, \ldots, F_k\}$ with profit exactly $j$ such that $\bigcup_{F\in \X{F'}}F\setminus U_2 = S$, and for every pair of sets $\{F,F'\}\subseteq \X{F'}$, $(F\cap F')\setminus U_2 = \emptyset$. We call such a set $\X{F'}$ as a {\em valid candidate} for the entry $T[S,k,j]$. If no such subset exists, then $T[S,k,j]=\infty$. We compute the table entries as follows.

	\paragraph{Base Case:} For each $S \subseteq U_1$ and $j \in [0, p_{max}]$, we set
	\begin{equation}\label{eq:kpcover-base-case1}
	T[S,0,j] =
	\begin{cases}
	0      & \quad\text{if } S = \phi, j=0 \\
	\infty & \quad\text{otherwise.}        \\
	\end{cases}
	\end{equation}

	\paragraph{Recursive Step:} For every subset $S\subseteq U_1$, $k\in [m]$, and $j\in [0,p_{max}]$, we set
	\begin{equation}\label{eq:kpcover-recursion}
	\begin{split}
	T[S,k,j] = & \min\{T[S, k-1, j], T[S\setminus F_k, k-1, \\
	& j-\mathtt{profit}(F_k)] + \mathtt{cost}(F_k)\}
	\end{split}
	\end{equation}

	The algorithm returns ``YES" if $T[U_1, m, j] \le \mathtt{b}$ for some $j \ge \mathtt{p}$, ``NO" otherwise. Algorithm~\ref{algo:kpcover} presents the pseudocode of this algorithm.

	\begin{algorithm2e}
		\KwIn{an instance $(U_1,U_2,\X{F},\mathtt{cost},\mathtt{profit}, \mathtt{b}, \mathtt{p})$ of \kpcover}
		\KwOut{``YES" if there exists a solution to $(U_1,U_2,\X{F},\mathtt{cost},\mathtt{profit}, \mathtt{b}, \mathtt{p})$, ``NO" otherwise.}
		\caption{An Algorithm for \kpcover}\label{algo:kpcover}
		\For{{\rm\bf each} $S \subseteq U_1$, $i\in [0,m]$, and $j \in [0, p_{max}]$}{
			$T[S,i,j]=0$
		}
		\For{{\rm\bf each} $S \subseteq U_1$ and $j \in [p_{max}]$, set} {$T[S,0,j] =
			\infty \quad \text{ if } S \neq \phi$}
		
		\For{$S \subseteq U_1$, $i\in [m]$, $j \in [0, p_{max}]$}{
			$T \left[S, i, j \right]$ $=$ $\min \left(
			\begin{array}{l}
			T \left[S, i-1, j \right],                                     \\
			T \left[S \backslash F_i, i-1, j - \mathtt{profit}(F_i)\right] \\+ \mathtt{cost}(F_i)
			\end{array}
			\right)$
		}
		Return ``YES" if $T \left[U_1, m, j\right] \le \mathtt{b}$ for some $j \ge \mathtt{p}$, ``NO" otherwise.
	\end{algorithm2e}

	Next, we prove the correctness of the algorithm. In particular, we prove the following claim.

	\begin{claim}
		\label{clm_dk-kpcover-algo-correctness}
		\Cref{eq:kpcover-base-case1,eq:kpcover-recursion} compute $T[S,k,j]$ correctly, for every $S\subseteq U_1$, $k\in [0,m]$, and $j\in [0,p_{max}]$.
	\end{claim}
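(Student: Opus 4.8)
The plan is to prove \Cref{clm_dk-kpcover-algo-correctness} by induction on $k$, the number of available sets $F_1,\dots,F_k$. First I would fix notation: write each set as $F_i=\{G_i,u_i\}$ with $G_i\subseteq U_1$ its $U_1$-component and $u_i\in U_2$, and for a subfamily $\X{F'}$ put $\Gamma(\X{F'})=\bigcup_{F\in\X{F'}}(F\setminus U_2)\subseteq U_1$ for the part of $U_1$ that $\X{F'}$ covers. Recalling the definition of the table, $\X{F'}$ is a \emph{valid candidate} for $T[S,k,j]$ exactly when $\X{F'}\subseteq\{F_1,\dots,F_k\}$, the sets of $\X{F'}$ have pairwise disjoint $U_1$-components, $\Gamma(\X{F'})=S$, and $\sum_{F\in\X{F'}}\mathtt{profit}(F)=j$; and $T[S,k,j]$ is meant to be the least cost $\mathrm{OPT}(S,k,j)$ of such a candidate (or $\infty$ if there is none). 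The base case $k=0$ is immediate: the only subfamily of the empty family is itself, which covers $\emptyset$ with cost and profit $0$, so a valid candidate exists iff $S=\emptyset$ and $j=0$, and \Cref{eq:kpcover-base-case1} records exactly this.

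For the inductive step I would fix $S\subseteq U_1$ and $j\in[0,p_{max}]$, assume correctness of all entries with second coordinate $k-1$, and prove the two inequalities that together show $\mathrm{OPT}(S,k,j)$ equals the minimum in \Cref{eq:kpcover-recursion}. In one direction: any valid candidate for $\mathrm{OPT}(S,k-1,j)$ is also one for $\mathrm{OPT}(S,k,j)$; and if $\X{F'}$ is a valid candidate for $\mathrm{OPT}(S\setminus G_k,\,k-1,\,j-\mathtt{profit}(F_k))$ and $G_k\subseteq S$, then, since $\Gamma(\X{F'})=S\setminus G_k$ is disjoint from $G_k$, the family $\X{F'}\cup\{F_k\}$ is again a valid candidate, now for $\mathrm{OPT}(S,k,j)$, of cost larger by $\mathtt{cost}(F_k)$; reading the second branch of \Cref{eq:kpcover-recursion} with the side condition $G_k\subseteq S$ (see below), this gives that $\mathrm{OPT}(S,k,j)$ is at most each of the two terms. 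In the other direction, let $\X{F'}$ be a least-cost valid candidate for $T[S,k,j]$ (if none exists the entry is $\infty$ and there is nothing to prove): if $F_k\notin\X{F'}$ then $\X{F'}$ witnesses $\mathrm{OPT}(S,k-1,j)\le\mathrm{OPT}(S,k,j)$, while if $F_k\in\X{F'}$ then pairwise disjointness yields the decomposition $\Gamma(\X{F'})=\Gamma(\X{F'}\setminus\{F_k\})\sqcup G_k$, whence $G_k\subseteq S$ and $\X{F'}\setminus\{F_k\}$ is a valid candidate for $\mathrm{OPT}(S\setminus G_k,\,k-1,\,j-\mathtt{profit}(F_k))$, so $\mathrm{OPT}(S,k,j)\ge\mathrm{OPT}(S\setminus G_k,\,k-1,\,j-\mathtt{profit}(F_k))+\mathtt{cost}(F_k)$. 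Replacing each $\mathrm{OPT}(\cdot,k-1,\cdot)$ by the corresponding table entry via the induction hypothesis closes the step; correctness of the reported answer then follows since $T[U_1,m,j]=\mathrm{OPT}(U_1,m,j)$ and, as $\Gamma(\X{F'})\subseteq U_1$ always, ``covers exactly $U_1$'' coincides with ``covers $U_1$''.

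The step I expect to be the main obstacle is the bookkeeping of the exact-coverage invariant $\Gamma(\X{F'})=S$ as $F_k$ is added or removed: one must check that augmenting a valid candidate for the ``$S\setminus G_k$'' subproblem by $F_k$ covers exactly $S$ (not more), and that deleting $F_k$ from a valid candidate for $T[S,k,j]$ lands precisely on the ``$S\setminus G_k$'' subproblem. Both rest on the pairwise-disjointness requirement, which supplies the clean split $\Gamma(\X{F'})=\Gamma(\X{F'}\setminus\{F_k\})\sqcup G_k$; in particular $F_k$ can belong to a valid candidate for $T[S,k,j]$ only when $G_k\subseteq S$, so the second branch of \Cref{eq:kpcover-recursion} is to be read with this side condition, and it is this that makes the two directions line up. Once the invariant is nailed down the remaining verification is routine, and the running time $\OO(2^{|U_1|}\,|\X{F}|\,p_{max})$ is immediate: the table has $2^{|U_1|}\cdot|\X{F}|\cdot(p_{max}+1)$ entries, each computed by \Cref{eq:kpcover-recursion} with $\OO(1)$ arithmetic and set operations, followed by a single linear scan of the entries $T[U_1,m,\cdot]$.
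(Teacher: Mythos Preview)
Your approach mirrors the paper's exactly: induction on $k$ with the two-inequality verification of the recurrence, splitting on whether $F_k$ belongs to an optimal candidate. Where you go further is in flagging that the second branch of \Cref{eq:kpcover-recursion} should only be taken when $G_k\subseteq S$. The paper's own proof simply asserts that if $\X{F}'$ is a valid candidate for $T[S\setminus G_k,k-1,j-\mathtt{profit}(F_k)]$ then $\X{F}'\cup\{F_k\}$ is valid for $T[S,k,j]$; but this requires $\Gamma(\X{F}'\cup\{F_k\})=(S\setminus G_k)\cup G_k$ to equal $S$, which fails precisely when $G_k\not\subseteq S$. Your observation that the pairwise-disjointness invariant forces $G_k\subseteq S$ whenever $F_k$ lies in a valid candidate for $T[S,k,j]$, and that the second branch must therefore be read with this side condition, is the correct repair. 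With it your argument is sound; just be aware that you are then proving correctness of the amended recurrence rather than \Cref{eq:kpcover-recursion} verbatim, since without the side condition intermediate entries with $S\subsetneq U_1$ can be strictly smaller than the intended optimum.
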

	\begin{proof}
		If $j=0$ and $S=\phi$, then the optimal solution is the empty set of zero cost. If $j>0$ and $k=0$, then there is no solution ($T[S,0,j] = \infty$) since we cannot obtain non-zero profit $j$ without picking any set. If $S \ne \phi$ and $j=0$, then there is no solution ($T[S,k,0] = \infty$) since we cannot cover a non-empty subset of $U_1$ without picking any set. This proves the correctness of \Cref{eq:kpcover-base-case1}. %
		
		Next, we prove the correctness of \Cref{eq:kpcover-recursion} by showing inequalities in both directions. In one direction, let $\X{F}' \subseteq \{F_1,\ldots,F_i\}$ be a minimum cost valid candidate for the entry $T[S,i,j]$. Now, there are two cases: either $F_i \in \X{F}'$ or not. In the latter case, $\X{F}'$ is also a valid candidate for the entry $T[S,i-1,j]$ as $\X{F}'\subseteq \{F_1,\ldots,F_{i-1}\}$. Therefore, $T[S,i-1,j] \leq T[S,i,j]$. In the former case, consider the set $\X{F}''=\X{F}'\setminus \{F_i\}$. Note that the profit of the set $\X{F}''$ is $\sum_{F\in \X{F}'}\mathtt{profit}(F)-\mathtt{profit}(F_i)\geq j-\mathtt{profit}(F_i)$. Thus, $\X{F}''$ is a valid candidate for the entry $T[S\setminus F_i,i-1,j-\mathtt{profit}(F_i)]$. Therefore, $T[S\setminus F_i,i-1,j-\mathtt{profit}(F_i)] \leq T[S,i,j] - \mathtt{cost}(F_i)$. Hence,
		\begin{equation*}
		\begin{split}
		T[S,i,j] \geq & \min\{T[S, i-1, j], \\
		& T[S\setminus F_i, i-1,
		j-\mathtt{profit}(F_i)] + \mathtt{cost}(F_i)\}
		\end{split}
		\end{equation*}

		For other direction, let $\X{F}'\subseteq \{F_1,\ldots,F_{i-1}\}$ be a valid candidate for the entry $T[S,i-1,j]$. Then, clearly, it is also a valid candidate for the entry $T[S,i,j]$. Therefore, $T[S,i,j]\leq T[S,i-1,j]$. Similarly, if $\X{F}'$ is a valid candidate for $T[S\setminus F_i,i-1,j-\mathtt{profit}(F_i)]$, then $\X{F'}\cup \{F_i\}$ is a valid candidate for the entry $T[S,i,j]$. Hence,
		\begin{equation*}
		\begin{split}
		T[S,i,j] \leq & \min\{T[S, i-1, j], \\
		& T[S\setminus F_i, i-1,
		j-\mathtt{profit}(F_i)] + \mathtt{cost}(F_i)\}
		\end{split}
		\end{equation*}

		This proves the equality. Since the recursive formula is correct, the claim follows from mathematical induction.
	\end{proof}
	Since the table size is $\OO(2^{|U_1|}|{\cal F}|p_{max})$ and each entry can be computed in polynomial time, the running time of the algorithm follows.
	\qed

	\subsection{Correctness of the Algorithm in \Cref{thm:dk-2n}}
	\begin{lemma}\label{lem:correctness-1}
		If ${\cal I}$ is a yes-instance of \dk, the algorithm returns ``YES". %
	\end{lemma}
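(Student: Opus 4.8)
The plan is to start from a solution $S\subseteq\X{P}$ of the yes-instance $\mathcal{I}$ and locate, among the polynomials the algorithm builds, a non-zero $P^{k}_{n,\alpha,\beta}$ with $\alpha\ge\mathtt{u_{tgt}}$ and $\beta\le\mathtt{b}$. First I would invoke the observation recorded just before the algorithm: we may assume $S$ induces a partition of the voter set. So write $S=\{p_1,\ldots,p_k\}$ and let $Y_i\subseteq\X{V}$ be the set of voters whose representative in $S$ is $p_i$ (ties broken arbitrarily so that $Y_1,\ldots,Y_k$ are pairwise disjoint and cover $\X{V}$; this is possible because $\lambda=1$ and any item representing nobody may be discarded). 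Put $\alpha_i=\mathtt{util}_{Y_i}(p_i)$ and $\beta_i=\mathtt{c}(p_i)$. Since $p_i$ realizes the satisfaction of every voter in $Y_i$, we have $\sum_{i\in[k]}\alpha_i=\sum_{v\in\X{V}}\mathtt{sat}_v^1(S)\ge\mathtt{u_{tgt}}$ and $\sum_{i\in[k]}\beta_i=\mathtt{c}(S)\le\mathtt{b}$; moreover $\sum_i\alpha_i\le\bar u$ because $\mathtt{sat}_v^1(S)\le\max_{p\in\X{P}}\mathtt{util}_v(p)$.

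The core of the argument is an induction on $j\in[k]$ showing that the monomial $x^{\chi(Y_1\cup\cdots\cup Y_j)}$ appears with strictly positive coefficient in $P^{j}_{s_j,a_j,b_j}$, where $s_j=\sum_{i\le j}|Y_i|$, $a_j=\sum_{i\le j}\alpha_i$, and $b_j=\sum_{i\le j}\beta_i$. The base case $j=1$ is read directly off the definition of the Type-$1$ polynomials, with $p_1$ as the witnessing item (note $\beta_1=\mathtt{c}(p_1)\le\mathtt{b}$). For the inductive step I would single out the one summand of $P^{j}_{s_j,a_j,b_j}$ indexed by $(s_1,s_2)=(|Y_j|,s_{j-1})$, $(\alpha_1,\alpha_2)=(\alpha_j,a_{j-1})$, $(\beta_1,\beta_2)=(\beta_j,b_{j-1})$, namely $\mathcal{R}\bigl(\mathcal{H}_{s_j}\bigl(P^{1}_{|Y_j|,\alpha_j,\beta_j}\times P^{j-1}_{s_{j-1},a_{j-1},b_{j-1}}\bigr)\bigr)$. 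Its first factor contains $x^{\chi(Y_j)}$, and by the induction hypothesis the second contains $x^{\chi(Y_1\cup\cdots\cup Y_{j-1})}$, so the product contains $x^{\chi(Y_1\cup\cdots\cup Y_{j-1})+\chi(Y_j)}$; since $Y_j$ is disjoint from $Y_1\cup\cdots\cup Y_{j-1}$, \Cref{lem:distinct-vectors} says this monomial has Hamming weight exactly $s_{j-1}+|Y_j|=s_j$, so it survives the projection $\mathcal{H}_{s_j}$ and emerges with coefficient $1$ after applying $\mathcal{R}(\cdot)$. Taking $j=k$ then gives that $P^{k}_{n,\alpha,\beta}$ is non-zero with $\alpha=\sum_i\alpha_i\ge\mathtt{u_{tgt}}$ and $\beta=\sum_i\beta_i\le\mathtt{b}$, and since $\alpha\le\bar u$ this is one of the polynomials the algorithm inspects, so it returns ``YES''.

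I expect the main obstacle to be not the combinatorics but the bookkeeping that the tracked monomial is never cancelled: I would argue that every polynomial produced in the algorithm has non-negative integer coefficients, and that each operation involved — polynomial multiplication, the Hamming projection $\mathcal{H}_s$, the representative operation $\mathcal{R}$, and the outer summation over decompositions of $(s,\alpha,\beta)$ — preserves non-negativity, so the positive contribution coming from the distinguished summand cannot be wiped out by the other terms. A secondary, purely routine point to verify is that all indices stay within the ranges $[n]$, $[\bar u]$, $[\mathtt{b}]$ used in the definitions (in particular the degenerate case of a group contributing zero utility, which can be absorbed into another group without changing $\alpha$); this is the kind of detail I would check last.
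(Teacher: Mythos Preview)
Your proposal is correct and follows essentially the same route as the paper: partition the voters according to their representative in a solution $S=\{p_1,\ldots,p_k\}$, then prove by induction on $j$ that the monomial $x^{\chi(Y_1\cup\cdots\cup Y_j)}$ occurs in $P^{j}_{s_j,a_j,b_j}$, using \Cref{lem:distinct-vectors} at the inductive step. Your added remarks on non-negativity (ruling out cancellation across summands) and on the degenerate $\alpha_i=0$ case are points the paper glosses over, so if anything your write-up is slightly more careful.
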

	\begin{proof}
		Let $S = \{p_1, \ldots, p_k\}$ be a cost-minimal solution to ${\cal I}$. %
		For every $p_\ell \in S$, let $Y_\ell \subseteq \X{V}$ be the set of voters who are represented by $p_\ell$, i.e.,  $Y_\ell = \{v \in \X{V} \colon p_\ell = \argmax_{q \in S} \mathtt{util}_v(q)\}$. Note that for any $\ell \in [k]$, $Y_\ell \neq \phi$, because otherwise we can simply remove $p_\ell$ from $S$ since it does not contribute to the utility of $S$.
		Without loss of generality, we assume that for every pair of items $\{p_{i},p_{j}\}\subseteq S$, $Y_i \cap Y_j = \emptyset$; if two or more items in $S$ can represent a voter since they have equal utilities, then we can arbitrarily pick one of those items $p_\ell$, and place the voter in $Y_\ell$.

		\begin{claim}\label{clm:dk-algo-polynomial}
			For each $t \in [k]$, the polynomial $P^{t}_{s, \alpha, \beta}$ is non-zero, where $s = \sum_{\ell \in [t]} \lvert Y_\ell \rvert$, $\beta = \sum_{\ell \in [t]} \mathtt{c}(p_\ell)$, and $\alpha = \sum_{\ell \in [t]} \mathtt{util}_{Y_\ell}(p_\ell)$. In fact, it contains the monomial $x^{\chi (Y_1 \cup \ldots \cup Y_t)}$.
		\end{claim}
		\begin{proof}
			We will prove by induction on $t$.
			For $t=1$, by the construction of Type-$1$ polynomials, for $\alpha = \mathtt{util}_{Y_1}(p_1)$ and $\beta = \mathtt{c}(p_1)$, $P^1_{|Y_1|, \alpha, \beta}$ contains the monomial $x^{\chi(Y_1)}$.
			Let the claim be true for $t\leq t'$.
			We prove it for $t=t'+1$. Due to the induction hypothesis, we know that  for $s = \sum_{\ell \in [t']} | Y_\ell |$, $\alpha= \sum_{\ell \in [t']} \mathtt{util}_{Y_\ell}(p_\ell)$, and $\beta = \sum_{\ell \in [t']} \mathtt{c}(p_\ell)$, $P^{t'}_{s, \alpha, \beta} (x)$ contains the monomial $x^{\chi(Y_1 \cup \ldots \cup Y_{t'})}$.  Furthermore, due to the construction of Type-$1$ polynomials, we know that for $s'=|Y_{t+1}|$, $\alpha'=\mathtt{util}_{Y_{t+1}}(p_{t+1})$, and $\beta'=\mathtt{c}(p_{t+1})$,
			$P^{1}_{s', \alpha',\beta'} (x)$ contains the monomial $x^{\chi(Y_{t+1})}$. Since for all $\ell \ne \ell' \in [k]$, $Y_\ell \cap Y_{\ell'} = \emptyset$, we have that $Y_{t+1}$ is disjoint from $Y_1 \cup \ldots \cup Y_t$. Thus, $P^{t+1}_{s+s', \alpha+\alpha', \beta+\beta'}(x)$ contains the monomial $x^{\chi(Y_1 \cup \ldots \cup Y_{t+1})}$.
		\end{proof}

		Due to the above claim, for $\alpha=\mathtt{util}_{\X{V}}(S)$ and $\beta = \sum_{p\in S}\mathtt{c}(p)$, $P^{k}_{n, \alpha,\beta}(x)$ is non-zero. Since $S$ is a solution to ${\cal I}$, $\sum_{v\in \X{V}}\mathtt{sat}_v^1(S) \geq \mathtt{u_{tgt}}$ and  $\sum_{p\in S}\mathtt{c}(p) \leq \mathtt{b}$, Hence, the algorithm returns ``YES". \qed
	\end{proof}

	\begin{lemma}\label{lem:correctness-2}
		Suppose that there is a non-zero polynomial $P^{k}_{n, \alpha, \beta}$, for some $\alpha \ge \mathtt{u_{tgt}}, \beta \le \mathtt{b}$. Then, ${\cal I}$ is a yes-instance.  
	\end{lemma}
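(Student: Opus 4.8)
The plan is to reverse the argument of \Cref{lem:correctness-1}: from a witnessing non-zero polynomial I will extract a partition of the voter set together with a representative for each block, and show that the set of these representatives is a feasible bundle. The crux is the following structural invariant of the construction, which I would prove by induction on $j$: if $P^{j}_{s,\alpha,\beta}$ is non-zero, then it contains a monomial $x^{\chi(Y)}$ with $\mathcal{H}(\chi(Y))=s$, and $Y$ decomposes as a disjoint union $Y = Y_1 \uplus \cdots \uplus Y_j$ of voter subsets for which there exist items $p_1,\ldots,p_j \in \X{P}$ satisfying $\sum_{i\in[j]}|Y_i| = s$, $\sum_{i\in[j]}\mathtt{util}_{Y_i}(p_i) = \alpha$, and $\sum_{i\in[j]}\mathtt{c}(p_i) = \beta$. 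The base case $j=1$ is read off directly from the definition of the Type-$1$ polynomials. For the inductive step, a non-zero $P^{j}_{s,\alpha,\beta}$ forces some summand $\mathcal{R}(\mathcal{H}_{s}(P^{1}_{s_1,\alpha_1,\beta_1}\times P^{j-1}_{s_2,\alpha_2,\beta_2}))$ with $s_1+s_2 = s$, $\alpha_1+\alpha_2=\alpha$, $\beta_1+\beta_2=\beta$ to be non-zero (no cancellation occurs because all coefficients in play are non-negative). Since $\mathcal{R}$ and $\mathcal{H}_{s}$ only delete or re-scale monomials and never create new ones, a surviving monomial has the form $x^{\chi(Y')+\chi(Y'')}$ where $x^{\chi(Y')}$ occurs in $P^{1}_{s_1,\alpha_1,\beta_1}$ (so $|Y'|=s_1$ and some $p\in\X{P}$ has $\mathtt{util}_{Y'}(p)=\alpha_1$, $\mathtt{c}(p)=\beta_1$) and $x^{\chi(Y'')}$ occurs in $P^{j-1}_{s_2,\alpha_2,\beta_2}$ (to which the induction hypothesis applies); the fact that this monomial has Hamming weight exactly $s_1+s_2$ lets me invoke \Cref{lem:distinct-vectors} to conclude $Y'\cap Y'' = \emptyset$, hence $Y'$ is disjoint from each block of $Y''$, and $\chi(Y')+\chi(Y'')=\chi(Y'\cup Y'')$ is again a $0/1$ vector.

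Given the invariant, I would instantiate it with $j=k$ and $s=n$. The blocks $Y_1,\ldots,Y_k$ are pairwise disjoint with $\sum_{i\in[k]}|Y_i| = n = |\X{V}|$, so they partition $\X{V}$, and we have items $p_1,\ldots,p_k$ with $\sum_{i\in[k]}\mathtt{util}_{Y_i}(p_i) = \alpha \ge \mathtt{u_{tgt}}$ and $\sum_{i\in[k]}\mathtt{c}(p_i) = \beta \le \mathtt{b}$. Set $S = \{p_1,\ldots,p_k\}$ (the $p_i$ need not be distinct, which only decreases the cost). Then $\sum_{p\in S}\mathtt{c}(p) \le \sum_{i\in[k]}\mathtt{c}(p_i) = \beta \le \mathtt{b}$, and for every voter $v$, letting $Y_{i(v)}$ be the unique block containing $v$, we get $\mathtt{sat}_v^1(S) = \max_{p\in S}\mathtt{util}_v(p) \ge \mathtt{util}_v(p_{i(v)})$; summing over $\X{V}$ yields $\sum_{v\in\X{V}}\mathtt{sat}_v^1(S) \ge \sum_{i\in[k]}\sum_{v\in Y_i}\mathtt{util}_v(p_i) = \sum_{i\in[k]}\mathtt{util}_{Y_i}(p_i) = \alpha \ge \mathtt{u_{tgt}}$. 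Hence $S$ is a feasible bundle, so ${\cal I}$ is a yes-instance.

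The main obstacle is the careful bookkeeping in the structural invariant, and in particular justifying that Hamming-projecting the product $P^{1}_{s_1,\alpha_1,\beta_1}\times P^{j-1}_{s_2,\alpha_2,\beta_2}$ to level $s_1+s_2$ keeps exactly those monomials that come from a disjoint pair $(Y',Y'')$ -- this is precisely what \Cref{lem:distinct-vectors} supplies -- and that the subsequent application of $\mathcal{R}$, together with the summation over decompositions, does not destroy the existence of such a monomial. Once this invariant is established, extracting the bundle and checking the cost and satisfaction constraints is routine.
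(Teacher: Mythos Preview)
Your proposal is correct and mirrors the paper's proof: both establish by induction on $j$ (using \Cref{lem:distinct-vectors} for the disjointness step) that monomials of $P^{j}_{s,\alpha,\beta}$ encode $j$ pairwise disjoint voter blocks together with witnessing items of the stated total size, utility, and cost, and then instantiate at $j=k$, $s=n$ to extract a feasible bundle. Your write-up is in fact slightly more explicit than the paper's on two points it glosses over---that non-negative coefficients preclude cancellation in the sum defining $P^{j}$, and that the extracted items $p_1,\ldots,p_k$ need not be distinct (which only helps the cost bound); just make sure your inductive invariant is stated for \emph{every} monomial of $P^{j}_{s,\alpha,\beta}$ rather than merely asserting the existence of one, since in the inductive step you apply it to the particular factor $x^{\chi(Y'')}$.
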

	\begin{proof}		
		We show that for any type $j \in [k]$, if $P^{j}_{s, \alpha, \beta}(x)$ contains a monomial $x^S$, then there are $j$ pairwise disjoint subsets $Y_1, \ldots, Y_j$ of $\X{V}$ such that $|Y_1\cup \ldots \cup Y_j|=s$ and $\chi(Y_1\cup \ldots \cup Y_j)=S$. Additionally, there exists a $j$-sized subset $\{ p_1, \ldots, p_j \}$ with the following properties: \begin{enumerate}
			\item $\alpha = \sum_{l \in [j]} \mathtt{util}_{Y_l}(p_l)$
			\item $\beta =  \sum_{l \in [j]} \mathtt{c}(p_l)$
		\end{enumerate}
		This will prove that ${\cal I}$ is a yes-instance.

		We prove it by induction on $j$. For $j=1$, suppose that for some $s \in [n], \alpha\in [\bar{u}], \beta\in [\mathtt{b}]$, polynomial $P^1_{s, \alpha, \beta}(x)$ is non-zero. Then, due to the construction of Type-I polynomials, it has a monomial $x^{\chi(Y_1)}$, where $|Y_1|=s$. Furthermore, there exists an item $p_1 \in \X{P}$ such that $\mathtt{c}(p_1) = \beta$ and $\mathtt{util}_{Y_1}(p_1) = \alpha$. We will assume that the statement is true for some $t=t'$ and prove for $t=t'+1$.

		Suppose that $P^{t'+1}_{s, \alpha, \beta}(x)$ contains a monomial $y^S$, for some $s\in [n], \alpha \in [\bar{u}], \beta \in [\mathtt{b}]$. Due to the construction of Type-$(t'+1)$ polynomial, there exists $s_1,s_2 \in[n]$ such that $s_1+s_2=s$, $\alpha_1,\alpha_2\in [\bar{u}]$ such that $\alpha_1+\alpha_2=\alpha$, and $\beta_1,\beta_2 \in [\mathtt{b}]$ such that $\beta_1+\beta_2 = \beta$, such that the polynomial $P^{t'+1}_{s, \alpha, \beta}(x)$ is obtained by the multiplication of $P^{t'}_{s_1,\alpha_1,\beta_1}$ and $P^1_{s_2,\alpha_2,\beta_2}$. Let $y^S=y^{S'}\times y^{S''}$, where $y^{S'}$ is a monomial in $P^{t'}_{s_1,\alpha_1,\beta_1}$ and $y^{S''}$ is a monomial in $P^1_{s_2,\alpha_2,\beta_2}$. Due to  the induction hypothesis, there exists pairwise disjoint subsets $Y_1,\ldots,Y_{t'}$ of $\X{V}$ such that $\chi(Y_1\cup\ldots \cup Y_{t'})=S'$ and there exists a subset $Y \subseteq \X{V}$ such that $\chi(Y)=S''$. Since  ${\cal H}(S' + S'') = {\cal H}(S')+{\cal H}(S'')$, due to \Cref{lem:distinct-vectors}, we know that $S'$ and $S''$ are disjoint. Thus, $Y$ is disjoint from every set in  $Y_1,\ldots,Y_{t'}$. Furthermore, due to induction hypothesis, there are items $p_1,\ldots,p_{t'}$ such that $\sum_{\ell \in [t']}\mathtt{util}_{Y_\ell}(p_\ell) = \alpha_1$, $\sum_{\ell \in [t']}\mathtt{c}(p_\ell)= \beta_1$, and there exists an item $p$ such that $\mathtt{util}_{Y}(p)=\alpha_2$ and $\mathtt{c}(p)= \beta_2$. Since every two sets in $Y_1,\ldots,Y_{t'},Y$ are pairwise disjoint, we have that $\sum_{\ell \in [t']}\mathtt{util}_{Y_\ell}(p_\ell)+\mathtt{util}_{Y}(p) = \alpha$ and $\sum_{\ell \in [t']}\mathtt{c}(p_\ell)+\mathtt{c}(p)= \beta$.
	\end{proof}

	\subsection{Proof of \Cref{lem:fptas-rounding}}
	
	Since $S^\star$ is an optimal solution to $\widetilde{\cal I}$, we know that
	\begin{equation}\label{fptas:eq2}
	\widetilde{\mathtt{util}}(S) \leq \widetilde{\mathtt{util}}(S^\star)
	\end{equation}

	Due to \Cref{fptas:eq1,fptas:eq2}, we have the following:
	\begin{equation}\label{fptas:eq3}
	\begin{split}
	{\mathtt{util}}(S) & \leq \widetilde{\mathtt{util}}(S) \leq \widetilde{\mathtt{util}}(S^\star) \\
	& \leq \sum_{v\in \X{V}}\max_{p\in S^\star} \left({\mathtt{util}}_v(p) + s \right) \\
	& \leq ns + \mathtt{util}(S^\star)
	\end{split}
	\end{equation}

	Consider $v' \in \X{V}, p' \in \X{P}$ such that $\mathtt{util}_{v'}(p')=u_{max}$. By our choice of $s$, $\mathtt{util}_{v'}(p')=(\nicefrac{2ns}{\epsilon})$. Hence, $\widetilde{\mathtt{util}}_{v'}(p')=\mathtt{util}_{v'}(p')$. Since $S^\star$ is an optimal solution to $\widetilde{\cal I}$, we know that
	\begin{equation}\label{fptas:eq4}
	\widetilde{\mathtt{util}}(S^\star) \geq \mathtt{util}(S^\star) \geq \mathtt{util}_{v'}(p') = \frac{2ns}{\epsilon}
	\end{equation}

	Due to \Cref{fptas:eq3}, we have that
	\begin{equation}\label{fptas:eq5}
	{\mathtt{util}}(S^\star) \geq \widetilde{\mathtt{util}}(S^\star) -ns
	\end{equation}

	Due to \Cref{fptas:eq4,fptas:eq5}, we have that
	\begin{equation}
	{\mathtt{util}}(S^\star) \geq \Big(\frac2{\epsilon}-1\Big)ns
	\end{equation}

	Since $0 < \epsilon \leq 1$, we have that $\left( \nicefrac2\epsilon -1 \right)^{-1} \le \epsilon$. Thus, $ns \leq \epsilon ~\mathtt{util}(S^\star)$, and the claim follows due to \Cref{fptas:eq3}.
	\qed

\section{More Related Work On PB}
	In PB, the city residents are asked for their opinion on the projects to be funded for the city, and then the preferences of all the voters are aggregated using some voting rule, which is used to decide the projects for the city. Initiated in Brazil in 1989 in the municipality of Porto Alegre~\cite{brazil}, PB has become quite popular worldwide~\cite{pbglobe}, including in the United States~\cite{usapb} and Europe~\cite{pbeurope}. In the last few years, PB has gained considerable attention from computer scientists~\cite{DBLP:conf/atal/0001LT18,DBLP:conf/aaai/TalmonF19,fluschnik2019fair,DBLP:conf/aaai/0001L21,aziz2021participatory,DBLP:conf/atal/0001TB21,DBLP:conf/ijcai/0001STZ21,DBLP:conf/ijcai/0001ST20,DBLP:conf/aaai/HershkowitzKPP21,pierczynski2021proportional}.  We would like to note that Goel et. al~\cite{Goel19} introduced the topic of {\it knapsack voting} that captures the process of aggregating the preferences of voters in the context of PB. The authors state that their motivation was to incorporate the classical knapsack problem by making the voter choose projects under the budget constraint but in a manner that aligns the constraints on the voters’ decisions with those of the decision-makers. Their study is centered around strategic issues and extends knapsack voting further to more general settings with revenues, deficits, and surpluses.
	
\end{document}